\newcommand{\fullversion}[1]{#1}
\tikzstyle{fptbox}=[line width=0.5mm,rectangle, minimum height=.8cm,fill=white!70,rounded corners=1mm,draw]
\tikzstyle{fptedge}=[line width=0.5mm,->]
\newcommand{\tworows}[2]{\begin{tabular}{c}{#1}\\[-1mm]{#2}\end{tabular}}
\title{On the complexity of \BD{} and \MP{} in digraphs\footnote{A preliminary version of this paper was accepted for presentation at the conference IWOCA'20 and will appear as~\cite{confversion}.}}
\author[1,2]{Florent Foucaud\footnote{Partially funded by the IFCAM project ''Applications of graph homomorphisms'' (MA/IFCAM/18/39) and by the ANR project HOSIGRA (ANR-17-CE40-0022).}}
\author[2,3]{Benjamin Gras}
\author[2]{Anthony Perez}
\author[4]{Florian Sikora\footnote{Partially funded by the project ESIGMA (ANR-17-CE23-0010)}}
\affil[1]{Univ. Bordeaux, Bordeaux INP, CNRS, LaBRI, UMR5800, F-33400 Talence, France}
\affil[2]{Univ. Orl\'eans, INSA Centre Val de Loire, LIFO EA 4022, F-45067 Orl\'eans, France}
\affil[3]{Universität Trier, Fachbereich 4, Informatikwissenschaften, Germany}
\affil[4]{Univ. Paris-Dauphine, PSL University, CNRS, LAMSADE, 75016 Paris, France}
\newcommand{\size}[1]{\left|#1\right|}
\newcommand{\BD}{{\sc Broadcast Domination}}
\newcommand{\BDS}{\BD}
\newcommand{\MP}{{\sc Multipacking}}
\newcommand{\MDS}{{\sc Multicolored Dominating Set}}
\newcommand{\MIS}{{\sc Multicolored Independent Set}}
\newcommand{\FPTR}{FPT-reduction}
\newcommand{\PTPT}{polynomial time and parameter transformation}
\renewcommand{\geq}{\geqslant}
\renewcommand{\leq}{\leqslant}
\DeclareMathOperator{\mpn}{mp}
\DeclareMathOperator{\mfd}{mfd}
\DeclareMathOperator{\ecc}{ecc}
\newtheorem{theorem}{Theorem}
\newtheorem{lemma}[theorem]{Lemma}
\newtheorem{corollary}[theorem]{Corollary}
\newtheorem{claim}[theorem]{Claim}
\newtheorem{proposition}[theorem]{Proposition}
\newtheorem*{remark}{Remark}
\newenvironment{proofclaim}{
	\noindent \emph{Proof.}
}{%
	\hfill $\diamond$ \\
}
\newcommand{\Pb}[4]{%
\begin{center}
  \begin{tabular}{|l|}%
  \hline
    \begin{minipage}[c]{\textwidth}
      \smallskip%
      \par\noindent%
      #1%
      \par\noindent%
      $\bullet$
      \textbf{\textsf{Input}}: #2% 
      \par\noindent%
      $\bullet$
      \textbf{\textsf{#4}}: Does there exist %
      #3?% 
      \smallskip%
      \par\noindent%
    \end{minipage}
  \\\hline
  \end{tabular}%
\end{center}
}%
\begin{document}

\maketitle

\begin{abstract}
We study the complexity of the two dual covering and packing distance-based problems \BD{} and \MP{} in digraphs. A \emph{dominating broadcast} of a digraph $D$ is a function $f:V(D)\to\mathbb{N}$ such that for each vertex $v$ of $D$, there exists a vertex $t$ with $f(t)>0$ having a directed path to $v$ of length at most $f(t)$. The cost of $f$ is the sum of $f(v)$ over all vertices $v$. A \emph{multipacking} is a set $S$ of vertices of $D$ such that for each vertex $v$ of $D$ and for every integer $d$, there are at most $d$ vertices from $S$ within directed distance at most $d$ from $v$. The maximum size of a multipacking of $D$ is a lower bound to the minimum cost of a dominating broadcast of $D$. Let \BD{} denote the problem of deciding whether a given digraph $D$ has a dominating broadcast of cost at most $k$, and \MP{} the problem of deciding whether $D$ has a multipacking of size at least $k$. It is known that \BD{} is polynomial-time solvable for the class of all undirected graphs (that is, symmetric digraphs), while polynomial-time algorithms for \MP{} are known only for a few classes of undirected graphs. We prove that \BD{} and \MP{} are both NP-complete for digraphs, even for planar layered acyclic digraphs of small maximum degree. Moreover, when parameterized by the solution cost/solution size, we show that the problems are respectively W[2]-hard and W[1]-hard. We also show that \BD{} is FPT on acyclic digraphs, and that it does not admit a polynomial kernel for such inputs, unless the polynomial hierarchy collapses to its third level. In addition, we show that both problems are FPT when parameterized by the solution cost/solution size together with the maximum out-degree, and as well, by the vertex cover number.  %
Finally, we give for both problems polynomial-time algorithms for some subclasses of acyclic digraphs.
\end{abstract}

\section{Introduction}

We study the complexity of the two dual problems \BD{} and \MP{} in digraphs. These concepts were previously studied only for undirected graphs (which can be seen as \emph{symmetric} digraphs, where for each arc $(u,v)$, the symmetric arc $(v,u)$ exists). Unlike most standard packing and covering problems, which are of local nature, these two problems have more global features since the covering and packing properties are based on arbitrary distances. This difference makes them algorithmically very interesting. % 

\medskip\noindent\textbf{Broadcast domination.}
Broadcast domination is a concept modeling a natural covering problem in telecommunication networks: imagine we want to cover a network with transmitters placed on some nodes, so that each node can be reached by at least one transmitter. Already in his book in 1968~\cite{liu1968}, Liu presented this concept, where transmitters could broadcast messages but only to their neighboring nodes.
It is however natural that a transmitter could broadcast information at distance greater than one, at the price of some additional power (and cost).
In this setting, for a given non-zero integer cost $d$, a transmitter placed at node $v$ covers all nodes within radius $d$ from its location. If the network is directed, it covers all nodes with a directed path of length at most $d$ from $v$. 
For a feasible solution, the function $f:V(G)\to\mathbb{N}$ assigning its cost to each node of the graph $G$ (a cost of zero means the node has no transmitter placed on it) is called a \emph{dominating broadcast} of $G$, and the total cost $c_f$ of $f$ is the sum of the costs of all vertices of $G$. 
The \emph{broadcast domination number} $\gamma_b(G)$ of $G$ is the smallest cost of a dominating broadcast of $G$. 
When all costs are in $\{0,1\}$, this notion coincides with the well-studied \textsc{Dominating Set} problem. 
The concept of broadcast domination was introduced in 2001 (for undirected graphs) by Erwin in his doctoral dissertation~\cite{Erwin2001} (see also~\cite{DEH+06,Erwin04} for some early publications on the topic), in the context of advertisement of shopping malls -- which could nowadays be seen as targeted advertising via "influencers" in social networks.
Note that in these contexts, directed arcs make sense since the advertisement or the influence is directed towards someone.
The associated computational problem is as follows.

\Pb{\BDS{}}
{A digraph $D=(V,A)$, an integer $k \in \mathbb{N}$.}
{a dominating broadcast of $D$ of cost at most $k$}
{Question}

\medskip\noindent\textbf{Multipacking.}
The dual notion for \BD, studied from the linear programming viewpoint, was introduced in~\cite{BMT13,TeshimaMasterThesis} and called \emph{multipacking}. A set $S$ of vertices of a (di)graph $G$ is a \emph{multipacking} if for every vertex $v$ of $G$ and for every possible integer $d$, there are at most $d$ vertices from $S$ at (directed) distance at most $d$ from $v$. The \emph{multipacking number} $\mpn(G)$ of $G$ is the maximum size of a multipacking in $G$. Intuitively, if a graph $G$ has a multipacking $S$, any dominating broadcast of $G$ will require to have cost at least $|S|$ to cover the vertices of $S$. Hence the multipacking number of $G$ is a lower bound to its broadcast domination number~\cite{BMT13}. Equality holds for many graphs, such as strongly chordal graphs~\cite{BMY19} and $2$-dimensional square grids~\cite{grids}. For undirected graphs, it is also known that $\gamma_b(G) \leq 2\mpn(G)+3$~\cite{BBF19} and it is conjectured that the additive constant can be removed. %
Consider the following computational problem.

\Pb{\MP{}}
{A digraph $D=(V,A)$, an integer $k \in \mathbb{N}$.}
{a multipacking $S \subseteq V$ of $D$ of size at least $k$}
{Question}

\noindent\textbf{Known results.}
In contrast with most graph covering problems, which are usually NP-hard, Heggernes and Lokshtanov designed in~\cite{HL06} (see also~\cite{LMasterThesis}) a sextic-time algorithm for \BD{} in undirected graphs. This intriguing fact has motivated research on further algorithmic aspects of the problem. 
For general undirected graphs, no faster algorithm than the original one is known. A quintic-time algorithm exists for undirected series-parallel graphs~\cite{Blair}. 
An analysis of the algorithm for general undirected graphs gives quartic time when it is restricted to chordal graphs~\cite{HL06,HS12}, and a cubic-time algorithm exists for undirected strongly chordal graphs~\cite{BMY19}. 
The problem is solvable in linear time on undirected interval graphs~\cite{chang2010} and undirected trees~\cite{BMY19,DBLP:journals/networks/DabneyDH09} (the latter was extended to undirected block graphs~\cite{HS12}). Note that when the dominating broadcast is required to be upper-bounded by some fixed integer $p\geq 2$, then the problem becomes NP-Complete~\cite{CaceresHMPP18} (for $p=1$ this is \textsc{Dominating Set}).% 

Regarding \MP, to the best of our knowledge, its complexity is currently unknown, even for undirected graphs (an open question posed in~\cite{TeshimaMasterThesis,YangMasterThesis}). However, there exists a polynomial-time $(2+o(1))$-approximation algorithm for all undirected graphs~\cite{BBF19}. \MP{} can be solved with the same complexity as \BD{} for undirected strongly chordal graphs, see~\cite{BMY19}. Improving upon previous algorithms from~\cite{MT17,TeshimaMasterThesis}, the authors of~\cite{BMY19} give a simple linear-time algorithm for undirected trees.

\medskip\noindent\textbf{Our results.}
We study \BD{} and \MP{} for directed graphs (digraphs), which form a natural setting for not necessarily symmetric telecommunication networks. %
In contrast with undirected graphs, we show that \BD{} is NP-complete, even for planar layered acyclic digraphs (defined later) of maximum degree~$4$ and maximum finite distance~$2$. %
This holds for \MP, even for planar layered acyclic digraphs of maximum degree~$3$ and maximum finite distance~$2$, or acyclic digraphs with a single source and maximum degree~$5$. %
Moreover, when parameterized by the solution cost/solution size, we prove that \BD{} is W[2]-hard (even for digraphs of maximum finite distance~$2$ or bipartite digraphs of maximum finite distance~$6$ %
without directed $2$-cycles) and \MP{} is W[1]-hard (even for digraphs of maximum finite distance~$3$). %
On the positive side, we show that \BD{} is FPT on acyclic digraphs (DAGs for short) but does not admit a polynomial kernel for layered DAGs of maximum finite distance~$2$, %
unless the polynomial hierarchy collapses to its third level. %} 
Moreover, we show that both \BD{} and \MP{} are polynomial-time solvable for layered DAGs with a single source. %
We also show that both problems are FPT when parameterized by the solution cost/solution size together with the maximum out-degree, and as well, by the vertex cover number. Moreover it follows from a powerful meta-theorem of~\cite{FOnowheredense} that \BD{} is FPT when parameterized by solution cost, on inputs whose underlying graphs belong to a nowhere dense class.

The resulting complexity landscape is represented in Fig.~\ref{fig:summary2}. %
We start with some definitions in Section~\ref{sec:prelim}. We prove our results for \BD{} in Section~\ref{sec:BD}. The results for \MP{} are presented in Section~\ref{sec:MP}. We conclude in Section~\ref{sec:conclu}. 
\definecolor{darkred}{RGB}{170,10,50}
\definecolor{normalred}{RGB}{220,0,0}
\definecolor{grassgreen}{RGB}{100,160,20}
\definecolor{darkgreen}{RGB}{0,150,20}
\definecolor{pacificorange}{RGB}{220,100,0}

%NEW FIGURE WITH SAME CLASSES

\begin{figure}[!hbt]
\centering
\scalebox{0.65}{\begin{tikzpicture}[node distance=10mm]
    \begin{scope}
      \node[fptbox] (all) at (0,0) {all digraphs};
      \node[fptbox] (simple) [below = of all, xshift=12mm,yshift=5mm] {no $2$-cycles} edge[fptedge] (all);
      \node[fptbox] (bip) [left = of simple, xshift=6mm,yshift=0mm] {bipartite} edge[fptedge] (all);
      \node[fptbox] (simplebip) [below = of all, xshift=-5mm,yshift=-8mm] {bipartite, no $2$-cycles} edge[fptedge] (simple) edge[fptedge] (bip);      
      
      \node[fptbox] (DAG) [below = of simple, xshift=10mm,yshift=-11mm] {DAGs}; 
      \draw[fptedge] (DAG) .. controls +(0,1) and +(1,-1) .. (simple);
      
      \node[fptbox] (nd) [below left = of simplebip, xshift=15mm,yshift=-1mm] {\tworows{nowhere}{dense}};
      \draw[fptedge] (nd) .. controls +(-0.5,2.5) and +(-2.5,-0.5) .. (all);
      
      \node[fptbox] (maxoutdeg) [below left = of nd, xshift=15mm,yshift=-5mm] {\tworows{bounded max.}{out-deg.}};
      \draw[fptedge] (maxoutdeg) .. controls +(0,3) and +(-4,-0.5) .. (all);
      
      \node[fptbox] (lay) [right = of maxoutdeg, xshift=5mm,yshift=0mm] {layered DAGs} edge[fptedge,bend right=10] (DAG) edge[fptedge,bend left=10] (simplebip);

      \node[fptbox] (maxdeg) [below = of maxoutdeg, xshift=0mm,yshift=2mm] {\tworows{bounded}{max. deg.}} edge[fptedge] (maxoutdeg);
      \draw[fptedge] (maxdeg) .. controls +(1.75,1) and +(0,-3) .. (nd);
      
      \node[fptbox] (plan) [below = of maxdeg, xshift=5mm,yshift=-5mm] {\tworows{planar layered DAGs}{of small max. deg.}} edge[fptedge,bend left=10] (maxdeg);
      \draw[fptedge] (plan) .. controls +(0.5,1.5) and +(-2,-0.7) .. (lay);
      
      \node[fptbox] (ss) [right = of maxdeg, xshift=21mm,yshift=0mm] {\tworows{single-sourced}{DAGs}};
      \draw[fptedge] (ss) .. controls +(1,2) and +(0.5,-1) .. (DAG);   
      
      \node[fptbox] (ssmd) [right = of plan, xshift=2mm,yshift=0mm] {\tworows{single-sourced DAGs}{of small max. deg.}} edge[fptedge] (ss)  edge[fptedge,bend left=15] (maxdeg);

      \node[fptbox] (ssl) [below = of plan, xshift=6mm,yshift=3mm] {\tworows{single-sourced}{layered DAGs}};
      \draw[fptedge] (ssl) .. controls +(2,1.2) and +(-1.5,-2) .. (lay); 
      \draw[fptedge] (ssl) .. controls +(2.5,1) and +(-2.5,-1.33) .. (ss); 
      
      \node[fptbox] (vc) [right = of ssl, xshift=0mm,yshift=0mm] {\tworows{bounded}{vertex cover}};
      \draw[fptedge] (vc) .. controls +(-2.75,2) and +(0.5,-3) .. (nd); 

      \node[fptbox] (strong) [right = of ss, xshift=3mm,yshift=0mm] {\tworows{strongly}{connected}};
      \draw[fptedge] (strong) .. controls +(-2,3) and +(4,-0.5) .. (all);

      \node[fptbox] (diam) [right = of vc, xshift=-3mm,yshift=0mm] {\tworows{bounded}{diameter}};
      \draw[fptedge] (diam) .. controls +(0.25,3) and +(-0.75,-1) .. (strong);
      
      \node[fptbox] (sym) [right = of ssmd, xshift=0mm,yshift=0mm] {symmetric};
      \draw[fptedge] (sym) -- (strong);

      \node (caption) at ($(ssl.south)+(3,-1)$) {\large{(a) Complexity landscape of \BD}};

      \node (wh) at ($(sym.east |- simplebip.south)+(-0.8,0)$) {W[2]-hard};
      \node[above = of wh, yshift=-12mm] {$\uparrow$};
      \draw[color=darkred,line width=2pt,-,rounded corners] ($(maxoutdeg.west |- simplebip.south)+(0,-0.3)$) -- ($(sym.east |- simplebip.south)+(0,-0.3)$);

      \draw[color=darkgreen,line width=2pt,-,rounded corners] ($(maxoutdeg.west |- simplebip.south)+(0,-0.5)$) -- ($(sym.east |- simplebip.south)+(0,-0.5)$);
      \node (fpt) at ($(maxoutdeg.west |- simplebip.south)+(0.4,-0.8)$) {FPT};
      \node[below = of fpt, yshift=11.5mm] {$\downarrow$};  

      \draw[color=normalred,line width=2pt,-,rounded corners] ($(maxoutdeg.west |- plan.south)+(0,-0.2)$) -- ($(plan.south east)+(0,-0.2)$) .. controls +(0.3,0.2) and +(-0.3,-0.2) .. ($(ss.north west)+(0,0.2)$) -- ($(strong.north east)+(0,0.2)$);
      \node (nph) at ($(strong.north east)+(-0.7,0.5)$) {NP-hard};
      \node[above = of nph, yshift=-11.5mm] {$\uparrow$};  

      \draw[color=grassgreen,line width=2pt,-,rounded corners] ($(maxoutdeg.west |- plan.south)+(0,-0.4)$) -- ($(ssmd.east |- plan.south)+(0,-0.4)$) .. controls +(0.2,0.2) and +(-0.2,-0.2) .. ($(sym.north west)+(0,0.2)$) -- ($(sym.north east)+(0,0.2)$);
      \node (poly) at ($(plan.south west)+(0.4,-0.7)$) {poly};
      \node[below = of poly, yshift=12mm] {$\downarrow$};  
    \end{scope}
   
    \begin{scope}[xshift=12.5cm,yshift=0mm]
      \node[fptbox] (all) at (0,0) {all digraphs};
      \node[fptbox] (simple) [below = of all, xshift=12mm,yshift=5mm] {no $2$-cycles} edge[fptedge] (all);
      \node[fptbox] (bip) [left = of simple, xshift=6mm,yshift=0mm] {bipartite} edge[fptedge] (all);
      \node[fptbox] (simplebip) [below = of all, xshift=-5mm,yshift=-8mm] {bipartite, no $2$-cycles} edge[fptedge] (simple) edge[fptedge] (bip);      
      
      \node[fptbox] (DAG) [below = of simple, xshift=10mm,yshift=-11mm] {DAGs}; 
      \draw[fptedge] (DAG) .. controls +(0,1) and +(1,-1) .. (simple);
      
      \node[fptbox] (nd) [below left = of simplebip, xshift=15mm,yshift=-1mm] {\tworows{nowhere}{dense}};
      \draw[fptedge] (nd) .. controls +(-0.5,2.5) and +(-2.5,-0.5) .. (all);
      
      \node[fptbox] (maxoutdeg) [below left = of nd, xshift=15mm,yshift=-5mm] {\tworows{bounded max.}{out-deg.}};
      \draw[fptedge] (maxoutdeg) .. controls +(0,3) and +(-4,-0.5) .. (all);
      
      \node[fptbox] (lay) [right = of maxoutdeg, xshift=5mm,yshift=0mm] {layered DAGs} edge[fptedge,bend right=10] (DAG) edge[fptedge,bend left=10] (simplebip);

      \node[fptbox] (maxdeg) [below = of maxoutdeg, xshift=0mm,yshift=2mm] {\tworows{bounded}{max. deg.}} edge[fptedge] (maxoutdeg);
      \draw[fptedge] (maxdeg) .. controls +(1.75,1) and +(0,-3) .. (nd);
      
      \node[fptbox] (plan) [below = of maxdeg, xshift=5mm,yshift=-5mm] {\tworows{planar layered DAGs}{of small max. deg.}} edge[fptedge,bend left=10] (maxdeg);
      \draw[fptedge] (plan) .. controls +(0.5,1.5) and +(-2,-0.7) .. (lay);
      
      \node[fptbox] (ss) [right = of maxdeg, xshift=21mm,yshift=0mm] {\tworows{single-sourced}{DAGs}};
      \draw[fptedge] (ss) .. controls +(1,2) and +(0.5,-1) .. (DAG);   
      
      \node[fptbox] (ssmd) [right = of plan, xshift=2mm,yshift=0mm] {\tworows{single-sourced DAGs}{of small max. deg.}} edge[fptedge] (ss)  edge[fptedge,bend left=15] (maxdeg);

      \node[fptbox] (ssl) [below = of plan, xshift=6mm,yshift=3mm] {\tworows{single-sourced}{layered DAGs}};
      \draw[fptedge] (ssl) .. controls +(2,1.2) and +(-1.5,-2) .. (lay); 
      \draw[fptedge] (ssl) .. controls +(2.5,1) and +(-2.5,-1.33) .. (ss); 
      
      \node[fptbox] (vc) [right = of ssl, xshift=0mm,yshift=0mm] {\tworows{bounded}{vertex cover}};
      \draw[fptedge] (vc) .. controls +(-2.75,2) and +(0.5,-3) .. (nd); 

      \node[fptbox] (strong) [right = of ss, xshift=3mm,yshift=0mm] {\tworows{strongly}{connected}};
      \draw[fptedge] (strong) .. controls +(-2,3) and +(4,-0.5) .. (all);

      \node[fptbox] (diam) [right = of vc, xshift=-3mm,yshift=0mm] {\tworows{bounded}{diameter}};
      \draw[fptedge] (diam) .. controls +(0.25,3) and +(-0.75,-1) .. (strong);
      
      \node[fptbox] (sym) [right = of ssmd, xshift=0mm,yshift=0mm] {symmetric};
      \draw[fptedge] (sym) -- (strong);

      \node (caption) at ($(ssl.south)+(3,-1)$) {\large{(b) Complexity landscape of \MP}};

      \node (wh) at ($(sym.east |- all.south)+(-0.8,0)$) {W[1]-hard};
      \node[above = of wh, yshift=-12mm] {$\uparrow$};
      \draw[color=darkred,line width=2pt,-,rounded corners] ($(maxoutdeg.west |- all.south)+(0,-0.3)$) -- ($(sym.east |- all.south)+(0,-0.3)$);

      \draw[color=darkgreen,line width=2pt,-,rounded corners] ($(maxoutdeg.north west)+(0,1)$) -- ++(2.5,0) .. controls +(0.3,-0.2) and +(-0.3,0.2) .. ($(ss.west |- ssmd.north)+(0.1,0.2)$) -- ($(ssmd.north east)+(0,0.2)$) .. controls +(0.1,-0.2) and +(-0.2,0.2) .. ($(sym.south west)+(0,-0.2)$) -- ($(sym.south east)+(0,-0.2)$);
      \node (fpt) at ($(maxoutdeg.north west)+(0.4,0.7)$) {FPT};
      \node[below = of fpt, yshift=11.5mm] {$\downarrow$};

      \draw[color=normalred,line width=2pt,-,rounded corners] ($(maxoutdeg.west |- plan.south)+(0,-0.2)$) -- ($(ssmd.east |- plan.south)+(0,-0.2)$) .. controls +(0.2,0.2) and +(-0.2,-0.2) .. ($(strong.north west)+(0,0.2)$) -- ($(strong.north east)+(0,0.2)$);
      \node (nph) at ($(strong.north east)+(-0.7,0.5)$) {NP-hard};
      \node[above = of nph, yshift=-11.5mm] {$\uparrow$};  

      \draw[color=grassgreen,line width=2pt,-,rounded corners] ($(maxoutdeg.west |- plan.south)+(0,-0.4)$) -- ($(sym.east |- plan.south)+(0,-0.4)$);
      \node (poly) at ($(plan.south west)+(0.4,-0.7)$) {poly};
      \node[below = of poly, yshift=12mm] {$\downarrow$};  
    \end{scope}
\end{tikzpicture}}
\caption{Complexity landscape of \BD{} and \MP{} for some classes of digraphs (all considered digraphs are assumed to be weakly connected). An arc from class $A$ to class $B$ indicates that $A$ is a subset of $B$. Parameterized complexity results are for parameter solution cost/solution size.}\label{fig:summary2}
\end{figure}
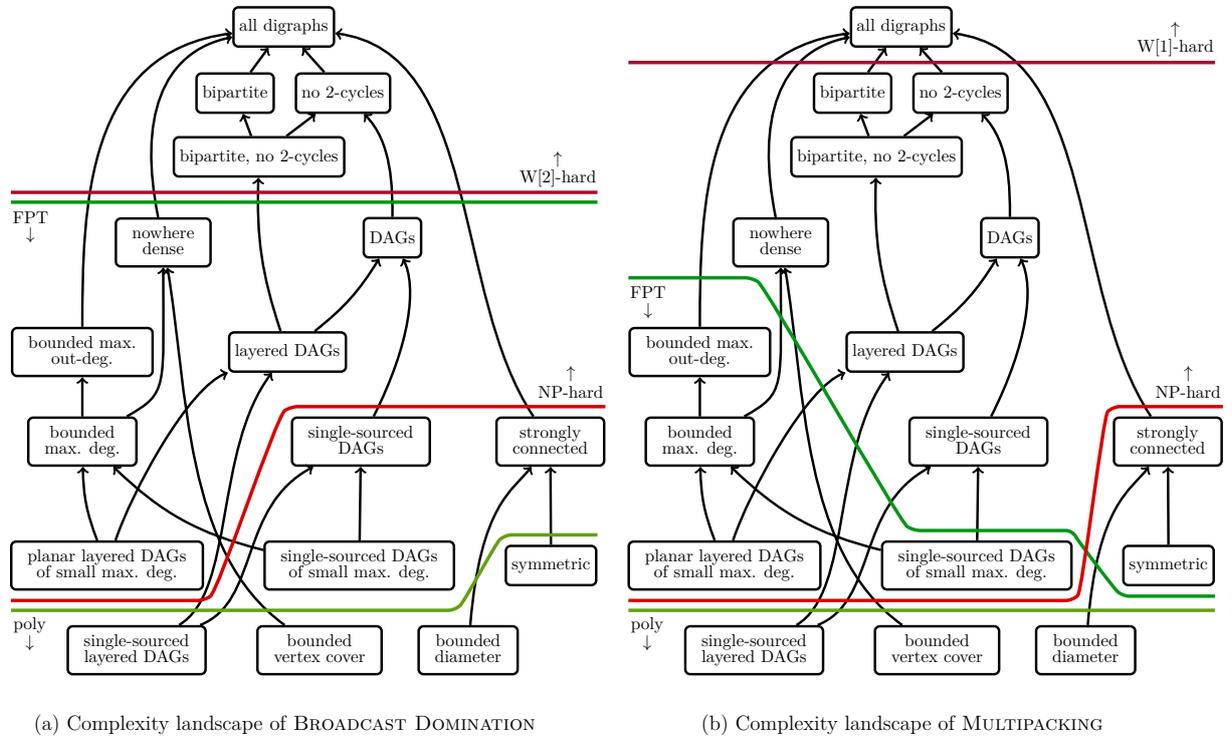

\section{Preliminaries}\label{sec:prelim}
\setcounter{footnote}{0}

\textbf{Directed graphs.} We mainly consider digraphs, usually denoted $D = (V,A)$\footnote{Our reductions will also use undirected graphs, denoted $G = (V,E)$ with $V = \{v_1, \ldots, v_n\}$ and $E = \{e_1, \ldots, e_m\}$.}, where $V$ is the set of vertices and $A$ the set of arcs. 
For an arc $uv \in A$, we say that $v$ is an \emph{out-neighbor} of $u$, and $u$ an \emph{in-neighbor} of $v$.
Given a subset of vertices $V' \subseteq V$, we define the digraph \emph{induced} by $V'$ as 
$D' = (V', A')$ where $A' = \{uv \in A:\ u \in V'\ \textstyle{and}\ v \in V'\}$. We denote such an induced 
subdigraph by $D[V']$. 
A directed path from a vertex $p_1$ to $p_l$ is a sequence $\{p_1, \ldots, p_l\}$ such that $p_i\in V$ and 
$p_ip_{i+1} \in A$ for every $1 \leqslant i < l$. When $p_1 = p_l$, it is a directed cycle. 
A digraph is \emph{acyclic} whenever it does not contain any directed cycle as an induced subgraph. An acyclic digraph is called a \emph{DAG} for short.
The \emph{(open) out-neighborhood} of a vertex $v \in V$ is the set $N^+(v) = \{u \in V:\ vu \in A\}$, % 
and its \emph{closed out-neighborhood} is $N^+[v] = N^+(v) \cup \{v\}$. We define similarly the open 
and closed in-neighborhoods of $v$ and denote them by $N^-(v)$ and $N^-[v]$, respectively. %
A \emph{source} is a vertex $v$ such that $N^-(v) = \emptyset$. 
For the sake of readability, we always mean out-neighborhood when speaking of the \emph{neighborhood} 
of a vertex. 
A DAG $D = (V,A)$ is \emph{layered} when its vertex set can be partitioned into $\{V_0, \ldots, V_t\}$ such that $N^-(V_0) = \emptyset$ and $N^+(V_t) = \emptyset$ (vertices of $V_0$ and 
$V_t$ are respectively called \emph{sources} and \emph{sinks}), and $uv \in A$ 
implies that $u \in V_i$ and $v \in V_{i+1}$, $0 \leqslant i < t$. 
A \emph{single-sourced layered DAG} is a layered DAG with only one source, that is, satisfying $\size{V_0} = 1$. %
A digraph is \emph{bipartite} or \emph{planar} if its underlying undirected graph has the corresponding property. Every layered digraph is bipartite. %
Given two vertices $u$ and $v$, we denote by $d(u,v)$ the length of a shortest directed path from $u$ to $v$. For a vertex $v \in V$ and an integer $d$, we define the \emph{ball of radius $d$ centered at $v$} by $B^+_d(v) = \{u \in V:\ d(v,u) \leqslant d\} \cup \{v\}$. %
The \emph{eccentricity} of a vertex $v$ in a digraph $D$ is the largest (finite) distance between $v$ and any vertex of $D$, denoted $\ecc(v) := \max_{u \in V} d(v,u)$. %
A digraph is \emph{strongly connected} if for any two vertices $u,v$, there is a directed path from $u$ to $v$, and \emph{weakly connected} if its underlying undirected graph is connected. We will assume that all digraphs considered here are weakly connected (if not, each component can be considered independently). The \emph{diameter} is the maximum directed distance $\max_{u,v \in V} d(u,v)$ between any two vertices $u$ and $v$ of $G$. If the digraph is  not strongly connected, then the diameter is infinite. The \emph{maximum finite distance} of a digraph $D$ is the largest finite directed distance between any two vertices of $G$, denoted $\mfd(D) := \max_{u,v \in V, d(u,v)<\infty} d(u,v)$. % Note that when $\mfd(D)=1$, $D$ has only sources and sinks, and \BD{} and \MP{} are easy to solve for $D$.
Consider a dominating broadcast $f : V(D) \to \mathbb{N}$ on $D$. % 
The set of \emph{broadcast dominators} is defined as $V_f = \{v \in V:\ f(v) > 0\}$. 
For any set $S \subseteq V$ of vertices of $D$, we define $f(S)$ as the value $f(S) = \sum_{u\in S} f(u)$. 

\medskip

\noindent \textbf{Parameterized complexity.} A \emph{parameterized problem} is a decision problem together with a \emph{parameter}, that is, an integer $k$ depending on the instance.
A problem is \emph{fixed-parameter tractable} (FPT for short) if it can be solved in time $f(k)\cdot|I|^c$ for an instance $I$ of size $|I|$ with parameter $k$, where $f$ is a computable function and $c$ is a constant. %
Given a parameterized problem $P$, a \emph{kernel} is a function which associates to each instance of $P$ an equivalent instance of $P$ whose size is bounded by a function $h$ of the parameter. When $h$ is a polynomial, the kernel is said to be \emph{polynomial}.
An \emph{\FPTR{}} between two parameterized problems $P$ and $Q$ is a function mapping an instance $(I,k)$ of $P$ to an instance $(f(I), g(k))$ of $Q$, where $f$ and $g$ are computable in FPT time with respect to parameter $k$, and where $I$ is a YES-instance of $P$ if and only if $f(I)$ is a YES-instance of $Q$.  %
When moreover $f$ can be computed in polynomial time and $g$ is polynomial in $k$, we say that the reduction is a \emph{polynomial time and parameter transformation{}}~\cite{BTY11}. %
Both reductions can be used to derive conditional lower 
bounds: if a parameterized problem $P$ does not admit an FPT algorithm (resp. a polynomial kernel) and there exists 
an \FPTR{} (resp. a \PTPT{}) from $P$ to a parameterized problem $Q$, 
then $Q$ is unlikely to admit an FPT algorithm (resp. a polynomial kernel). Both implications rely 
on certain standard complexity hypotheses; we refer the reader to the book~\cite{CFK+15} for details. 

\section{Complexity of \BD{}}\label{sec:BD}

\subsection{Hardness results}

\begin{theorem}
\label{thm:dbds:npc}
\BDS{} is NP-complete, even for planar layered DAGs of maximum degree~$4$ and maximum finite distance~$2$.
\end{theorem}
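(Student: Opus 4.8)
The plan is to prove membership in NP and then give a polynomial reduction from a bounded-degree planar variant of \textsc{Vertex Cover}, which is classically NP-complete. Membership is easy: since any finite distance is bounded by the number of vertices (and equals $2$ on the instances we construct), an optimal broadcast never needs a value exceeding $\mfd(D)$, because a value $d>\mfd(D)$ placed at $t$ covers exactly the same vertices as value $\mfd(D)$. Hence a dominating broadcast is a polynomial-size certificate that can be verified in polynomial time, so $\BDS{}\in\mathrm{NP}$.

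For the hardness, starting from a planar graph $G=(V,E)$ of maximum degree at most $4$ with $V=\{v_1,\dots,v_n\}$ and $E=\{e_1,\dots,e_m\}$, I would build a three-layer DAG $D$ as follows. Layer $V_0$ contains one \emph{source} $a_i$ per vertex $v_i$; layer $V_2$ contains one \emph{sink} $c_j$ per edge $e_j$; and layer $V_1$ contains, for each incidence of a vertex $v_i$ with an edge $e_j$, a middle vertex $m_{ij}$, together with the arcs $a_i m_{ij}$ and $m_{ij} c_j$. Thus each edge $e_j=v_iv_{i'}$ is replaced by the structure $a_i \to m_{ij} \to c_j \leftarrow m_{i'j} \leftarrow a_{i'}$. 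The underlying undirected graph of $D$ is exactly $G$ with every edge subdivided three times, so $D$ is planar, its maximum degree equals that of $G$ (middle and sink vertices have degree $2$), it is layered with three layers, and $\mfd(D)=2$ since the only maximal directed paths have the form $a_i \to m_{ij} \to c_j$. I would set the target cost to $n+k$, where $k$ is the vertex-cover budget.

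The correctness rests on two observations. First, every source $a_i$ has no in-neighbor, so any dominating broadcast must satisfy $f(a_i)\geq 1$; this forces cost at least $n$ and, as a by-product, covers every middle vertex $m_{ij}$ "for free" from $a_i$ at distance $1$. Second, since $\mfd(D)=2$ we may assume all values lie in $\{0,1,2\}$, and a sink $c_j$ can only be covered by (i) an upgraded source $a_i$ with $f(a_i)=2$ and $v_i\in e_j$, reaching $c_j$ at distance $2$, (ii) a middle vertex $m_{ij}$ of positive value, or (iii) a positive value on $c_j$ itself. If $G$ has a vertex cover of size $k$, upgrading the corresponding sources from $1$ to $2$ covers all sinks at extra cost $k$, giving a broadcast of cost $n+k$.

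Conversely, the step I expect to require the most care is showing that no cheaper broadcast can cover the sinks. Given a broadcast of cost at most $n+k$, I would set $S_2=\{i:f(a_i)=2\}$ and let $P$ denote the total value placed on the $m_{ij}$ and $c_j$ vertices, so that $|S_2|+P\leq k$. The crucial structural feature is that each $m_{ij}$ and each $c_j$ is \emph{private} to a single edge $e_j$, so covering distinct edges through options (ii)/(iii) consumes distinct units of $P$. Taking the vertices $\{v_i : i\in S_2\}$ together with one endpoint of each edge not already covered by them then yields a vertex cover of $G$ of size at most $|S_2|+P\leq k$. This proves that $G$ has a vertex cover of size $k$ if and only if $D$ admits a dominating broadcast of cost $n+k$, completing the reduction.
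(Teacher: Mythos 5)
Your proof is correct, and while it shares the paper's architectural trick---a three-layer planar layered DAG in which every source is forced to receive broadcast value at least $1$ (a baseline cost of $n$), with a residual budget of $k$ spent upgrading sources from $1$ to $2$ so they reach the third layer at distance~$2$---the reduction itself is genuinely different. The paper reduces from planar \textsc{Exact Cover by $3$-Sets} with each element in at most three triples, using layers $T^1 \to T^2 \to X$; there the converse direction is nearly automatic, since any cover of $3k$ elements by $k$ triples is exact, and cost placed on a middle vertex $t_j^2$ is interchangeable in power with upgrading its source (it covers the same three elements). You reduce instead from planar bounded-degree \textsc{Vertex Cover}, inserting one middle vertex $m_{ij}$ per vertex--edge incidence, which makes each middle and each sink private to a single edge; a unit of budget spent off the sources then buys coverage of at most one edge, and this forces the explicit charging argument $|S_2|+P\leq k$ that you correctly supply. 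Each route buys something: the paper inherits planarity and the degree bound directly from the restricted NP-hardness of X3C and gets exactness by counting, whereas your version rests on the more classical planar \textsc{Vertex Cover}, has a fully explicit converse (where the paper's ``it is never useful to set $f(x)=1$'' is argued more loosely, your private-vertex accounting is airtight), and, had you started from planar \emph{cubic} \textsc{Vertex Cover}, would even yield maximum degree~$3$, slightly strengthening the stated bound of~$4$. All the claimed structural properties of your digraph (layered with three layers, planar via triple edge subdivision, $\mfd(D)=2$, degree at most $4$) check out, as does the normalization of broadcast values to $\{0,1,2\}$.
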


\begin{proof}
We will reduce from \textsc{Exact Cover by $3$-Sets}, defined as follows.

\Pb{\textsc{Exact Cover by $3$-Sets}} 
{A set $X$ of $3k$ elements (for some $k \in \mathbb{N}$), and a set $\mathcal{T} = \{t_1, \ldots, t_n\}$ of triples from $X$.}
{a subset $\mathcal{S}$ of $k$ triples from $\mathcal{T}$ such that each element of $X$ appears in (exactly) one triple in $\mathcal{S}$}
{Question}

\textsc{Exact Cover by $3$-Sets} is NP-hard even when the incidence bipartite graph of the input is planar and each element appears in at most three triples~\cite{DF86}. We will reduce any such instance $(X,\mathcal{T})$ of \textsc{Exact Cover by $3$-Sets} to an instance $(D=(V',A'),k')$ of \BDS{}. 

\medskip

 We create $V'$ by taking two copies $T^1$, $T^2$ of $\mathcal{T}$ and one copy of $X$.
More precisely, we let $T^i = \{t_j^i:\ 1 \leqslant i \leqslant n$ for $i \in \{1,2\}$.
We now add an arc from a vertex $t^1_i \in T^1$ to its corresponding vertex $t^2_i$ in $T^2$, and from a vertex $t_i^2 \in T^2$ to all elements of $X$ that are contained in $t_i$ in $(X,\mathcal{T})$.
See also Figure~\ref{fig:dbds:npc}. Formally:
    $$A' = \{t_i^1t_i^2:\ 1 \leqslant i \leqslant n \} \bigcup\ \{t_i^2x:\ x \in T_i,\ 1 \leqslant i \leqslant n\}$$
The construction can be done in polynomial time, and there is no cycle in $D$: arcs go either from $T^1$ to $T^2$ or from $T^2$ to $X$. Hence $D$ is a layered DAG with three layers and thus, maximum finite distance~$2$.
In fact $D$ is obtained from the bipartite incidence graph of $(X,\mathcal{T})$ (which is planar and of maximum degree~$3$) reproduced on the vertices of $T^2\cup X$, by adding pendant vertices (those from $T^1$) to those of $T^2$, orienting the arcs as required.
Thus, the maximum degree of $D$ is~$4$ and $D$ is planar.

\begin{figure}[ht!]
    \centerline{\includegraphics[scale=1.5]{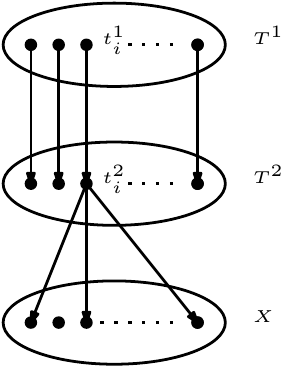}}
    \caption{Sketch of the DAG built in the construction of the proof of Theorem~\ref{thm:dbds:npc}.}\label{fig:dbds:npc}
\end{figure}

\begin{claim}
The instance $(X,\mathcal{T})$ is a YES-instance if and only if the digraph $D$ has a dominating broadcast of size $k'=n+k$. 
\end{claim}
%\begin{proofclaim} 

$\Rightarrow$ Given a solution $\mathcal{S}$ of $(X,\mathcal{T})$, set $f(t_i^1) = 2$ for all $t_i \in \mathcal{S}$, $f(t_i^1) = 1$ for each of the $n-k$ remaining vertices of $T^1$ and $f(v)=0$ for all vertices
of $T^2$ and $X$. %Every vertex of $V_1$ are trivially dominated by $f$. 
For every vertex $t^2_i \in T^2,$ we have $d(t^1_i,t^2_i) = 1$.
Similarly, for every vertex $x \in X,\ d(t^1_i,x) \leq 2$ holds for the vertex $t^1_i$
such that $t_i$ is in $\mathcal{S}$ and contains $x$ in $(X,\mathcal{T})$. Since every vertex $t_i^1$ of $T^1$ satisfies $f(t_i^1) \geqslant 1$, it is covered by itself, and
it follows that $f$ is a dominating broadcast of size $n+k$. \\
$\Leftarrow$ Let us now consider the case where we are given a dominating broadcast for $D$ of cost $n+k$.
Note that since the maximum finite distance is~$2$, we can assume $f : V' \rightarrow \{0,1,2\}$.
Remark that the vertices of $T^1$ are $n$ sources. Therefore, any broadcast needs
to set $f(t_i^1) \geq 1$ for each $t_i^1 \in T^1$, and this covers all vertices of $T^1$ and $T^2$.
It remains to cover vertices of $X$ with a cost of $k$,
which can be done by setting $f(t_i^1) = 2$ for some vertices of $T^1$ and $f(t_j^2) = 1$ for some vertices
of $T^2$. Notice that it is never useful to set $f(x)=1$ for some vertex $x \in X$: setting
an additional cost of $1$ to any $f(t_i^2)$ such that $t_i^2 \in A'$ is always better.
Hence, the corresponding set of triples is a valid cover of $(X, \mathcal{T})$. (And it is an exact cover because there are $3k$ elements covered by $k$ triples.)
%which can be done only if there is a set of $k$ triples $T_i$ of $\mathcal{T}$ with $f(t_i^1)=2$ or $f(t_i^2)=1$. This set is a valid solution of $(X,\mathcal{T})$.
%\end{proofclaim}
\end{proof}

We next give two parameterized reductions for \BDS{}.

\begin{theorem}
\label{thm:dbds:w}
\BDS{} parameterized by solution cost $k$ is W[2]-hard, even on digraphs of maximum finite distance~$2$, and on bipartite digraphs without directed~$2$-cycles of maximum finite distance~$6$.
\end{theorem}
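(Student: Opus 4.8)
The plan is to give an \FPTR{} from the W[2]-complete problem \textsc{Set Cover} (equivalently \textsc{Dominating Set}), parameterized by the solution size $k$, producing an instance of \BDS{} whose target cost $k'$ is bounded by a function $g(k)$. The guiding principle is to build a digraph in which the only cost-efficient dominating broadcasts are exactly those that ``select'' a set cover, so that the minimum broadcast cost equals, up to a factor/constant depending only on $k$, the size of an optimal set cover. As the parameter must stay bounded by $g(k)$, the whole digraph — including all auxiliary vertices — has to be dominated within cost $O(k)$, which is what makes the construction subtle.

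For the maximum-finite-distance~$2$ case I would start from a \textsc{Set Cover} instance $(U,\mathcal{F},k)$, introducing for each element a sink \emph{demand} vertex and for each set a \emph{selector} vertex with arcs to the demand vertices of its elements, so that a broadcast of value~$1$ placed on a selector dominates precisely the demands of that set (as in the forward direction of Theorem~\ref{thm:dbds:npc}). As in that proof, since $\mfd(D)=2$ we may assume $f:V'\to\{0,1,2\}$, and since selectors point only into sinks, radii larger than needed give nothing at selectors. The delicate part is to force every selector to be dominated while (i) keeping $\mfd(D)=2$ and (ii) never allowing a broadcast of bounded cost to dominate all demands ``for free''. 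I would achieve this with small gadgets built from directed $2$-cycles that make each selector dominatable only from its immediate vicinity, carefully avoiding any single vertex whose reachable set contains all demands. One then argues the equivalence: a set cover of size $k$ yields a broadcast of cost $k'=g(k)$ by the placement above, and conversely any broadcast of cost at most $g(k)$, normalized to values in $\{0,1,2\}$, must spend its budget so as to dominate every demand vertex, from which a set cover of size $k$ can be read off — exactness being forced by a counting argument on the number of demands, as in Theorem~\ref{thm:dbds:npc}.

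The bipartite, $2$-cycle-free, maximum-finite-distance~$6$ variant follows the same template, but replaces the $2$-cycle gadgets — which are the only source of short cycles and of the distance compression — by directed paths (subdivisions) routing the selector-domination through a few extra layers. This keeps the digraph bipartite and free of directed $2$-cycles, at the cost of raising the longest finite directed distance to~$6$; the argument that bounded-cost broadcasts cannot shortcut the intended selection is repeated, now using that no vertex reaches all demands within distance~$6$.

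The main obstacle I expect is exactly the second step: reconciling the need to dominate all auxiliary (selector) vertices within a budget bounded by $g(k)$ with the need to forbid any cheap, long-range broadcast from covering the demand vertices. A naive ``hub'' dominating all selectors at distance~$1$ would, through the selector-to-demand arcs, reach every demand at distance~$2$ and hence dominate the whole digraph at cost~$2$, collapsing the reduction. The gadgets must therefore distribute the domination of the selectors without ever creating such a short-reach universal vertex, and verifying that no such shortcut survives is the crux of both constructions.
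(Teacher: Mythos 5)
There is a genuine gap, and it sits exactly where you flagged ``the crux'': dominating all selector vertices within cost $g(k)$. Your plan attaches to each selector a small local gadget (a directed $2$-cycle) so that the selector ``is dominatable only from its immediate vicinity''. But the number of selectors equals $|\mathcal{F}|$, which is unbounded in $k$; if each selector can only be dominated from its own private gadget, then any dominating broadcast must spend at least cost~$1$ per gadget, so the minimum cost is $\Omega(|\mathcal{F}|)$ and the target $k'$ cannot be bounded by any $g(k)$ --- the reduction is no longer an \FPTR{}. The only way out is to let a \emph{single} cost-$1$ broadcast dominate a whole group of selectors at once, i.e.\ to organize the selectors into at most $g(k)$ groups each forming a bi-directed clique. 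But then soundness requires that an optimal solution picks exactly one selector per group, which is \emph{not} a property of plain \textsc{Set Cover}/\textsc{Dominating Set}: a size-$k$ cover may use two sets from one group and none from another, and then the untouched group's clique is undominated. This is precisely why the paper reduces from \MDS{} (W[2]-hard by~\cite{Casel18}): the color classes $V_i$ give the canonical partition, each class becomes a bi-directed clique in the copy $V^1$, and the exact budget $k'=k$ forces one cost-$1$ broadcaster per class, which simultaneously dominates the class (so \emph{all} auxiliary vertices are covered for free) and encodes the multicolored choice. Your construction, as described, has no mechanism to certify this one-per-group behavior, so the backward direction of the equivalence fails.

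The same missing idea undermines your sketch for the bipartite, $2$-cycle-free variant. You propose to ``subdivide the $2$-cycle gadgets'', but subdivision vertices are again one-per-selector objects that must themselves be dominated, recreating the unbounded-cost problem, and subdividing a bi-directed clique destroys the property that one broadcast covers the whole class at its given radius. The paper's second construction handles this with a specific replacement: each color class yields an \emph{oriented} complete bipartite graph between two copies $V_i^0,V_i^1$ (arcs $v^0\to w^1$ for $v\neq w$, plus back-arcs $v^1\to v^0$, so no $2$-cycles arise), subdivided arcs $v^1\to w^2$ into a matching layer $V^2\to V^3$, and budget $k'=3k$; a careful case analysis then shows each class must carry cost exactly~$3$, concentrated on a single vertex $v^1$, which restores the forcing. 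None of these calibrations (the per-class cost~$3$, the asymmetric class gadget, the matching layer making the encoded neighborhood readable at distance~$3$) is recoverable from your outline, so the second variant also remains unproved.
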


\begin{proof} 
We provide two reductions from the W[2]-hard \MDS{} problem~\cite{Casel18}, defined as follows. 
%Our reduction gives bipartite instances\fullversion{of diameter~$6$} with no directed $2$-cycles.

\Pb{\MDS{}}
{A graph $G=(V,E)$ with $V$ partitioned into $k$ sets 
$\{V_1, \ldots, V_k\}$, for an integer $k \in \mathbb{N}$.}
{A dominating set $S$ of $G$ such that $|S \cap V_i |=1$ for every $1 \leqslant i \leqslant k$.}
{Output}

We first provide a reduction that gives digraphs with directed $2$-cycles. 

\paragraph{Construction 1.} We build an instance $(D = (V',A'), k')$ of \BDS{} as follows.
To obtain the vertex set $V'$, we duplicate $V$ into two sets $V^1$ and $V^2$. 
Following the partition of $V$ into $k$ sets, we let $V^1 = \{V^1_1, \ldots, V^1_k\}$ and $V^2 = \{V_1^2, \ldots, V_k^2\}$. 
We then add every possible arc within $V^1_i$ ($1 \leqslant i \leqslant k$), and an arc from a vertex $v$ in $V^1$ to each vertex of $V^2$ corresponding to a vertex from the closed neighborhood of $v$ in $G$. 
Altogether, $V' = V^1 \cup V^2$. Finally, we set $k'=k$. See Figure~\ref{fig:dbds:w} for an illustration. %\todo{inverser indices et exposants dans cette figure} 
Clearly $\mfd(D)=2$.

\begin{figure}[ht!]
    \centerline{\includegraphics[scale=1.5]{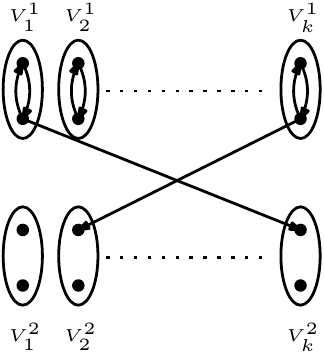}}
    \caption{Sketch of the built digraph $D$ in the first reduction of the proof of Theorem~\ref{thm:dbds:w}. }\label{fig:dbds:w}
\end{figure}

\begin{claim}
    The graph $G$ has a multicolored dominating set of size $k$ if and only if the digraph $D$ has a dominating broadcast of cost $k$.
\end{claim}

\begin{proofclaim} $\Rightarrow$ Let $S \subseteq V$ be a multicolored dominating set of size $k$ of $G$. 
    We claim that setting $f(v) = 1$ for every vertex $v$ of $V^1$
    such that the corresponding vertex $v$ of $G$ is in $S$, yields a dominating broadcast of cost $k$. To see this, notice that each vertex $v\in V^1_i$ ($1 \leqslant i \leqslant k$)
    with cost $1$ covers $V^1_i$. Now, since these vertices of cost~$1$ form a
    dominating set in $G$, they cover  the vertices of $V^2$ corresponding to their closed neighborhood in $G$, 
    and hence $f$ is a dominating broadcast.\\
    $\Leftarrow$ Assume now that $D$ has a dominating broadcast $f$ of cost $k$. Notice first that 
    any set $V^1_i$ ($1 \leqslant i \leqslant k$) must contain a vertex $v$ such that $f(v) 
    \geqslant 1$. Since $f$ has cost $k$, this means that for every vertex $w\in V^2$, $f(w) = 0$. 
    It follows that one needs to cover the vertices of $V^2$ using $k$ vertices in $V^1$, 
    which can be done only if there is a multicolored dominating set of size~$k$ in $G$. 
\end{proofclaim}

We now give a similar but more involved construction, which gives bipartite instances of maximum finite distance~$6$ and no directed $2$-cycles.

\paragraph{Construction 2.} 
We build an instance $(D' = (V',A'), k')$ of \BDS{} as follows.
To obtain the vertex set $V'$, we multiplicate $V$ into four sets $V^0$, $V^1$, $V^2$ 
and $V^3$ and we will have a set $M$ of subdivided vertices. The set $V^0\cup V^1$ will induce an 
oriented complete bipartite graph, while $V^2\cup V^3$ will induce a matching. 
Following the partition of $V$ into $k$ sets, for $0\leqslant i \leqslant 3$, we let $V^i = 
\{V_1^i, \ldots, V_k^i\}$. For a vertex $v\in V$, for $0\leqslant i \leqslant 3$ its 
copy in $V^i$ is denoted $v^i$. We assume that $\size{V_i} \geqslant 2$, since otherwise one must take the only vertex in $V_i$. 
For each $1\leq i\leq k$ we then add the following arcs:
\begin{itemize}
\item for every pair $v,w$ of distinct vertices of $V_i$, we add an arc from $v^0$ to $w^1$; 
\item for every $v\in V_i$, we add an arc from $v^1$ to $v^0$;
\item for every $v\in V_i$, we add an arc from $v^2$ to $v^3$.
\end{itemize}

\begin{figure}[ht!]
    \centerline{\includegraphics[scale=1.5]{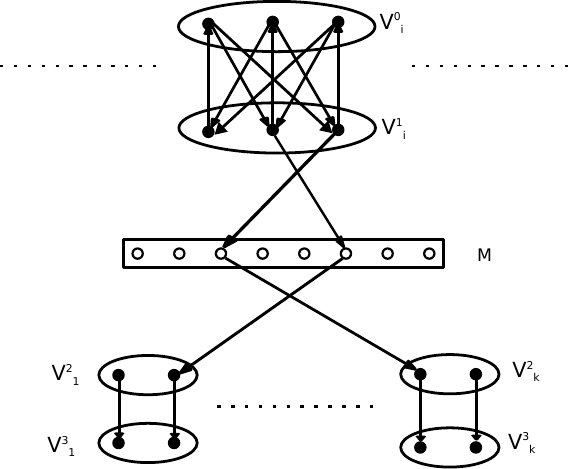}}
    \caption{Sketch of the built digraph $D''$ in the second reduction of the proof of Theorem~\ref{thm:dbds:w}. }\label{fig:dbds:w-2}
\end{figure}

Moreover, for every edge $vw$ in $G$, we add an arc from $v^1$ to $w^2$, and we subdivide it once. 
The set of all subdivision vertices is called $M$. Finally, we set $k'=3k$. %See Figure~\ref{fig:dbds:w-2} for an illustration. 
\fullversion{It is clear that $\mfd(D')=6$ (shortest paths of length~$6$ exist from vertices of $V^0$ 
to vertices of $V^3$, but no longer shortest paths exist). }
The digraph has clearly no directed 
$2$-cycles, and is bipartite with sets $V^0\cup M\cup V^3$ and $V^1\cup V^2$. %

\begin{claim}
    The graph $G$ has a multicolored dominating set of size $k$ if and only if the digraph $D'$ has a dominating broadcast of cost $3k$.
\end{claim}

\begin{proofclaim}$\Rightarrow$ Let $S \subseteq V$ be a multicolored dominating set of size $k$ of $G$. 
    We claim that setting $f(v^1) = 3$ for every vertex $v^1$ of $V^1$ such that $v \in S$ yields a 
    dominating broadcast of cost $3k$. To see this, notice first that each such vertex 
    belonging to $V_i^1$, $1 \leqslant i \leqslant k$, covers the whole set $V_i^0\cup V_i^1$ and all the vertices of 
    $M$ with an in-neighbor in $V_i^1$. Now, each vertex $v^1$ with $v\in S$ covers (at distance~$3$) 
    each vertex $w^2$ and $w^3$ of $V^2\cup V^3$ such that $w$ is in the closed neighborhood of $v$ in 
    $G$. Since $S$ is dominating, $f$ is thus a dominating broadcast.  \\
    $\Leftarrow$ Assume now that $D'$ has a dominating broadcast $f$ of size $3k$. First, we claim that for 
    every $i$ with $1 \leqslant i \leqslant k$, we need a total cost of~$3$ for the vertices in $V_i^0\cup 
    V_i^1$. Indeed, for a vertex $v\in V_i$, if $f(v^0)=2$, $v^0$ does not cover $V^1$. If $f(v^1)=2$, 
    no vertex $w^0$ with $w\neq v$ and $w\in V_i$ is covered. Clearly, we cannot cover the vertices of 
    $V_i^0\cup V_i^1$ with two vertices broadcasting at cost~$1$. Thus, we can assume that there is is a 
    total cost of exactly~$3$ on the vertices of $V_i^0\cup V_i^1$ for $1 \leqslant i \leqslant k$, and 
    each vertex $v$ of $V^2\cup V^3\cup M$ satisfies $f(v)=0$. 
    We now prove that there exists a vertex $v$ of $V_i^0 \cup V_i^1$, $1 \leqslant i \leqslant k$ such 
    that $f(v) = 3$. 
    First, since a vertex $v^1$ of $V_i^1$ with $f(v^1)=2$ does not cover the vertices of $V_i^0$ 
    (except for $v^0$), it is not possible to cover $V_i^0 \cup V_i^1$ with a cost of~$1$ on another vertex. 
    Similarly, since a vertex $v^0$ of $V_i^0$ with $f(v^0)=2$ does not cover $v^1$, an 
    additional cost of~$1$ cannot cover $v^1$ and all vertices of $M$ that are 
    out-neighbors of vertices in $V_i^1$. Similarly, we cannot have three vertices with a broadcasting cost 
    of~$1$ each. Thus, there is a vertex of $V_i^0\cup V_i^1$ with a broadcast cost of~$3$. Notice that 
    it cannot be a vertex of $V_i^0$, since otherwise the out-neighbors of $V^1_i$ in $M$ are not 
    covered. Thus there is a vertex $v^1$ in $V_i^1$ with $f(v^1)=3$. This covers, in particular, all 
    the vertices $w^2,w^3$ of $V_i^2\cup V_i^3$ such that $vw$ is an edge in $G$, and no other vertex of 
    $V_i^2\cup V_i^3$. It follows that the set of vertices $v$ of $V$ such that $f(v^1)=3$ forms a 
    dominating set of $G$ of size $k$.
\end{proofclaim}
Thus, the proof is complete.
 \end{proof}

\subsection{Complexity and algorithms for (layered) DAGs}\label{sec:BD-DAG}

We now address the special cases of (layered) DAGs. Note that \textsc{Dominating Set} remains W[2]-hard on DAGs by a reduction from~\cite[Theorem 6.11.2]{OK14}. %
In contrast, we now give an FPT algorithm for \BD{} on DAGs that counterbalances the W[2]-hardness result.

\begin{theorem}%
\label{thm:dbds:fpt}
\BDS{} parameterized by solution cost $k$ can be solved in FPT time $2^{O(k\log k)}n^{O(1)}$ time for DAGs of order $n$.
\end{theorem}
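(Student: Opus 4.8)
The plan is to design a bounded search tree that processes the DAG in a fixed topological order and, at each step, branches only on the broadcast power assigned to the topologically-first currently uncovered vertex. Two easy observations frame the approach: since the total cost is at most $k$, any solution has at most $k$ broadcast dominators, each with power in $\{1,\dots,k\}$; and since a source $s$ (with $N^-(s)=\emptyset$) can only be covered by a dominator placed on $s$ itself, every source must satisfy $f(s)\ge 1$ (in particular, instances with more than $k$ sources are immediately NO-instances).

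The key step is a structural lemma justifying that it suffices to place dominators on ``residual sources''. I would first prove, by an exchange argument, that there is a minimum-cost dominating broadcast in which no broadcast dominator covers another one. Indeed, if a dominator $t'$ (with power $f(t')$) is covered by a distinct dominator $t''$, then $d(t'',t')\le f(t'')$, and I would delete $t'$ while raising the power of $t''$ to $\max\{f(t''),\,d(t'',t')+f(t')\}$. Since every vertex $u$ with $d(t',u)\le f(t')$ satisfies $d(t'',u)\le d(t'',t')+f(t')$, the modified broadcast still covers $B^+_{f(t')}(t')$, so it remains dominating; and its cost changes by at most $d(t'',t')-f(t'')\le 0$. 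Repeating this (the number of dominators strictly decreases) yields the desired optimal solution.

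From this lemma I would derive the crucial scheduling fact. Fix a topological order of $D$ and list the dominators $t_1<\dots<t_r$ of such a canonical solution in this order. Because no dominator covers another and any vertex is only reachable from its topological predecessors, $t_j$ is uncovered after $t_1,\dots,t_{j-1}$ are placed, and every vertex preceding $t_j$ is already covered by some $t_m$ with $m<j$; hence, at the moment $t_1,\dots,t_{j-1}$ are placed, $t_j$ is exactly the topologically-first uncovered vertex. Consequently, the following branching explores the choices of this canonical solution: maintain the set of covered vertices; while some vertex is uncovered, let $v$ be the topologically-first uncovered vertex and branch over all powers $p\in\{1,\dots,k'\}$ (where $k'$ is the remaining budget), placing a dominator of power $p$ on $v$ and adding $B^+_p(v)$ to the covered set.

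For the analysis, each placement costs at least $1$, so every root-to-leaf path assigns powers $p_1,\dots,p_r\ge 1$ with $\sum_i p_i\le k$; the number of such sequences is at most the number of compositions of integers at most $k$, i.e. $2^{O(k)}$, and each node needs only polynomial time (computing balls $B^+_p(v)$ and the first uncovered vertex via breadth-first search on $D$). This gives running time $2^{O(k)}n^{O(1)}$, which is within the claimed $2^{O(k\log k)}n^{O(1)}$ bound. I expect the main obstacle to be the structural lemma together with the scheduling claim: once it is established that a minimum-cost broadcast can be assumed to place each dominator on the first uncovered vertex in topological order, correctness of the branching for \BDS{} and the time bound are routine.
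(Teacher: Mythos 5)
Your proposal is correct, but it takes a genuinely different (and in fact slightly stronger) route than the paper. The structural lemma you prove by the exchange argument is exactly the paper's Proposition~\ref{prop:dbds:eff} (an optimal broadcast in which every dominator is covered only by itself), with essentially the same proof. Where you diverge is the branching rule. The paper initializes $f(s)=1$ on all sources, then repeatedly picks a source $u$ of the residual digraph $D[V\setminus N_f]$ and branches $(k{+}1)$-ways: either set $f(u)=1$, or increase by one the power of one of the at most $k$ already-placed dominators; with depth $k$ and branching factor $k+1$ this gives the stated $2^{O(k\log k)}n^{O(1)}$ bound, and it never has to commit to a dominator's final power. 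You instead prove a scheduling fact -- in a canonical solution the dominators, listed in a fixed topological order, are exactly the successive topologically-first uncovered vertices, since in a DAG a vertex can only be covered by itself or a topological predecessor -- which lets you commit the \emph{full} power $p\in\{1,\dots,k'\}$ at placement time and dispense entirely with the ``bump an existing dominator'' branches. Your completeness argument is sound: along the branch choosing $p_j=f(t_j)$, after placing $t_1,\dots,t_{j-1}$ every vertex topologically before $t_j$ is covered (its coverer must topologically precede it, hence be among $t_1,\dots,t_{j-1}$) while $t_j$ itself is not (by the structural lemma), so $t_j$ is indeed the first uncovered vertex. Your choice of the \emph{topologically-first} uncovered vertex is also what makes this clean: an arbitrary residual source, as used in the paper, may still have uncovered non-adjacent ancestors that could serve as its coverer in the target solution, a case the paper's extra ``increase an old dominator'' branches absorb but that your rule rules out outright. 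The payoff is a better bound: root-to-leaf paths correspond to sequences of positive powers summing to at most $k$ (the branching vertex is determined by the power sequence), so the tree has $2^{O(k)}$ nodes and the total time is $2^{O(k)}n^{O(1)}$, which improves on, and in particular certifies, the claimed $2^{O(k\log k)}n^{O(1)}$.
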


The proof relies on the following proposition, which is reminiscent of a stronger statement of Dunbar et al.~\cite{DEH+06} for undirected graphs (stating that there always exists an optimal dominating broadcast where each vertex is covered exactly once, which is false for digraphs).

\begin{proposition}%
\label{prop:dbds:eff}
    For any digraph $D = (V,A)$, there exists an \emph{optimal} dominating broadcast such that 
    every broadcast dominator is covered by itself only. 
\end{proposition}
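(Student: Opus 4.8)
The plan is to take an arbitrary optimal dominating broadcast $f$ and transform it, without increasing its cost, into one in which every broadcast dominator covers only itself (that is, no other broadcast dominator lies within the broadcasting range of a given dominator). The key structural feature I would exploit is that a digraph $D$ has a natural partial order among its broadcast dominators induced by reachability: if $u$ has a directed path of length at most $f(u)$ to another dominator $t$, we say $u$ \emph{covers} $t$. The goal is to reach a configuration where this covering relation on $V_f$ is trivial (each dominator covers itself only).

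The main idea is a local rerouting argument. Suppose $f$ is optimal but some broadcast dominator $t$ is covered by another dominator $u \neq t$, so there is a directed path from $u$ to $t$ of length $\ell \leq f(u)$. Intuitively, since $u$ already broadcasts out to distance $f(u)$ and reaches $t$, the vertex $t$ itself is redundant \emph{as far as its own coverage is concerned}. I would like to decrease $f(t)$ by $1$ and increase $f(u)$ by $1$, preserving the total cost: the new, stronger broadcast from $u$ now covers everything that $u$ covered before, plus everything within distance $f(u)+1$, which in particular includes everything that $t$ used to cover via the path $u \to \cdots \to t \to \cdots$. Concretely, any vertex $w$ reached from $t$ within distance $f(t)$ is reached from $u$ within distance $\ell + f(t) \leq f(u) + f(t)$; after shifting one unit of cost from $t$ to $u$, the vertex $u$ broadcasts to distance $f(u)+1$, and we must check that the coverage of $w$ is maintained. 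This is where care is needed, because $\ell + f(t)$ may exceed $f(u)+1$ in general.

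The hard part, and the step I expect to be the main obstacle, is precisely this accounting: shifting a single unit of cost from $t$ to $u$ does not automatically preserve coverage of the vertices that were covered \emph{only} by $t$, since the detour through $u$ may be longer. The clean way to handle this is to choose the pair $(u,t)$ carefully: I would pick $u$ to be a \emph{maximal} dominator in the covering order (one that is not itself covered by any other dominator) and $t$ a dominator that $u$ covers; then argue by a potential/induction argument on the number of covering relations, or on a suitable quantity such as $\sum_{t \in V_f} (\text{number of dominators covering } t)$. Rather than moving one unit at a time, the robust formulation is: whenever $u$ covers $t$ with $u \neq t$, the entire broadcast of $t$ can be absorbed by reassigning $f(t)$ to strengthen $u$, since a vertex covered by $t$ at distance $j \leq f(t)$ is covered by $u$ at distance at most $f(u) + j$; setting $f(u) \leftarrow f(u) + f(t)$ and $f(t) \leftarrow 0$ keeps the cost unchanged and preserves all coverage. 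This strictly reduces the number of broadcast dominators, so the process terminates.

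After this merging terminates, no broadcast dominator covers another (distinct) dominator, which is exactly the desired property; optimality is maintained throughout since the total cost never increases and $f$ was optimal to begin with. I would therefore organize the proof as: (i) define the covering relation on $V_f$; (ii) observe that if some dominator $t$ is covered by a distinct dominator $u$, the reassignment $f(u) \leftarrow f(u)+f(t)$, $f(t) \leftarrow 0$ yields a valid dominating broadcast of the same cost with strictly fewer dominators; (iii) iterate until no such pair remains, and conclude that in the resulting broadcast every dominator is covered by itself only. One subtlety worth flagging explicitly is that this absorption relies on the distance from $u$ to any vertex reachable from $t$ being bounded by $f(u)$ plus the distance from $t$, which holds by the triangle inequality for directed distances along the witnessing path $u \to t$; I would state this inequality as a one-line lemma before the main argument.
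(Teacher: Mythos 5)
Your proposal is correct and follows essentially the same approach as the paper: whenever a dominator $u$ covers a distinct dominator $t$, absorb $t$'s budget into $u$ (coverage is preserved by the directed triangle inequality, the cost does not increase) and iterate, with termination guaranteed because the number of broadcast dominators strictly decreases at each merge. The only, harmless, difference is that you set $f(u) \leftarrow f(u)+f(t)$ where the paper uses the slightly tighter value $d(u,t)+f(t) \leq f(u)+f(t)$; both work, and your choice even sidesteps the paper's small optimality-based inequality $d(u,t)+f(t) > f(u)$ needed to argue that $u$'s old ball remains covered.
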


\begin{proof} Let $f$ be an optimal dominating broadcast of $D$, and assume there exists two vertices $u,v \in V$ 
    such that $f(v) \geqslant 1$ and $f(u) \geqslant d(u,v)$. 
    In this case, $v$ is covered by both $u$ and itself. 
    Notice that %the maximum finite distance in $D$ is at least 
    $d(u,v) + f(v) > f(u)$, %the latter inequality holding 
    since otherwise setting $f(v)$ to $0$ 
    would result in a better dominating broadcast. We claim that setting $f(u)$ to $d(u,v) + f(v)$ 
    and $f(v)$ to $0$ yields an optimal dominating broadcast $f_u$. 
    Notice that since $d(u,v) + f(v) > f(u)$, any vertex covered by $u$ in $f$ is still 
    covered in $f_u$. Similarly, 
    any vertex covered by $v$ in $f$ is now covered by $u$ in $f_u$. Finally, we have 
    $f(u)+f(v) \geqslant f_u(u) + f_u(v)$ since $f_u(u) = d(u,v) + f(v) \leqslant f(u) + f(v)$ and 
    $f_u(v) = 0$, implying that the cost of $f_u$ is at most the cost of $f$. 
\end{proof}

We can now prove Theorem~\ref{thm:dbds:fpt}.

\begin{proof}[Proof of Theorem~\ref{thm:dbds:fpt}] 
Let $D = (V,A)$ be a DAG. We consider the set $V_0$ of sources of $D$.
Observe that for every $s \in V_0,\ f(s) \geqslant 1$ must hold. 
In particular, this means that $|V_0| \leqslant k$ (otherwise we return NO). %
We provide a branching algorithm based on this simple observation and on Proposition~\ref{prop:dbds:eff}. We start with an initial broadcast $f$ consisting of setting $f(s) = 1$ for every vertex $s$ in $V_0$. 
At each step of the branching algorithm, we let $N_f = \cup_{v \in V_f} B^+_{f(v)}(v)$ be the set of currently covered vertices, and we consider the digraph $D_f = D[V \setminus N_f]$. 
Notice that $D_f$ is acyclic and hence contains a source $u$. Since every vertex of $N_f \setminus V_f$ is covered, we may assume by 
Proposition~\ref{prop:dbds:eff} that in the sought optimal solution, 
$u$ is only covered by itself or by a vertex in $V_f$. 
This means that one needs to branch on at most $k+1$ 
distinct cases: either setting $f(u) = 1$, or increasing the cost of one of its at most $k$ broadcasting ancestors in $V_f$. 
At every branching, the parameter $k$ decreases by~$1$, which ultimately gives an $O^*(2^{k \log{k}})$-time algorithm and completes the proof of Theorem~\ref{thm:dbds:fpt}. 
 \end{proof}

We will now complement the previous result by a negative one, which can be proved using a reduction similar to the one in Theorem~\ref{thm:dbds:npc} but from \textsc{Hitting Set}, defined as follows. %

\Pb{\textsc{Hitting Set}}
{A \emph{universe} $U$ of elements, a collection $\mathcal F$ of subsets of $U$, an integer $k \in \mathbb{N}$.}
{a \emph{hitting set} $S$ of size $k$, that is, a set of $k$ elements from $U$ such that each set of $\mathcal F$ contains an element of $S$}
{Question}

\textsc{Hitting Set} is unlikely to have a polynomial kernel when parameterized by $k+|U|$, unless the polynomial hierarchy collapses to its third level~\cite[Theorem 5.1]{DLS09journal}.

\begin{theorem}
\label{thm:BD-nopolykernel}
\BDS{} parameterized by solution cost $k$ does not admit a polynomial kernel even on layered DAGs of maximum finite distance~$2$, unless the polynomial hierarchy collapses to its third level.
\end{theorem}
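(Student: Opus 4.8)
The plan is to construct a \PTPT{} from \textsc{Hitting Set} parameterized by $k+|U|$ to \BDS{} parameterized by solution cost, so that the non-existence of a polynomial kernel for the former (assuming the polynomial hierarchy does not collapse) transfers to the latter. The key point of the reduction in Theorem~\ref{thm:dbds:npc} is that it essentially already encodes a covering problem into \BDS{} on a three-layer DAG, and \textsc{Hitting Set} is precisely the dual covering problem; the only thing to check is that the reduction is \emph{parameter-preserving} in the right sense, namely that the new parameter $k'$ is bounded by a polynomial in the old parameter $k+|U|$.

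The construction I would use mirrors the proof of Theorem~\ref{thm:dbds:npc}, but reading it "dually". Given an instance $(U,\mathcal F,k)$ of \textsc{Hitting Set}, I would build a layered DAG $D$ with three layers. The middle layer $L_U$ contains one vertex $x_u$ for each element $u \in U$; the top layer contains, for each $x_u$, a private pendant source $s_u$ with an arc $s_u x_u$ (forcing $f(s_u)\geq 1$ on every source, exactly as the $T^1$ layer forces cost in Theorem~\ref{thm:dbds:npc}); and the bottom layer $L_{\mathcal F}$ contains one sink $y_F$ for each set $F \in \mathcal F$, with an arc $x_u y_F$ whenever $u \in F$. Thus $\mfd(D)=2$ and $D$ is a layered DAG. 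Paying cost $1$ on every source is mandatory and unavoidable, contributing a fixed baseline of $|U|$; the remaining budget must then cover the sinks $y_F$, and by the same exchange argument as in Theorem~\ref{thm:dbds:npc} (raising a source from $1$ to $2$ to reach distance~$2$ is never worse than placing broadcast value on a middle vertex), an optimal broadcast covers all sinks by raising to cost~$2$ exactly those sources $s_u$ whose chosen elements $u$ form a hitting set. Setting $k' = |U| + k$ then makes $(D,k')$ a YES-instance of \BDS{} if and only if $(U,\mathcal F,k)$ is a YES-instance of \textsc{Hitting Set}.

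The correctness direction follows the two implications of the claim in Theorem~\ref{thm:dbds:npc} almost verbatim: a hitting set $S$ yields a broadcast by setting $f(s_u)=2$ for $u\in S$ and $f(s_u)=1$ otherwise, which covers every sink since each $F$ is hit; conversely, given a broadcast of cost $|U|+k$, the mandatory baseline on the sources absorbs $|U|$, and after applying the exchange argument the at most $k$ units of excess cost must be placed so as to cover every sink, which forces the corresponding elements to form a hitting set of size at most~$k$. I would include the standard observation that values on $x_u$ or $y_F$ can be pushed up to the sources without increasing cost, exactly as in the $\Leftarrow$ direction of Theorem~\ref{thm:dbds:npc}.

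The main point to verify carefully — and the only place where this differs from a routine restatement of Theorem~\ref{thm:dbds:npc} — is the parameter bound required for a \PTPT{}. Here the new parameter is $k' = |U| + k$, which is polynomial (indeed linear) in the \textsc{Hitting Set} parameter $k + |U|$, and the construction runs in polynomial time; this is precisely what a polynomial time and parameter transformation demands. Since \textsc{Hitting Set} parameterized by $k+|U|$ admits no polynomial kernel unless the polynomial hierarchy collapses to its third level, the transfer lemma for \PTPT{}s stated in Section~\ref{sec:prelim} immediately yields that \BDS{} parameterized by $k$ admits no polynomial kernel on layered DAGs of maximum finite distance~$2$ under the same hypothesis. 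The subtle obstacle to guard against is that a naive reduction might blow up the parameter (e.g.\ if each set $F$ contributed a forced cost), so the construction must ensure that the \emph{only} mandatory cost comes from the $|U|$ sources and that the optimal excess is controlled by $k$ alone; the pendant-source gadget is exactly what guarantees this.
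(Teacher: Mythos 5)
Your proposal is correct and takes essentially the same route as the paper: the paper likewise gives a \PTPT{} from \textsc{Hitting Set} parameterized by $|U|+k$ (citing the same no-polynomial-kernel result of~\cite{DLS09journal}) by reusing the construction of Theorem~\ref{thm:dbds:npc} with the triples replaced by $U$ and the elements replaced by $\mathcal F$, setting the solution cost to $|U|+k$. Your pendant sources $s_u$, middle vertices $x_u$, and sinks $y_F$ correspond exactly to the paper's two copies of $U$ and the set-vertices, and your correctness and parameter-bound arguments match the paper's.
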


\begin{proof}
We provide a reduction from {\sc HITTING SET}. %
It is shown in~\cite[Theorem 5.1]{DLS09journal} %(see also Exercise 15.4.10 of~\cite{CFK+15})\todo{enlever ?}
that if \textsc{Hitting Set} admits a polynomial kernel when parameterized by $|U|+k$ (a variant called \textsc{Small Universe Hitting Set}), then the polynomial hierarchy collapses to its third level.

We do the same reduction as the one from \textsc{Exact Cover by $3$-Sets} from Theorem~\ref{thm:dbds:npc}, except that the set $T$ of triples is replaced by $U$ and the set $X$ of elements is replaced by $\mathcal F$. We again obtain a DAG with three layers and maximum finite distance~$2$. The solution cost for the instance of \BDS{} is set to $|U|+k$, and the proof of validity of the reduction is the same.

Since this is clearly a \PTPT, the result follows.
\end{proof}

We now show that \BD{} can be solved in polynomial time on special kinds of DAGs.

\begin{theorem}
\label{thm:dbds:sld}
    \BDS{} is linear-time solvable on single-sourced layered DAGs.
\end{theorem}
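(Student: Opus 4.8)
The plan is to exploit the rigid distance structure forced by a single source and reduce the whole problem to a one-dimensional covering recurrence. Write $s$ for the unique source, $V_0=\{s\},V_1,\dots,V_t$ for the layers, and $w_i=\size{V_i}$, so that $w_0=1$. Since every arc goes from $V_i$ to $V_{i+1}$ and $s$ is the only source, every vertex of a layer $V_{m+j}$ with $j\geq 1$ has an in-neighbour in the layer above it, so by downward induction it is reachable from some vertex of $V_m$; consequently $d(u,v)$ equals the difference of layer indices whenever $v$ is reachable from $u$, and $\ecc(s)=t$. The engine of the reduction is the notion of a \emph{blanket}: assigning broadcast value $c$ to \emph{every} vertex of a single layer $V_m$ costs $c\cdot w_m$ and, by the reachability just noted, covers precisely the vertices of $V_m\cup\dots\cup V_{m+c}$.

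I would then prove the structural heart of the statement: some optimal dominating broadcast is a disjoint union of blankets whose index ranges $[m_k,m_k+c_k]$ tile $\{0,\dots,t\}$ without gaps or overlaps. Equivalently, $\gamma_b(D)=g[0]$, where
\[
g[m]=\min_{c\geq 1}\bigl(c\cdot w_m+g[m+c+1]\bigr),\qquad g[j]=0\ \text{for}\ j>t .
\]
The inequality $\gamma_b(D)\leq g[0]$ is immediate, since any such tiling is a feasible broadcast of the stated cost. For the reverse inequality I would start from an optimal broadcast in the canonical form of Proposition~\ref{prop:dbds:eff}, where every broadcast dominator is covered only by itself, and turn it into blanket form by an exchange argument. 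The decisive point, and the place where the single-source hypothesis is indispensable, is that because $s$ is a common ancestor of everything, any \emph{cone} rooted at an interior vertex $u\in V_\ell$ can be absorbed into a blanket anchored at or above $\ell$ without increasing the cost: the per-unit-reach price of a blanket is its anchor width (cheapest, namely $1$, at the source), and any restart is forced to occur at a full layer, so the residual reaches of all vertices of a given layer can be synchronised to a common value. This synchronisation is exactly what breaks for layered DAGs with several sources, where the problem is NP-hard.

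Finally I would compute $g[0]$ by filling the table $g[t+1],g[t],\dots,g[0]$ bottom-up. A direct evaluation is quadratic in $t$, but the kernel $c\cdot w_m$ has constant marginal cost $w_m$ per additional unit of reach and $g$ is non-increasing in $m$; tracking the currently cheapest admissible anchor as the sweep progresses lets the optimal reach at each layer be updated incrementally, which together with a linear-time layering of $D$ yields the announced linear-time algorithm. As sanity checks, the recurrence returns $\gamma_b=\lceil (t+1)/2\rceil$ for a directed path (all $w_i=1$) and $\gamma_b=t$ for two directed paths sharing only their source, matching direct computation.

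The main obstacle is the reverse inequality of the structural lemma: transforming an arbitrary optimum into blanket form. The subtlety is the trade-off between a cheap ancestor blanket, whose cost per covered layer is width-independent, and the locally placed unit-cost broadcasts that are efficient in thin regions; one has to show that these two mechanisms can always be realigned so that, within each layer, every covered vertex ends up with the same residual reach. Establishing this uniform-reach property rigorously — rather than merely on examples — is the crux, and it is precisely where the assumption of a single source is used.
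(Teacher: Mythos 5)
Your tiling recurrence is a genuinely different packaging from the paper's argument, and the recurrence itself is valid: the paper proves (Claim~\ref{claim:dbds:sld}) that some optimal broadcast places dominators only on singleton layers, each covering a contiguous block of layers ending just before the next singleton layer at index at least $i+2$ --- that is, exactly a blanket tiling in your sense, with all anchors of width one. The problem is that you prove only the trivial direction $\gamma_b(D)\leq g[0]$ and explicitly defer the converse, which is the entire mathematical content of the theorem. The paper spends essentially its whole proof on it: starting from the normal form of Proposition~\ref{prop:dbds:eff}, it shows (i) that two dominators in a wide layer $V_j$ can be merged into the governing singleton ancestor $s_i$ by setting $f(s_i)$ to $f(s_i)+\max\{f(v^1_j),f(v^2_j)\}+1$, with a separate case when $j=t$; and (ii) a re-splitting exchange at singleton layers so that each segment ends right before the next singleton layer, with a two-case analysis ($f(s_i)>d(s_i,s_j)$ versus equality, the latter needing a three-vertex rotation through $s_{j+1}$). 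Your one-line justifications (``the per-unit-reach price of a blanket is its anchor width'', ``any restart is forced to occur at a full layer'') are precisely what these exchanges establish, and they are not obvious --- indeed your version is if anything harder than the paper's, since you admit blankets anchored at wide layers, which the paper's normal form eliminates entirely. As written, this part of the proposal is a plan, not a proof, as you yourself concede.

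There is also a concrete error in the runtime argument: $g$ is \emph{not} non-increasing in $m$. Take widths $w_0=1$, $w_1=w_2=M$ with $t=2$: then $g[2]=g[1]=M$, while $g[0]=\min\{1+g[2],\,2+g[3]\}=2$ for $M\geq 2$, so $g[0]<g[1]$. The intuition fails because an anchor must sit at the \emph{first} layer of its segment: dropping the cheap layer $V_0$ forces the next tiling to anchor at an expensive layer, so suffix problems can get strictly harder. With the monotonicity gone, your ``track the cheapest admissible anchor'' sweep has no justification, and the recurrence as stated costs $\Theta(t^2)$ naively (or $O(t\log t)$ with convex-hull machinery, still short of the claimed linear bound). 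The repair is the same missing structural lemma: once anchors are confined to singleton layers and each reach is forced maximal (property (ii) of Claim~\ref{claim:dbds:sld}), the minimization disappears and the DP degenerates into the paper's single greedy scan, which is what actually yields linear time. So both gaps --- correctness of the recurrence value and the linear-time evaluation --- trace back to the one exchange lemma you left unproven.
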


\begin{proof}
    Let $D = (V,A)$ be a single-sourced layered DAG with layers $\{V_0,\ldots,V_t\}$. For the sake of readability, sets $V_i$ such that 
    $\size{V_i} = 1$ are denoted $\{s_i\}$, for $0 \leqslant i \leqslant t$. 
    
    Our algorithm relies on the following 
    structural properties of some optimal dominating broadcasts for single-sourced layered DAGs. 
    
    \begin{claim}
    \label{claim:dbds:sld}
        There always exists an optimal dominating broadcast $f$ of $D$ such that: 
        \begin{enumerate}[(i)]
            \item $V_f \subseteq \bigcup_{i=0}^t s_i$
            \item every $s_i \in V_f$, $0 \leqslant i \leqslant t$, covers exactly
            $B^+_{l}(s_i)$,
            where $l=j-i-1$ and $j$ is the smallest index such that $j\geq i+2$ and $\size{V_{j}} = 1$.
        \end{enumerate}
    \end{claim}
   
 \begin{proofclaim}
        Let $f$ be an optimal dominating broadcast of $D$ having the properties of 
        Proposition~\ref{prop:dbds:eff}.
        
        \medskip
        
        \noindent \textbf{Property $(i)$.} Let $0 \leqslant i < j \leqslant t$ be indices such that $s_i$ covers all layers 
        up to $V_{j-1}$, where $j$ is the smallest index such that $\size{V_j} \geqslant 2$ and 
        $f(V_j) > 0$. Notice that $i$ exists since $f(s_0) \geqslant 1$. 
        If $j$ does not exist, then we are done. We hence assume $j$ is well-defined. %,  
%        and let $v_{j-1}$ be any vertex of $V_{j-1}$ such that $v_j \in N^+(v_{j-1})$. 
        By the choice of $i$, we know that $f(s_i) = d(s_i, V_{j-1})=j-i-1$. % for any vertex $v_{j-1} \in V_{j-1}$. 
        Let $v^1_j$ and $v^2_j$ be two vertices of $V_j$. 
        We first consider the case where $\size{V_f \cap V_j} = 1$ 
        and assume w.l.o.g. that $f(v^1_j) \geqslant 1$. This means that $v_j^2$ must be covered by $s_i$, which in turn covers $v^1_j$, which is impossible by the choice of $i$ (and the definition of $f$). 
        We thus have $\size{V_j \cap V_f} \geqslant 2$, and assume that $f(v^1_j) \geqslant 1$ and 
        $f(v^2_j) \geqslant 1$. 
        Assume first that $j = t$. In that 
        case, $s_i$ covers all vertices in 
        $\cup_{a=i}^t V_{a-1}$, and hence setting $f(v^1_j) = f(v^2_j) = 0$ and increasing $f(s_i)$ by~$1$ leads to a dominating broadcast of smaller cost, 
        a contradiction. 
        %and hence setting $f(s_i) = f(s_i)+1$ and $f(v^1_j) = f(v^1_j) - 1$ %\todo{FF: si $j=t$ pourquoi on met pas $f(v^1_j)=f(v^2_j)=0$?} leads 
        %to an optimal dominating broadcast.%\todo{need to say that this has no impact on layers $0$ to $i-1$?}. 
        \medskip
        
        We thus assume $j < t$. We claim that the dominating broadcast $f_i$ defined by 
        setting: 
            \[ 
            \left \{
              \begin{array}{cccr}
              f_i(s_i) & = & f(s_i) + max\{f(v^1_j),f(v^2_j)\} + 1 & \\
              f_i(v^1_j) & = & 0 &\\
              f_i(v^2_j) & = & 0 &\\
              f_i(v) & = & f(v) & \forall\ v \neq \{s_i, v^1_j, v^2_j\}
              \end{array}
              \right.
            \]
        is optimal. Notice first 
        that $c_{f_i}(V) \leqslant c_f(V)$. Now, every vertex covered by both $v_j^1$ and $v_j^2$ is 
        covered by $s_i$: indeed, since $s_i$ corresponds to a layer with a single vertex, it has 
        a directed path of length $d(s_i,v_{j-1})+max\{f(v^1_j),f(v^2_j)\} + 1$ to every vertex  
        covered by both $v^1_j$ and $v^2_j$, which are thus still covered.
        
        \medskip
        
        \noindent \textbf{Property $(ii)$.} Suppose that $f$ satisfies Property~(i). Assume there exist two vertices $s_i$ and $s_j$ with $0 \leqslant i <i+1< j \leqslant t$ such that $f(s_i) \geqslant d(s_i, s_j)$. In other words, vertex $s_i$ covers 
        vertex $s_j$. Consider that $i$ is chosen to be minimum with this property. Notice that since $f$ fulfills the properties of Proposition~\ref{prop:dbds:eff}, 
        we have $f(s_j) = 0$. 
        We distinguish two cases:
        \begin{itemize}
            \item If $f(s_i) > d(s_i, s_j)$, consider the dominating broadcast $f_i$ obtained from $f$
        by setting $f_i(s_i) = d(s_i, s_j) - 1$ and $f_i(s_j) = f(s_i) - d(s_i, s_j)$. Notice that 
        every vertex covered by $s_i$ in $f$ is still covered in $f_i$: indeed, $s_i$ covers all vertices 
        up to $V_{j-1}$, and vertices in higher layers are now covered by $s_j$, which covers itself. 
        By construction, we have: 
        \begin{align*}
            c_{f_i}(V) & =  c_{f_i}(V \setminus \{s_i, s_j\}) + f_i(s_i) + f_i(s_j) \\
                       & =  c_{f_i}(V \setminus \{s_i, s_j\}) + d(s_i, s_j) - 1 + f(s_i) - d(s_i, s_j) \\
                       & <  c_f(V \setminus \{s_i, s_j\}) + f(s_i) \\
                       & <  c_f(V)
        \end{align*} 
        the last inequality %equality 
        holding since $f(s_j) = 0$. This leads to a contradiction since $f$ is an optimal dominating broadcast. Thus this case does not happen.
        \item We may hence assume that $f(s_i) = d(s_i, s_j)$. Since $f$ fulfills the properties of Proposition~\ref{prop:dbds:eff}
        and Property~$(i)$, $V_{j+1}$ has to dominate itself, and thus $s_{j+1}$ must exist, unless $j=t$.  Consider the dominating broadcast $f_i$ obtained from $f$
        by setting $f_i(s_i) = d(s_i, s_j) - 1$, $f_i(s_j) = 1+f(s_{j+1})$ and $f_i(s_{j+1})=0$. If $j=t$ we consider that  $f(s_{j+1})=0$. Notice that every vertex covered by $s_{j+1}$ in $f$ is covered by $s_j$ in $f_i$. We have:
        \begin{align*}
            c_{f_i}(V) & =  c_{f_i}(V \setminus \{s_i, s_j,s_{j+1}\}) + f_i(s_i) + f_i(s_j)+f_i(s_{j+1}) \\
                       & =  c_{f_i}(V \setminus \{s_i, s_j,s_{j+1}\}) + d(s_i, s_j) - 1 + f(s_{j+1}) +1 \\
                       & =  c_f(V \setminus \{s_i, s_j,s_{j+1}\}) + f(s_i) + f(s_{j+1})\\
                       & =  c_f(V)
        \end{align*} 
        the last %inequality 
        equality holding since $f(s_j) = 0$. 
        \end{itemize}
        
        We have thus obtained a dominating broadcast $f_i$ of the same cost as $f$, still satisfying Property~(i) and Proposition~\ref{prop:dbds:eff}, but where every vertex $s_l$ with $l\leq i$ satisfies~(ii). If $f_i$ still does not satisfy~(ii), we reiterate this process (each time, with increasing value of $i$) until  (ii) is satisfied for all vertices. This concludes the proof of Claim~\ref{claim:dbds:sld}.
    \end{proofclaim}
    
    We thus deduce a simple top-down procedure to compute an optimal dominating broadcast $f$. 
    We initiate our solution by setting $i=0$. While there remain uncovered vertices, we let $f(s_i) = j-i-1$ for the smallest value $j$ such that $s_j$ exists and $j\geq i+2$. In other words, $s_i$ will cover all vertices below it, until the closest vertex of the set $\bigcup_{j=0}^t s_j$ that is not a neighbour of $s_i$. We then carry on by setting
    $i=j$. By Claim~\ref{claim:dbds:sld}, this process leads to the construction of an 
    optimal dominating broadcast. 
 \end{proof}

\subsection{Algorithms for structural parameters and structured classes}

We now give some algorithms for structural parameters and classes.

\begin{theorem}
\BD{} can be solved in time $d^dn^{O(d)}$ for digraphs of order $n$ and diameter $d$.
\end{theorem}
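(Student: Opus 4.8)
The plan is to exploit the bounded diameter $d$ to sharply limit the broadcast cost of any single vertex in an optimal solution, and then to enumerate solutions via a brute-force guess on a small set of broadcast dominators. First I would observe that since the diameter is $d$, no vertex needs to broadcast with cost exceeding $d$: if $f(v) > d$ in an optimal broadcast, then $v$ already reaches every vertex it can possibly reach at cost $d$ (because $\ecc(v) \leq d$), so we may reduce $f(v)$ to $\ecc(v) \leq d$ without losing coverage, and this only decreases cost. Hence I may assume $f : V \to \{0,1,\ldots,d\}$ throughout.

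Next I would bound the number of broadcast dominators in an optimal solution. Since the graph is strongly connected (finite diameter forces this) with diameter $d$, a single vertex $v$ with $f(v) = d$ covers all of $V$, so $\gamma_b(D) \leq d$. Combined with the fact that each broadcast dominator contributes at least $1$ to the cost, any optimal broadcast has at most $d$ broadcast dominators. This is the structural key: an optimal solution is determined by a choice of at most $d$ vertices together with an assignment of costs in $\{1,\ldots,d\}$ to each.

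The algorithm then proceeds by brute force over the structure of the solution. I would guess the multiset of cost values: since the total cost is at most $d$ and each value lies in $\{1,\ldots,d\}$, the number of broadcast dominators $r$ satisfies $r \leq d$, and there are at most $d^d$ ways to choose the ordered sequence of cost values $(c_1,\ldots,c_r)$ assigned to the dominators. For each such guess, I would try all ways of choosing which vertices receive these costs; naively this is $n^r = n^{O(d)}$ choices. For each full assignment, checking whether the resulting $f$ is a valid dominating broadcast of cost at most $k$ takes polynomial time (compute each ball $B^+_{c_i}(v_i)$ via breadth-first search and verify the balls cover $V$). Taking the best feasible assignment found over all guesses yields the answer. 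The running time is $d^d \cdot n^{O(d)} \cdot n^{O(1)} = d^d n^{O(d)}$, as claimed.

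The main obstacle I anticipate is making the counting bound tight enough to land exactly at $d^d n^{O(d)}$ rather than something larger, and in particular arguing cleanly that the number of distinct cost profiles is $d^d$ rather than, say, exponential in $n$. The bound $r \leq d$ on the number of dominators and the bound $f(v) \leq d$ on each individual cost together cap the profile count at $d^d$; the factor $n^{O(d)}$ then absorbs the choice of which vertices carry nonzero cost. One should double-check the strong-connectivity claim (a weakly connected digraph of finite diameter is strongly connected, since finite diameter means every ordered pair has a finite directed distance), as this is what guarantees $\gamma_b(D) \leq d$ and hence the crucial bound on the number of dominators.
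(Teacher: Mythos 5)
Your proposal is correct and follows essentially the same route as the paper: the key observation in both is that a single vertex broadcasting at cost $d$ covers the whole (necessarily strongly connected) digraph, so the optimal cost is at most $d$, after which one brute-forces over the at most $n^{O(d)}$ choices of dominator sets and the at most $d^d$ cost assignments, checking each candidate in polynomial time. Your additional remarks (capping individual costs at $\ecc(v)\leq d$, and noting that finite diameter forces strong connectivity under the paper's definition) are correct refinements of the same argument rather than a different approach.
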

\begin{proof}
To solve \BD{} by brute-force, we may try all the subsets of size $k$, and for each subset, try all possible $k^k$ broadcast functions. But we can assume that $k\leq d$, since a single vertex with cost $d$ covers all the digraph, which completes the proof. 
\end{proof}

%\todo{maybe too short for a subsection?}
We next consider jointly two parameters. Recall that by Theorems~\ref{thm:dbds:npc} and~\ref{thm:dbds:w}, such a result probably does not hold for each of them individually.

\begin{theorem}\label{thm:BD-degree}
\BD{} parameterized by solution cost $k$ and maximum out-degree $d$ can be solved in FPT time $k^{k}2^{d^{O(k)}}n^{O(1)}$ on digraphs of order $n$.
\end{theorem}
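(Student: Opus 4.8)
The plan is to exploit the fact that bounded out-degree together with bounded cost limits the total number of vertices that any feasible broadcast can cover; this turns the question into a brute-force search on an instance whose order is bounded in terms of $k$ and $d$. The crucial preliminary observation concerns the size of directed balls: in a digraph of maximum out-degree $d$, there are at most $d^i$ vertices at directed distance exactly $i$ from a given vertex $v$ (by induction, each such vertex is an out-neighbour of a vertex at distance $i-1$), so $|B^+_{r}(v)| \le 1 + d + \cdots + d^r \le d^{r+1}$.

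Next I would turn this into a bound on $n$. Any dominating broadcast $f$ of cost at most $k$ has at most $k$ broadcast dominators, since each vertex of $V_f$ contributes at least $1$ to the cost; moreover each dominator $v$ satisfies $f(v) \le k$, and therefore covers exactly the ball $B^+_{f(v)}(v)$, of size at most $d^{k+1}$. Hence $f$ covers at most $k \cdot d^{k+1}$ vertices in total. Since a dominating broadcast must cover every vertex of $D$, a YES-instance necessarily satisfies $n \le k\cdot d^{k+1}$. So the first step of the algorithm is: if $n > k\cdot d^{k+1}$, return NO; otherwise we have reduced to an instance with $n \le k\cdot d^{k+1} = d^{O(k)}$ (for $d \ge 2$, using $k \le d^k$; when $d \le 1$ the balls have size at most $k+1$, giving $n = O(k^2)$, and the same argument applies directly).

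On this small instance I would simply brute-force. Enumerate every subset $S \subseteq V$ as a candidate dominator set $V_f$; there are at most $2^n \le 2^{k d^{k+1}} = 2^{d^{O(k)}}$ of them. Discard any $S$ with $|S| > k$, and for each remaining $S$ try all assignments $f : S \to \{1,\dots,k\}$ with $\sum_{v\in S} f(v) \le k$, of which there are at most $k^{|S|} \le k^{k}$. For each candidate $f$, check in polynomial time (by a bounded-depth breadth-first search from each dominator) whether $\bigcup_{v\in V_f} B^+_{f(v)}(v) = V$, i.e.\ whether $f$ is a dominating broadcast. Accept iff some candidate passes. The total running time is $2^{d^{O(k)}} \cdot k^{k} \cdot n^{O(1)} = k^{k} 2^{d^{O(k)}} n^{O(1)}$, as claimed.

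The only genuinely load-bearing step is the coverage bound of the second paragraph: it is what converts the two parameters "cost $k$" and "maximum out-degree $d$" into an a priori bound on the number of vertices, after which everything is routine enumeration and verification. I do not foresee a real obstacle beyond being careful that the two enumeration factors ($2^{n}$ over dominator sets and $k^{k}$ over value assignments) land exactly inside the stated bound, and handling the degenerate small-$d$ case where $d^{O(k)}$ must be read as reflecting the $O(k^2)$ bound on $n$.
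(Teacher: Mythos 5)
Your proposal is correct and follows essentially the same route as the paper's proof: bound the size of each directed ball $B^+_{f(v)}(v)$ by $d^{O(k)}$, conclude that a YES-instance has at most $d^{O(k)}$ vertices (the paper rejects when $n>k(k+1)d^k$, you when $n>k\cdot d^{k+1}$ — the same idea), and then brute-force over the $2^{n}$ candidate dominator sets and the at most $k^{k}$ broadcast assignments. Your treatment of the degenerate case $d\leq 1$ and your explicit ball-size induction are slightly more careful than the paper's one-line bound, but the argument is the same.
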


\begin{proof}
Let $(D = (V,A),k)$ be an instance of \BD{} such that $D$ has maximum out-degree~$d$. Consider a dominating broadcast $f$ of cost $k$. A vertex $v$ with $f(v)=i>0$ covers all vertices of its ball of radius~$i$, which has size at most $\sum_{j=0}^{i}(d-1)^j+1\leq id^i+1$. Thus, if the input has more than $n=k(k+1)d^k$ vertices, we can reject. Otherwise, a simple brute-force algorithm over all possible $2^{n}$ possible subsets and, given a subset, all $k^k$ possible broadcasts, is FPT. The result follows.
\end{proof}

Next, we consider the \emph{vertex cover number} of input digraphs, that is, the smallest size of a set of vertices that intersects all arcs (or, in other words, the vertex cover number of the underlying undirected graph).

\begin{theorem}\label{thm:BD-VC}
\BD{} parameterized by the vertex cover number $c$ of the input digraph of order $n$ can be solved in FPT time $2^{c^{O(c)}}n^{O(1)}$.
\end{theorem}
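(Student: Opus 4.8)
The plan is to bound the structure of an optimal dominating broadcast in terms of the vertex cover number $c$, and then enumerate all relevant solution shapes. Let $C$ be a minimum vertex cover of the underlying undirected graph, with $|C|=c$, and let $I = V\setminus C$ be the remaining independent set. The key observation is that each vertex of $I$ has all of its neighbors (in and out) inside $C$, so distances in $D$ are essentially controlled by $C$: any directed path alternates through $C$, and two vertices of $I$ with the same in-neighborhood and same out-neighborhood relative to $C$ behave identically. First I would classify the vertices of $I$ into \emph{types}, where the type of $v\in I$ records its set of out-neighbors in $C$ and its set of in-neighbors in $C$. Since these are subsets of $C$, there are at most $2^{c}\cdot 2^{c}=4^{c}$ types, so $I$ is partitioned into at most $4^c$ classes of twins.

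Next I would argue that we may restrict attention to structured optimal solutions. By Proposition~\ref{prop:dbds:eff} there is an optimal broadcast in which every broadcast dominator is covered only by itself. I would then show that among vertices of the same type in $I$, at most one needs to be a broadcast dominator, and moreover its broadcast value can be assumed small: once a broadcast reaches into $C$ it can cover an entire twin class ``for free'' through the shared neighbors in $C$, so assigning large values to twin vertices is wasteful. More precisely, I would establish that there is an optimal solution whose support lies in $C$ together with at most one representative per type in $I$, giving a support set $V_f$ of size at most $c + 4^c = 2^{O(c)}$. The diameter of $D$ restricted to the relevant covering radii is also bounded: a broadcast value exceeding roughly $2c$ already reaches every vertex reachable at all from its source (since any directed path visits $C$ at least every other step), so each $f(v)$ can be capped at $O(c)$.

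With these reductions in hand, the algorithm is brute force over the bounded-size support. I would enumerate every candidate support set $V_f$ (a subset of the $2^{O(c)}$ candidate dominators), and for each, every assignment of broadcast values in $\{1,\dots,O(c)\}$ to its elements; there are at most $(O(c))^{2^{O(c)}} = 2^{c^{O(c)}}$ such assignments. For each candidate I would compute the covered set (using precomputed all-pairs distances, obtainable in $n^{O(1)}$ time) and check whether it dominates $V$ and has cost at most $k$, returning YES iff some candidate succeeds. The total running time is $2^{c^{O(c)}}n^{O(1)}$, as claimed.

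The main obstacle is the structural lemma reducing the support to one representative per type in $I$: I must rule out the possibility that an optimal solution genuinely needs several broadcast dominators inside a single twin class, or needs large broadcast values on such vertices. The argument is an exchange/uncrossing argument in the spirit of Proposition~\ref{prop:dbds:eff} and Claim~\ref{claim:dbds:sld}: given two dominators in the same type, I would reroute their coverage through the shared neighbors in $C$ and consolidate their budget onto a single vertex of $C$ (or a single representative), never increasing the cost while preserving domination. Making this exchange work cleanly in the directed setting, where coverage is asymmetric and a vertex of $I$ may need to be covered from outside its own out-neighborhood, is the delicate point; the bounded-type structure of $I$ is exactly what makes it possible.
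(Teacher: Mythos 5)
Your overall strategy is the paper's: partition $V\setminus C$ into twin classes according to in- and out-neighborhoods in $C$ (at most $2^{2c}$ classes), bound $\mfd(D)$ by $O(c)$ via the alternation argument, and brute-force over a bounded family of class-level solution shapes, with the same final count $(O(c))^{2^{O(c)}}\leq 2^{c^{O(c)}}$. However, your key structural lemma is stated too strongly, and as stated it is false: you claim an optimal broadcast whose support lies in $C$ together with \emph{at most one representative per twin class} of $I$. Consider a twin class consisting of $m\geq 2$ sources (identical out-neighborhoods in $C$, empty in-neighborhoods). No vertex of $D$ has a directed path to any of them, so \emph{every} dominating broadcast must assign $f(v)\geq 1$ to each of the $m$ vertices; the support inside a single class is unbounded, and the exchange argument you invoke cannot consolidate this budget onto $C$ or onto one representative, because nothing in $C$ (or anywhere) reaches a source. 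Concretely, your enumeration places only one representative of such a class in the candidate pool, so the domination check fails for every candidate and your algorithm answers NO on trivially feasible instances, e.g.\ a digraph with two twin sources.

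The fix is small and is exactly what the paper's proof does: the correct per-class structure is ``at most one vertex $b_i$ per class with $f(b_i)>1$, and the remaining vertices of the class uniformly take value $0$ or uniformly take value $1$,'' the all-$1$ option being needed precisely when the class must cover itself (classes of sources being the prototypical case). Uniformity holds because twins are covered by an outside dominator all-or-nothing, and a twin broadcasting at radius $1$ contributes nothing beyond covering itself once the class's high-radius vertex is accounted for (a vertex broadcasting from inside the class covers the same vertices outside the class as any other member would). This adds only one binary choice per class, so the count remains $2\cdot(2c+1)$ choices per class plus $2c+1$ per vertex of $C$, i.e.\ $2^{c^{O(c)}}$ candidates overall, and the rest of your algorithm (capping values at $2c+1$, precomputing all-pairs distances, checking coverage and cost) goes through unchanged. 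Apart from this missing case, your proposal matches the paper's proof.
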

\begin{proof}
Let $(D=(V,A),k)$ be an instance of \BD{} and let $S \subseteq V$ be a vertex cover of $D$ of size $c$. Let us partition the set $V\setminus S$ (which contains no arcs) into equivalence classes $C_1,\ldots, C_t$ according to their in- and out-neighborhoods in $S$: two vertices are in the same class if and only if they have the same sets of in- and out-neighbors. There are $t\leq 2^{2c}$ such classes.

For a given class, any broadcasting vertex out of the class either covers all vertices in the class, or none. Similarly, a vertex broadcasting at radius $r$ inside the class covers the same set of vertices outside the class as any other vertex from the class would. Hence, we may assume that at most one selected vertex $b_i$ per class $C_i$ broadcasts with $f(b_i)>1$. We can assume that the other vertices in the class either all satisfy $f(v)=0$ or all $f(v)=1$ (the latter may happen if they all need to cover themselves, for example if they are all sources). 
Moreover, $\mfd(D)\leq 2c+1$ since every shortest path is either contained in $S$ or 
has to alternate between a vertex of $S$ and one of $V\setminus S$, but cannot have repeated vertices.

Hence, for each equivalence class $C_i$, we have $2\times(2c+1)$ choices: $2c+1$ for the value of $f(b_i)$, and two possibilities for the other vertices of $C_i$. Similarly, for each vertex of $S$, we have $2c+1$ possible broadcast values. In total, this gives $(t+c)^{O(c)}=2^{c^{O(c)}}$ different possible dominating broadcasts, and each of them can be checked in polynomial time. 
\end{proof}

We next see how to apply the following powerful theorem from~\cite{FOnowheredense}, to show that \BD{} is FPT for any class of digraphs whose underlying graph is \emph{nowhere dense}. We will not give a definition of nowhere dense graph classes, and refer to the book~\cite{SparsityBook} instead. Such classes include planar graphs, graphs excluding a fixed (topological) minor, graphs of bounded degree, graph classes of bounded expansion, etc.

\begin{theorem}[\cite{FOnowheredense}]\label{thm:FOnowhere-dense}
Let $\mathcal C$ be a nowhere dense graph class. There exists $\epsilon$ such that, given as inputs a graph $G\in\mathcal C$ and a first-order logic graph property $\varphi$, the problem of deciding whether $G$ satisfies $\varphi$ can be solved in time $f(|\varphi|)|G|^{1+\epsilon}$, that is, it is FPT when parameterized by the length of $\varphi$.
\end{theorem}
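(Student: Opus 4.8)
The plan is to prove this purely logical and combinatorial meta-theorem using the sparsity theory of nowhere dense classes; no feature of \BD{} enters, so I treat $\varphi$ as an arbitrary first-order sentence over the finite relational input structure $G$. The first step is Gaifman's locality theorem: every first-order $\varphi$ is logically equivalent to a Boolean combination of \emph{basic local sentences}, each asserting the existence of $s$ vertices that are pairwise at distance more than $2r$ and each satisfy a fixed $r$-local formula $\chi(x)$ (one whose quantifiers range only over the $r$-ball around $x$), where $r$ and $s$ are bounded by $|\varphi|$. Computing this normal form is independent of $G$, so it suffices to solve, for fixed $r$, $s$ and $\chi$, the \emph{scattered realizers} problem: decide whether $G$ contains $s$ vertices, pairwise more than $2r$ apart, all satisfying $\chi$. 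Since whether $\chi(v)$ holds depends only on the isomorphism type of the marked $r$-ball around $v$, the remaining task is a distance-constrained counting, up to the threshold $s$, of vertices of a prescribed local type.

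The combinatorial backbone I would use is the characterisation of nowhere dense classes by \emph{uniform quasi-wideness} (Ne\v{s}et\v{r}il--Ossona de Mendez) together with the \emph{splitter game} characterisation of Grohe--Kreutzer--Siebertz. Uniform quasi-wideness guarantees, for each radius $r$, a function $N_r$ and a constant $s_r$ such that any vertex set $A$ of size at least $N_r(m)$ contains, after deleting some set $S$ of at most $s_r$ vertices, an $r$-scattered subset of size $m$ in $G-S$. This is exactly what converts ``many local realizers'' into ``many far-apart local realizers'', at the cost of branching over the bounded deletion set $S$ and over the finite interaction between $S$ and the local type of $\chi$. Dually, the splitter game---in which Connector repeatedly restricts the arena to an $r$-ball and Splitter then deletes one vertex---is won by Splitter in a number of rounds $\ell(r)$ bounded in terms of the class and $r$ precisely when the class is nowhere dense; this bounded round count is what bounds the depth of the recursion in the algorithm.

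The algorithm then proceeds by recursion on the splitter-game length. The crucial algorithmic ingredient is the existence, for every $r$, of \emph{sparse neighbourhood covers}: a family of clusters of radius $O(r)$ such that every $r$-ball lies inside some cluster and every vertex lies in few clusters (overlap $|G|^{\epsilon}$), computable in time $f(r)\,|G|^{1+\epsilon}$ via the generalised colouring numbers of the class. Since each evaluation $\chi(v)$ is determined inside the cluster containing the $r$-ball of $v$, one evaluates local types cluster by cluster; the low overlap means each vertex is processed in only $|G|^{\epsilon}$ clusters, which is the single source of the $|G|^{1+\epsilon}$ factor. Inside each cluster one applies the quasi-wideness step to peel off a Splitter vertex and recurse in $G-S$ with a strictly shorter game; when the game length reaches $0$ the residual instances are trivial, and the recursion depth $\ell(r)$ together with the branching width (bounded by $s_r$ and the number of $r$-local types) contributes only to the parameter function $f(|\varphi|)$.

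The main obstacle is pinning down the near-linear running time rather than a mere polynomial: the exponent must stay $1+\epsilon$ uniformly, which forces two things to go right simultaneously. First, the neighbourhood covers must have overlap $|G|^{\epsilon}$ and be buildable in $|G|^{1+\epsilon}$ time---itself a non-trivial consequence of the nowhere-dense structure, and the place where the constant $\epsilon$ in the statement originates. Second, deleting the Splitter set $S$ and recursing must not multiply the instance size across levels; here the bounded game length $\ell(r)$ and the fact that recursion happens inside already-localised clusters keep the accumulated cost at $f(|\varphi|)\,|G|^{1+\epsilon}$. Verifying that the constants $N_r$, $s_r$ and the number of $r$-local types all collapse into $f$, and that no hidden dependence on $|G|$ creeps into the exponent, is the delicate bookkeeping on which the whole argument ultimately rests.
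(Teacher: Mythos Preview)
The paper does not prove this theorem at all: it is quoted verbatim from~\cite{FOnowheredense} (Grohe--Kreutzer--Siebertz) and used as a black box to derive Corollary~\ref{thm:BD-nowhere-dense}. So there is no ``paper's own proof'' to compare against; the intended treatment is simply a citation.

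That said, your sketch is a faithful high-level outline of the actual argument in~\cite{FOnowheredense}: Gaifman locality to reduce to basic local sentences, uniform quasi-wideness to turn many realizers into scattered realizers after a bounded deletion, the splitter-game characterisation to bound recursion depth, and sparse neighbourhood covers with overlap $|G|^{\epsilon}$ to achieve the near-linear running time. The main caveat is that what you wrote is a proof \emph{plan} rather than a proof: each of the four ingredients (especially the construction of sparse neighbourhood covers in time $|G|^{1+\epsilon}$ and the bookkeeping that keeps the exponent from growing across recursion levels) is itself a substantial technical result, and none of them is actually carried out here. For the present paper, however, none of this is required---a reference suffices.
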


\begin{corollary}\label{thm:BD-nowhere-dense}
For every fixed nowhere dense graph class $\mathcal C$, \BD{} parameterized by the solution cost of the input digraph is FPT for inputs whose underlying graphs are in $\mathcal C$.
\end{corollary}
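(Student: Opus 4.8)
The plan is to reduce \BD{} with parameter $k$ to first-order model checking and then invoke Theorem~\ref{thm:FOnowhere-dense}. The starting observation is that a dominating broadcast of cost at most $k$ uses at most $k$ broadcast dominators, each broadcasting at radius at most $k$; hence every vertex is covered at directed distance at most $k$ from one of at most $k$ chosen vertices, and the chosen radii are positive integers summing to at most $k$. I would therefore enumerate the boundedly many (at most $2^{k}$) compositions $r_1,\dots,r_j$ of an integer $\leq k$ into $j\leq k$ positive parts; for each fixed composition I write a sentence asserting the existence of broadcasters realizing it, and $\Phi_k$ will be the disjunction of these sentences over all compositions.

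The key ingredient is a bounded-radius directed-distance predicate. For a fixed $r\leq k$, the statement ``there is a directed path of length at most $r$ from $x$ to $y$'' is expressible in first-order logic over the arc relation by $\mathrm{dist}^{+}_{\leq r}(x,y):=\bigvee_{\ell=0}^{r}\exists z_1\cdots\exists z_{\ell-1}\,\bigl(\mathrm{arc}(x,z_1)\wedge\cdots\wedge\mathrm{arc}(z_{\ell-1},y)\bigr)$, with the convention that the $\ell=0$ disjunct is $x=y$ and the $\ell=1$ disjunct is $\mathrm{arc}(x,y)$. With this predicate, each composition yields the sentence $\exists x_1\cdots\exists x_j\,\forall y\,\bigvee_{i=1}^{j}\mathrm{dist}^{+}_{\leq r_i}(x_i,y)$. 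Since $j\leq k$, each $r_i\leq k$, and the number of compositions is bounded by a function of $k$, the length $|\Phi_k|$ is bounded by a computable function of $k$ only; and $D$ has a dominating broadcast of cost at most $k$ if and only if $D\models\Phi_k$ (the cost bound is built into the chosen composition).

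Finally I would apply Theorem~\ref{thm:FOnowhere-dense} to $\Phi_k$. Here lies the main obstacle: the quoted meta-theorem is phrased for undirected graphs, whereas $\Phi_k$ speaks about the arc relation, which a purely undirected first-order sentence cannot detect. The clean resolution is to note that the model-checking result underlying Theorem~\ref{thm:FOnowhere-dense} extends verbatim to any class of directed, colored relational structures of bounded arity whose Gaifman graph is nowhere dense; the Gaifman graph of $D$ is exactly its underlying undirected graph, which lies in $\mathcal C$ and is therefore nowhere dense. Thus model checking $\Phi_k$ on $D$ runs in time $f(|\Phi_k|)\,|D|^{1+\epsilon}=g(k)\,|D|^{1+\epsilon}$, which is FPT in $k$. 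If one prefers to use the theorem exactly as stated for undirected graphs, I would instead replace each arc $uv$ by a short undirected gadget (for instance a subdivision vertex together with a degree-one marker attached to the tail $u$) so that arc directions become first-order definable in the resulting undirected graph $H$, rescale the distance predicates by the subdivision factor, and verify that $H$ still belongs to a nowhere dense class, the latter holding because a bounded local gadget replacement preserves nowhere denseness. Either way, the only genuinely delicate point is encoding directedness inside the first-order/nowhere-dense framework; expressing the coverage condition and the bounded-radius distances is routine.
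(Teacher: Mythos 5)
Your proposal is correct and takes essentially the same route as the paper: for each fixed $k$ you express \BD{} as a first-order sentence of length bounded in $k$ (your enumeration of radius compositions matches the paper's enumeration of partitions of the dominator set by broadcast radius, and your $\mathrm{dist}^{+}_{\leq r}$ predicate is exactly the paper's $dp(x,y,i)$), and then apply Theorem~\ref{thm:FOnowhere-dense}. Your explicit handling of the directed-versus-undirected mismatch --- via the extension of the meta-theorem to relational structures with nowhere dense Gaifman class, or alternatively an asymmetric arc gadget --- is a precision the paper's proof leaves implicit, and is resolved correctly.
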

\begin{proof}
We want to show that for fixed parameter value~$k$ of the solution cost, \BD{} can be expressed in first-order logic by a formula whose length is bounded by a function of $k$, and apply Theorem~\ref{thm:FOnowhere-dense}.

To do so, we extend the classic approach for defining \textsc{$k$-Dominating Set} in first-order logic (see e.g. \cite[Chapter~18.4]{SparsityBook}).

We will use the property $dp(x,y,i)$, stating that there is a directed path from $x$ to $y$ of length at most~$i$. This can be expressed in first-order logic for fixed~$i$. To this end, we state that either $x=y$, or there is an arc from $x$ to $y$, or there is a directed path of length~$2$ from $x$ to $y$ (i.e. there exists a vertex $z$, $x$ has an arc to $z$, and $z$, an arc to $y$), $\ldots$ or there exists a directed path of length~$i$ from $x$ to $y$.

Let $V_1,\ldots,V_k$ denote the sets of broadcast dominators of a potential dominating broadcast $f$, where $V_i$ contains the vertices broadcasting at radius~$i$. The union $V_f=\bigcup_{i=1}^k V_i$ has size at most $k$, and since $k$ is considered to be fixed, we can ''guess'' the size of each set $V_i$. To this end, we let $v_i^1,\ldots,v_i^k$ be the potential vertices of $V_i$. For a given partition $\Pi$ of $V_f$ into sets $V_1,\ldots,V_k$, we can express the fact that a given vertex $x$ is dominated by $f$ as the formula $dom_\Pi(x,v_1^1,\ldots,v_k^k)$, which is composed of the conjunction of all formulae of type $dp(v_i^j,x,i)$, where in $\Pi$, $1\leq j\leq |V_i|$.

Now, given the set $\Pi_1,\ldots,\Pi_t$ of all partitions of $V_f$ into sets $V_1,\ldots,V_k$ (note that $t\leq k^k$), the first-order formula for \BD{} is given as $$\exists v_1^1\ldots\exists v_k^k, \left(\forall x\in G, dom_{\Pi_1}(x,v_1^1,\ldots,v_k^k)\right)\vee\ldots\vee\left(\forall x\in G, dom_{\Pi_t}(x,v_1^1,\ldots,v_k^k)\right).$$
\end{proof}

We remark that Corollary~\ref{thm:BD-nowhere-dense} does not imply Theorem~\ref{thm:BD-degree}, indeed there are digraph classes of bounded maximum out-degree whose underlying graphs do not form a nowhere dense class of graphs. For example, every $d$-degenerate graph can be oriented so as to have maximum out-degree at most~$d$. Indeed, a graph is \emph{$d$-degenerate} if its vertices can be ordered $v_1,\ldots,v_n$ such that for $2\leq i\leq n$, $v_i$ has at most $d$ neighbors among $v_1,\ldots,v_{i-1}$. Thus, orienting every edge $v_iv_j$ with $i<j$ from $v_j$ to $v_i$ produces a digraph of maximum out-degree at most~$d$. However, for every $d$, the class of $d$-degenerate graphs is not nowhere dense~\cite{FOnowheredense,SparsityBook}.

\section{Complexity of \MP}\label{sec:MP}

We will need the following results to prove our results for \MP{}. The first one was proved for undirected graphs in~\cite{HM14}.

\begin{lemma}\label{lemm:MPlongpath}
Let $D = (V,A)$ be a digraph with a shortest 
directed path of length $3k-3$ vertices. Then, $D$ has a multipacking of size $k$.
\end{lemma}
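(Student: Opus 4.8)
The plan is to follow the undirected argument of~\cite{HM14}: build an explicit multipacking of size $k$ out of evenly spaced vertices of the guaranteed shortest path. Write this path as $p_1,p_2,\dots,p_{3k-2}$ (of length $3k-3$); since it is a shortest directed path, $d(p_i,p_j)=j-i$ for all $i\leq j$. I would take
\[
S=\{\,p_{3\ell+1}\ :\ 0\leq \ell\leq k-1\,\}=\{p_1,p_4,\dots,p_{3k-2}\},
\]
a set of exactly $k$ vertices whose path-indices are pairwise at least $3$ apart. It then remains to check the defining property of a multipacking: for every vertex $v$ and every integer $d\geq 1$, $\size{S\cap B^+_d(v)}\leq d$.

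I would reduce this to a bound on index windows. Fix $v$ and $d$, and let $p_a$ and $p_b$ (with $a\leq b$) be the selected vertices of minimum and maximum index lying in $B^+_d(v)$. Because consecutive elements of $S$ differ in index by exactly $3$, the number of selected vertices with index in $[a,b]$ is at most $\lfloor (b-a)/3\rfloor+1$. Hence it suffices to establish the window bound $b-a\leq 2d$ and then observe that $\lfloor 2d/3\rfloor+1\leq d$ holds for every $d\geq 3$, while the two remaining cases are immediate (a window of length at most $2$, resp.\ at most $4$, can contain at most one, resp.\ two, vertices of $S$). A short check also confirms that $S$ has exactly $k$ vertices, as required.

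The window bound $b-a\leq 2d$ is the key step and, I expect, the main obstacle. In the undirected case it is instantaneous from the symmetric triangle inequality, $b-a=d(p_a,p_b)\leq d(p_a,v)+d(v,p_b)=d(v,p_a)+d(v,p_b)\leq 2d$. In a digraph this reasoning collapses: from $d(v,p_a)\leq d$ one cannot recover any control on $d(p_a,v)$, and nothing a priori prevents a single vertex from reaching two far-apart vertices of the path cheaply. The directed proof must therefore squeeze the window bound out of the geodesic structure of the path itself—using that $P$ realizes the distances $d(p_i,p_j)=j-i$ together with the monotonicity $d(v,p_j)\leq d(v,p_i)+(j-i)$ for $i\leq j$—to argue that the selected vertices trapped in one out-ball cannot straddle an index range exceeding $2d$. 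This is precisely where the directedness of the problem bites, and it is the step on which I would concentrate essentially all of the effort; the remaining verifications (the small values of $d$ and the final count) are routine.
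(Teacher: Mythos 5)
Your construction is exactly the paper's: the entire published proof reads ``It suffices to select every third vertex on the path,'' i.e.\ it asserts, without argument, precisely the verification you could not complete. So everything hinges on your window bound $b-a\leq 2d$, which you correctly isolate as the crux and honestly leave open. That gap is genuine, and it cannot be filled: the bound is simply false in digraphs. Arcs from a fresh vertex into the path do not disturb any distance along $P$ so long as nothing enters that vertex, so take $P=p_1\to\cdots\to p_{3k-2}$ and add a new source $v$ with arcs to each $p_{3\ell+1}$: then $P$ is still a shortest directed path, yet $B^+_1(v)$ contains all $k$ selected vertices, so $\size{S\cap B^+_1(v)}=k>1$ and the ``every third vertex'' set is not a multipacking. (Failures occur even inside the path: adding the single arc $p_{3k-2}\to p_1$ keeps $P$ shortest but puts the two selected vertices $p_{3k-2}$ and $p_1$ into $B^+_1(p_{3k-2})$.) Your hope that the monotonicity $d(v,p_j)\leq d(v,p_i)+(j-i)$ might rescue the bound is also a dead end: it bounds distances from above only, i.e.\ it can put additional path vertices \emph{into} $B^+_d(v)$ but can never keep one out.

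Worse, the missing step is unprovable because Lemma~\ref{lemm:MPlongpath} itself is false for general digraphs (under either reading of the garbled phrase ``length $3k-3$ vertices''), so no repair of your argument or of the paper's one-liner exists. In the hub example above, every vertex lies within directed distance $3$ of $v$ (each $p_{3\ell+1}$ at distance $1$, each $p_{3\ell+2}$ at $2$, each $p_{3\ell+3}$ at $3$), so the multipacking constraint at $v$ with $d=3$ forces $\mpn(D)\leq 3$, while $D$ contains a shortest directed path of length $3k-3$ for arbitrary $k$; for $k\geq 4$ the conclusion fails. One can even keep the maximum out-degree at $2$ by replacing the hub with a balanced out-tree whose leaves point to the vertices $p_{3\ell+1}$, which caps $\mpn$ at $O(\log k)$ --- so the flaw is not cured in the paper's application of the lemma, namely the FPT algorithm for \MP{} parameterized by solution size and maximum out-degree, which accepts any instance containing such a path and would thus accept these NO-instances. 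The undirected statement of~\cite{HM14} is true, and there your window argument goes through verbatim via $b-a=d(p_a,p_b)\leq d(p_a,v)+d(v,p_b)\leq 2d$ together with your checks for $d\in\{1,2\}$; a directed version needs an extra hypothesis (e.g., symmetry of $D$) that neither you nor the paper imposes. In short: your attempt is incomplete, but your diagnosis of \emph{where} the directedness bites is exactly right, and it is the paper's proof, not your skepticism, that is in error.
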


\begin{proof}
It suffices to select every third vertex on the path.
\end{proof}

\begin{lemma}
\label{lem:mp:sources}
    Let $D = (V,A)$ be a digraph. 
    There always exists a multipacking of maximum size containing every source of $D$. %
\end{lemma}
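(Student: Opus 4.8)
The plan is to prove this by an exchange (swapping) argument: starting from an arbitrary maximum multipacking, I repeatedly force each missing source into the solution at the cost of discarding a single \emph{non-source}, leaving the total size unchanged. The key structural fact I would isolate first is that a source $s$ satisfies $d(v,s)=\infty$ for every $v\neq s$, since $s$ has no in-neighbour and hence no non-trivial directed path can terminate at it. Consequently $s\in B^+_d(v)$ can hold only when $v=s$. This is exactly what makes sources easy to absorb: inserting $s$ into any vertex set leaves \emph{untouched} every multipacking constraint centred at a vertex other than $s$, so only the balls $B^+_d(s)$ need to be re-examined.

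With this in hand, I would set up the exchange step. Let $S$ be a maximum multipacking and let $s$ be a source with $s\notin S$. Write $a_d=\size{S\cap B^+_d(s)}$, so that $a_d\leq d$ for all $d\geq 1$ because $S$ is a multipacking, and $a_0=0$ because $s\notin S$. Since $S$ is of maximum size, $S\cup\{s\}$ cannot be a multipacking; by the observation above the only constraints that inserting $s$ can violate are those centred at $s$ itself, so there must be some $d$ with $a_d=d$. Let $d_0\geq 1$ be the smallest such index. Because $a_{d_0-1}<d_0=a_{d_0}$, there is a vertex $w\in S$ with $d(s,w)=d_0$. I would then define $S'=(S\setminus\{w\})\cup\{s\}$, which has $\size{S'}=\size{S}$.

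Finally, I would verify that $S'$ is a multipacking and that the exchange makes genuine progress. For any centre $v\neq s$, the set $S'\cap B^+_d(v)$ is contained in $S\cap B^+_d(v)$ (removing $w$ can only shrink it, and $s$ never enters a ball centred away from itself), so all those constraints survive. For centre $s$ I would split on the radius: for $d<d_0$ we have $w\notin B^+_d(s)$, but the insertion of $s$ is absorbed by the strict slack $a_d\leq d-1$ guaranteed by the minimality of $d_0$; for $d\geq d_0$ the deletion of $w$ and the insertion of $s$ cancel, leaving the count equal to $a_d\leq d$. Crucially, $w$ lies at distance $d_0\geq 1$ from $s$, so the last arc of a shortest $s$–$w$ path gives $w$ an in-neighbour, whence $w$ is not a source; thus $S'$ contains $s$ together with every source already present in $S$. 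Iterating over the finitely many sources not yet captured then yields a maximum multipacking containing all of them.

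I expect the only delicate point — the main obstacle — to be the radius-by-radius bookkeeping at the centre $s$, where the precise choice of a removed vertex $w$ at distance \emph{exactly} $d_0$ is what simultaneously repairs the newly tight constraint (at radii $\geq d_0$) and preserves all the looser ones (at radii $<d_0$); choosing $w$ any nearer or farther would break one of the two families of inequalities. Verifying that $w$ is never itself a source is what guarantees the iteration strictly increases the number of captured sources and therefore terminates.
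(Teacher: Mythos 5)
Your proof is correct and follows essentially the same exchange argument as the paper: both identify the smallest radius at which the ball around the missing source $s$ is tight (the paper's notion of $s$ being ``full''), swap $s$ in for a solution vertex inside that ball, and use the fact that a source is unreachable (so $s$ enters no ball centred elsewhere) to see that only the constraints at $s$ need re-checking. The only cosmetic differences are that you remove a vertex at distance \emph{exactly} $d_0$ whereas the paper removes an arbitrary vertex of $B^+_{i+1}(s)\cap S$ (which also works, since the strict slack $a_d\leq d-1$ for $d<d_0$ absorbs the insertion regardless --- so your closing remark that a ``nearer'' choice would fail is slightly over-stated), and that you spell out the termination of the iteration (the removed vertex has an in-neighbour, hence is never a source), a point the paper leaves implicit.
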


\begin{proof}
    Let $D=(V,A)$ and let $S \subseteq V$ be a multipacking of $D$ of size at least $k$.
    % FIXME: easier to say: move a vertex from the first non-empty (w.r.t. S) $N_d^+(s)$?} 
    Assume there exists a source $s \in V$ that does not belong to $S$. 
    We say that a vertex $v \in V$ is \emph{full} w.r.t. $S$ whenever there exists an integer $p > 0$ such that 
    $|B^+_p(v) \cap S| = p$. 
    Assume first that $s$ is not full w.r.t. $S$: in that case, one can safely add $s$ to the 
    multipacking $S$ and obtain a new solution of size at least $k$. Hence, we now consider the 
    case where $s$ is full. Notice that if $s$ is full at distance~$1$ (i.e. 
    $\size{B^+_1(s) \cap S} = 1$), then the set 
    $(S \setminus \{u\}) \cup \{s\}$ is a multipacking of size at least $k$ (recall that $s$ is a source), and thus we are done. \\
    
    We hence assume that this is not the case. 
    Let $1 \leqslant i \leqslant \ecc(s)$ be the smallest integer such that $|B^+_i(s) \cap S| < i$ and $|B^+_{i+1}(s) \cap S| = i+1$. 
    %The case $i = 1$ means that $N^+[s]$ contains exactly one vertex $u$ from $S$. T
    Notice that $|N^+[s] \cap S| = 0$, since otherwise $s$ would be full at distance~$1$. In 
    particular, since $s$ is full at distance~$i+1$, this means that $|B_{i+1}^+(s) \cap S| \geqslant 2$. 
    Let $u$ be any vertex of $B_{i+1}^+(s) \cap S$. We claim that the set $S' = (S \setminus \{u\}) \cup \{s\}$ is a multipacking of $D$. 
    First, it is clear that $|S'| = |S|$. Now, since $s$ is a source and $\size{N^+(s) \cap S} = 0$, 
    adding $s$ to the multipacking 
    cannot violate the constraint for any vertex $v \in V$. Similarly, removing a vertex from a 
    multipacking cannot create any new constraint, hence the result follows. 
\end{proof}

The following lemma is the central result of both our polynomial-time algorithm (Theorem~\ref{lem:mp:polysld}) and NP-completeness reduction 
(Theorem~\ref{thm:dmp:sdnp}).  

\begin{lemma}
\label{lem:mp:sld}
    Let $D = (V,A)$ be a single-sourced layered DAG. 
    There always exists a multipacking $S \subseteq V$ of maximum size such that for every $1\leq i\leq t$, $|S \cap V_i| \leqslant 1$. 
\end{lemma}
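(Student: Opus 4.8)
The plan is to prove the statement by an exchange argument analogous to the proof of Lemma~\ref{lem:mp:sources}, but now pushing vertices up to their respective single-vertex-free layers is replaced by the simpler goal of ensuring that each layer $V_i$ contributes at most one vertex to the multipacking. The key structural fact I want to exploit is that in a single-sourced layered DAG, every vertex lies at a well-defined distance from the unique source $s_0$ (namely, $v\in V_i$ implies $d(s_0,v)=i$, since all directed paths from $s_0$ to $v$ have the same length $i$), and more generally, if $u\in V_i$ and $v\in V_j$ with $i<j$ and $d(u,v)<\infty$, then $d(u,v)=j-i$. This means distances in the DAG are entirely governed by layer indices, which is what makes the multipacking constraints amenable to a counting argument by layers.

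First I would take a multipacking $S$ of maximum size and suppose, for contradiction with the desired structure, that some layer $V_i$ contains at least two vertices of $S$, say $u,v\in V_i\cap S$ with $i$ chosen as small as possible. The core idea is that because $u$ and $v$ are in the same layer, they are at distance $0$ from any common viewpoint in terms of layer depth; in particular, from the source $s_0$ we have $d(s_0,u)=d(s_0,v)=i$, so both $u$ and $v$ sit inside $B^+_i(s_0)$. I would then argue that replacing one of these two vertices (say $v$) by a suitably chosen vertex in a strictly higher layer that is currently not in $S$ keeps the solution a multipacking while preserving its size, contradicting either maximality or the minimality of $i$. Concretely, I expect to set up the replacement so that after finitely many such swaps each layer holds at most one vertex, driving down a potential function such as $\sum_i \binom{|S\cap V_i|}{2}$ or the sum of layer indices of the chosen vertices.

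The step I expect to be the main obstacle is verifying that the swap does not violate any ball constraint. When I move a vertex downward to a deeper layer, I must check that for every vertex $w$ and every radius $d$, the count $|B^+_d(w)\cap S'|$ still does not exceed $d$. Removing $v$ can only help (removing a vertex never creates a new violation, as already noted in Lemma~\ref{lem:mp:sources}), so the delicate part is the newly added vertex. I would control this by using the single-source layered structure: any ball $B^+_d(w)$ that contains the new vertex at layer $j$ must have $w$ at layer at most $j$, and I would compare the situation against the ball $B^+_{d'}(w)$ that contained the removed vertex $v$ at layer $i<j$; since balls centered higher up and reaching deeper layers are ``nested'' in a predictable way along the unique-source structure, the displaced vertex falls into at least as tight a set of constraints as before. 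The careful bookkeeping of which balls gain the new vertex without a corresponding loss is where the real work lies, and I would handle it by always choosing the replacement vertex to be a sink (or a vertex in the deepest reachable layer), so that it participates in the fewest possible ball constraints, much as the source in Lemma~\ref{lem:mp:sources} participates in the fewest constraints from the other direction.

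Finally, once the exchange argument is established, I would conclude by noting that iterating the swap terminates because the chosen potential strictly decreases at each step and is bounded below, yielding a maximum multipacking $S$ with $|S\cap V_i|\leq 1$ for all $1\leq i\leq t$, as required. I would also remark that the source layer $V_0$ is a singleton by the single-sourced hypothesis, so the constraint is automatic there, and that this layer-wise structure is precisely what the subsequent polynomial-time algorithm in Theorem~\ref{lem:mp:polysld} will rely upon.
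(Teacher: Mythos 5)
There is a genuine gap, and it lies in the direction of your exchange. You propose to resolve a doubled-up layer $V_i$ by moving one of its two $S$-vertices to a \emph{deeper} layer (larger index), preferably a sink, on the grounds that sinks ``participate in the fewest possible ball constraints.'' This inverts the analogy with Lemma~\ref{lem:mp:sources}: a source is safe to add to a multipacking because no other vertex's ball $B^+_d(w)$ can contain it, whereas a sink is the opposite extreme — it can lie in the balls of \emph{every} vertex above it, so it is the most constrained choice as a ball member, not the least. Worse, a feasible deeper replacement may simply not exist. Take $V_0=\{s\}$, $V_1=\{a,b\}$, $V_2=\{c,d\}$ with arcs $s\to a$, $s\to b$, $a\to c$, $b\to d$, and the maximum multipacking $S=\{c,d\}$: both $S$-vertices sit in the final layer, there is no deeper layer to move into, and the only repair is the upward swap to, say, $\{a,d\}$. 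Your proposal as stated fails outright on this instance.

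The paper's proof goes in exactly the opposite direction, and the slack it exploits is precisely the one you observed but did not use: the balls centered at the unique source give $\size{S \cap \cup_{j=0}^i V_j} \leq i$ for all $i$, so whenever some layer $V_j$ contains two $S$-vertices, some \emph{earlier} layer $V_i$ with $i<j$ must be empty of $S$. The paper then shifts $S$-vertices upward into that empty layer, one layer at a time, and the ``real work'' you defer is done by two concrete arguments your proposal has no substitute for: (a) existence of a vertex $v_i \in V_i$ with no in-neighbor in $S$ (proved by a counting argument — if the unique $S$-vertex $s_{i-1}\in V_{i-1}$ dominated all of $V_i$, then $B^+_{j-(i-1)}(s_{i-1})$ would contain $j-(i-1)+1$ vertices of $S$, violating the multipacking property), and (b) a verification that swapping $s_{i+1}$ out and $v_i$ in preserves all ball constraints, using that the intermediate layers carry exactly one $S$-vertex each. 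Your proposed potential functions would also need adjustment, since the paper's Case-2 swaps preserve $\sum_i \binom{|S\cap V_i|}{2}$ and termination instead rests on the smallest doubled index $j$ strictly increasing. In short: the correct exchange pushes vertices toward the source, into a vacancy guaranteed by the source-ball inequality; pushing toward the sinks is both unjustified and, in concrete instances, impossible.
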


\begin{proof}
    Let $S \subseteq V$ be a multipacking of $D$ of maximum size. 
    By definition of a multipacking, considering each ball centered at the source $s$, the following holds for every $1\leqslant i \leqslant t$:
    \begin{equation}
    \label{eq:mp:sld}
        \size{S \cap \cup_{j =0}^i  V_j} \leq i
    \end{equation}
    We will prove the result inductively, by locally modifying $S$ in a top-down manner until it has the desired property. Let $j\geq 2$ be the smallest index such that $\size{S \cap V_j} \geq 2$, and $i < j$ be 
    the largest index such that $\size{S \cap V_i} = 0$. Notice that $i$ is well-defined due to~\eqref{eq:mp:sld}. Moreover, let $s_j^1$ and $s_j^2$ be two vertices of $S \cap V_j$. 
    
    \medskip
    
    \noindent \textbf{Case 1.} We assume first that $i = j-1$. Let $u_i^1$ and $u_i^2$ be vertices of $V_i$ such that 
    $u_i^1s_j^1$ and $u_i^2s_j^2$ belong to $A$ (note that in a layered DAG every non-source vertex has a predecessor in the previous layer). Since $S$ is a multipacking, 
    we have $u_i^1 \neq u_i^2$ and neither $u_i^1$ nor $u_i^2$ is adjacent 
    to both $s_j^1$ and $s_j^2$. Moreover, a vertex $s_{i-1}$ in $S \cap V_{i-1}$ cannot be 
    adjacent to both $u_i^1$ and $u_i^2$, since otherwise we would have $\size{B^+_2(s_{i-1})\cap S} > 2$. Moreover by minimality of the index $j$, there is at most one vertex of $S$ in $V_{i-1}$.
    Assuming w.l.o.g. that $u^1_i$ has no predecessor in $S$, the set 
    $(S \setminus \{s_j^1\}) \cup \{u_i^1\}$ is a multipacking having the same size than $S$. 
    
    \medskip
    
    \noindent \textbf{Case 2.} We now consider the case where $i < j-1$. First, we will prove that there is a vertex $v_i$ in $V_{i}$ with no in-neighbor in $S$. If $S \cap V_{i-1} = \emptyset$, any vertex of $V_i$ can be chosen as vertex $v_i$. 
    Otherwise, by choice of $j$ we have $\size{S \cap V_{i-1}}=1$. Assume $S \cap V_{i-1} = \{s_{i-1}\}$. %
    We claim that $s_{i-1}$ is not 
    adjacent to every vertex of $V_i$. Assume for a contradiction that this is the case. 
    This means that $s_{i-1}$ is within distance $j-(i-1)$ of every vertex contained in $\cup_{l=i}^j V_l$. 
    By the choice of indices $i$ and $j$ we know that $\cup_{l=i}^j V_l$ contains at least 
    $j-(i-1)$ vertices from $S$, which in turn implies that $\size{B^+_{j-(i-1)}(s_{i-1})\cap S} = j-(i-1)+1$, 
    contradicting~\eqref{eq:mp:sld}. 
    Thus, there is a vertex $v_i$ in $V_i$ that has no in-neighbor in $S$. 
    Now, we know by choices of $i$ and $j$ that $|S \cap V_p| = 1$ for $i < p < j$. Hence the set 
    $(S \setminus \{s_{i+1}\}) \cup \{v_i\}$, where $\{s_{i+1}\} = S \cap V_{i+1}$, is a multipacking of $D$ 
    having the same size than $S$. %
    By iterating the above argument, we end up with $i = j-1$, in which case we can apply the argument from Case~1. 
    Overall, after each iteration of Case~1, $j$ strictly increases. %
    The procedure terminates when the value of $j$ reaches $t$.
\end{proof}

\subsection{Hardness results}\label{sec:hard-MP}

\begin{theorem}
\label{thm:dmp:npc}
\MP{} is NP-complete, even for planar layered DAGs of maximum degree~$3$ and maximum finite distance~$3$.
\end{theorem}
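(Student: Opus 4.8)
The plan is to prove NP-hardness of \MP{} by reducing from \textsc{Exact Cover by $3$-Sets} (restricted so that the incidence bipartite graph is planar and each element appears in at most three triples, exactly the variant used in Theorem~\ref{thm:dbds:npc}), since \MP{} is the dual of \BD{} and the same input restrictions should yield planarity and bounded degree. Membership in NP is immediate: a candidate multipacking $S$ can be verified in polynomial time by checking, for every vertex $v$ and every relevant radius $d\leq\mfd(D)$, that $|B_d^+(v)\cap S|\leq d$; since $\mfd(D)=3$ here, only finitely many balls per vertex need checking. So the core of the work is the construction and the equivalence argument.

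First I would build a layered DAG on essentially the incidence structure of $(X,\mathcal T)$. The natural starting point is to place the triples and the elements on consecutive layers and to exploit the fact that a multipacking allows at most $d$ selected vertices within directed distance $d$. The key tension I want to engineer is that selecting a triple-vertex into $S$ should ``block'' the selection of the three element-vertices it reaches (because a single source or a bounding gadget vertex sees too many members of $S$ within a small ball), so that a large multipacking is forced to choose the selected triples to form an exact cover. To keep the maximum degree at~$3$ (as the statement demands, one less than in the \BD{} reduction) and the maximum finite distance at~$3$, I expect to need a few auxiliary layers/gadgets: each element appears in at most three triples, so an element-vertex can have in-degree up to~$3$, which already saturates the degree bound, meaning every element-vertex must have out-degree~$0$ (be a sink) and each triple-vertex must be arranged so its total degree stays within~$3$. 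This will likely force me to subdivide or to attach short ``pendant paths'' to control both the balls and the degrees, while Lemma~\ref{lemm:MPlongpath} warns me that any path of $3k-3$ vertices already gives a multipacking of size $k$, so I must keep all directed paths short (consistent with $\mfd=3$).

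The equivalence argument is where I would invoke the structural lemmas proved just above. Using Lemma~\ref{lem:mp:sources} I may assume the maximum multipacking contains every source, and using Lemma~\ref{lem:mp:sld} (for single-sourced layered DAGs) or its idea I would normalize the solution so that at most one selected vertex lies in each layer; if the construction is not single-sourced I would instead argue directly that a maximum multipacking has a canonical ``one per gadget'' shape. For the forward direction, given an exact cover $\mathcal S$ I would select the $k$ triple-vertices of $\mathcal S$ together with a fixed set of the forced source/pendant vertices, and verify the ball constraints hold because an exact cover hits each element exactly once, so no ball around an element-vertex or gadget vertex ever overshoots its radius bound. For the backward direction, from a multipacking of size $k'$ I would argue that the radius-constraints at the element-vertices (each seeing its incident triples within distance~$1$ or~$2$) force the selected triples to be pairwise element-disjoint and to cover all of $X$, i.e.\ to form an exact cover.

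The main obstacle I anticipate is the simultaneous satisfaction of all three restrictions---planarity, maximum degree exactly~$3$, and maximum finite distance exactly~$3$---while still making the multipacking bound $d$ do the combinatorial work of enforcing exact (not merely fractional or approximate) covering. The degree bound of~$3$ is the tightest constraint, because element-vertices already use up their entire degree budget on incoming triple-arcs, leaving no room for the auxiliary counting gadgets one would naturally attach; reconciling this with the need for small balls that correctly count $|B_d^+(v)\cap S|$ is the delicate part, and I expect the construction to require carefully placed subdivision vertices (as in Construction~2 of Theorem~\ref{thm:dbds:w}) whose sole purpose is to regulate distances and degrees without disturbing planarity.
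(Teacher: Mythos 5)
There is a genuine gap: you never produce a construction, and the source problem you chose works against the constraints you must meet. The paper does not reduce from \textsc{Exact Cover by $3$-Sets}; it reduces from \textsc{Independent Set} on planar cubic graphs, which is the natural fit because \MP{} is itself a packing problem. Concretely, the paper subdivides each edge $e_i=uv$ by a vertex $e_i^2$ with arcs $e_i^2u$ and $e_i^2v$, attaches a pendant source $e_i^1$ with arc $e_i^1e_i^2$, and sets $k'=m+k$. Lemma~\ref{lem:mp:sources} forces all $m$ sources into a maximum multipacking, which excludes every $e_i^2$, and then the radius-$1$ ball at $e_i^2$ (namely $|N^+[e_i^2]\cap S|\leq 1$) encodes exactly the non-adjacency of $u$ and $v$; degree~$3$ and planarity are inherited from the cubic planar instance because each subdivision vertex has in-degree~$1$ and out-degree~$2$. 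Your X3C plan cannot reproduce this cleanly: a triple-vertex must reach three elements, so it has out-degree~$3$, and with any incoming arc (e.g.\ the pendant source you would need to force it or count against it) its degree is at least~$4$. You flag this yourself but never resolve it; splitting the triple-vertex into a degree-$3$ tree introduces intermediate vertices that are themselves eligible for the multipacking, and you give no accounting that prevents them from inflating the solution size and destroying the equivalence.

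There is also a directional error in your backward argument. You correctly deduce that element-vertices must be sinks (their in-degree budget is exhausted by incident triples), but then you claim the equivalence follows from ``radius-constraints at the element-vertices (each seeing its incident triples within distance~$1$ or~$2$).'' Balls in this paper are out-balls $B^+_d(v)$: a sink sees nothing, so element-vertices impose no constraints at all; the constraints live at the triple-side ancestors that reach the elements. Finally, even granting a fixed construction, you never articulate why the packing-type constraints of \MP{} enforce the \emph{covering} requirement of X3C. The only route is the counting identity that $k$ pairwise disjoint triples over a universe of size $3k$ automatically cover everything (the same observation the \BD{} reduction uses), but your write-up never states it, and without it the claim that a large multipacking ``forces an exact cover'' is unsupported. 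One further small point: the construction in the paper is a three-layer DAG, so its maximum finite distance is in fact~$2$, matching the abstract; nothing in a correct proof should need distance-$3$ gadgetry.
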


\begin{proof}
We provide a reduction from the NP-complete \textsc{Independent Set} problem~\cite{GJ79}, 
    which remains NP-complete on planar cubic graphs~\cite{GJS76}.
    
\Pb{\textsc{Independent Set}}
{A graph $G=(V,E)$, an integer $k\in\mathbb{N}$.}
{an independent set of $G$ of size at most $k$}
{Question}
    
The construction of the instance $(D = (V',A'), k')$ of \MP{} 
is done by setting $V' = E_1 \cup E_2 \cup V$ where $E_1 = \{e_1^1, \ldots, e_m^1\}$ and 
$E_2 = \{e^2_1, \ldots, e^2_m\}$ are two copies of $E$. 
We add an arc $e^1_ie^2_i$ for every $1 \leqslant i \leqslant m$, and two arcs from $e_i^2$ 
to the corresponding vertices $u$ and $v$ in $V$ (where $e_i=uv$).

%See Figure~\ref{fig:mp:npc} for an illustration.
%
%\begin{figure}[ht!]
%    \centerline{\includegraphics[scale=1.5,angle=90,origin=c]{npc-mp}}
%    \caption{Sketch of the construction of the digraph $D$ in the proof of Theorem~\ref{thm:dmp:npc}, for an edge $e_i=uv$.}\label{fig:mp:npc}
%\end{figure}

Formally: 
    $$A' =  \{e^1_ie^2_i:\ 1 \leqslant i \leqslant m \} \cup \{e^2_iu,\ e^2_iv:\ 1 \leqslant i \leqslant m\ \textstyle{and}\ e_i = uv\}$$

\fullversion{It is clear here that $D$ is a layered DAG with three layers and thus, $\mfd(D)=2$. }
\noindent This reduction can also be seen as follows: 
given any instance of \textsc{Independent Set}, we subdivide each edge $uv$ by adding 
a new vertex $w$ with $wu, wv \in A$ and a pending 
source seeing $w$. Doing so, most properties of the given instance (such as planarity and maximum degree) are preserved. One can see that the graph $G$ has an independent set of size $k$ if and only if the digraph $D$ has a multipacking of size $k'= m+k$. %

    $\Rightarrow$ Let $S$ be an independent set of $G$ of size $k$, and let $S' = E_1 \cup S$. First, $S'$ is of size $m+k$. Then, for any $e_1\in E_1$, $|N^+[e_1]\cap S'| =1$ and $|B_2^+(e_1) \cap S'| \leq 2$ hold since $S$ is an independent set.
    By similar arguments, $|N^+[e_2]\cap S'| \leq 1$ holds for any $e_2 \in E_2$, and thus no vertex 
    of $E_2$ can have two out-neighbors in $S$. 
    All other vertices of $D$ are sinks (i.e. with empty out-neighborhood), so the multipacking property is trivially satisfied for them. Thus $S'$ is a multipacking of $D$ of size $m+k$. \\
    $\Leftarrow$ Let $S$ be a multipacking of maximum size in $D$, such that $|S| \geq m+k$.
    Each vertex of $E_1$ is a source of $D$, so by Lemma~\ref{lem:mp:sources} we can assume that $E_1 \subseteq S$ and 
    then $E_2 \cap S = \emptyset$. So $S \setminus E_1 \subseteq V$, and its size is at least $k$. Assume $S$ contains two vertices $u,v$ of $V$ that are adjacent in $G$, then $\size{N^+[e_i^2] \cap S}\geq 2$ with $e^2_i = uv$, which contradicts the fact that $S$ is a multipacking of $D$. Thus $S\setminus E_1$ is an independent set of $G$ of size at least $k$.
 \end{proof}

\begin{remark}\label{rem:mp-eth}
\MP{} can be solved in time $O^*(2^n)$ by trying all subsets of vertices as a solution. 
By observing that the reduction of Theorem~\ref{thm:dmp:npc} from \textsc{Independent Set} is linear and that it is unlikely to obtain a subexponential algorithm for \textsc{Independent Set} under the ETH~\footnote{The Exponential Time Hypothesis (ETH) assumes that there is no algorithm solving 3-SAT in time $2^{o(n)}$, where $n$ is the number of variables in the formula.}~\cite[Corollary 11.10]{FominK10}, a subexpontential algorithm is also unlikey for \MP{} under the ETH.
\end{remark}

\begin{theorem}
\label{thm:dmp:sdnp}
 \MP{} is NP-complete on single-sourced DAGs of maximum degree~$5$. %
\end{theorem}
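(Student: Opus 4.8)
The plan is to reduce from \textsc{Independent Set} (on planar cubic graphs), as in Theorem~\ref{thm:dmp:npc}, but to modify the construction so that all the source vertices of $E_1$ are funneled through a single common source, thereby making the DAG single-sourced while keeping the maximum degree bounded by a constant. The main tension is that in the reduction of Theorem~\ref{thm:dmp:npc} each edge-gadget came with its own pendant source (the vertices of $E_1$), giving $m$ sources; to obtain a single source we must add a gadget that connects one new top vertex $r$ down to each $e_i^1$, while ensuring (i) the out-degree of $r$ stays bounded and (ii) the multipacking correspondence is preserved.

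First I would introduce a new single source $r$ together with a small branching structure (for instance a bounded-out-degree tree or a short path of ``distributor'' vertices) whose leaves are exactly the vertices $e_1^1,\dots,e_m^1$. To keep the maximum degree at most~$5$, I would not let $r$ dominate all $e_i^1$ directly; instead I would build a binary-type distribution tree of bounded degree so that each internal node has out-degree at most a fixed constant and the $e_i^1$'s sit at its leaves. The key subtlety is that adding such a tree creates many new internal vertices lying on directed paths, and the multipacking balls $B^+_d(\cdot)$ now see these extra vertices, so the counting argument from Theorem~\ref{thm:dmp:npc} must be re-examined. I expect to handle this by padding each branch of the tree with enough subdivision vertices (long directed paths) so that the distances from $r$ to the different $e_i^1$ are all equal and large; this equalizes the layers and lets me continue to treat the bottom part ($E_1\cup E_2\cup V$) exactly as before while the tree contributes a fixed, predictable number of forced multipacking vertices coming from Lemma~\ref{lemm:MPlongpath}.

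Next I would fix the target value $k'$ and prove the two directions. For the forward direction, given an independent set $S$ of $G$ of size $k$, I would build a multipacking consisting of a canonical selection in the distribution tree (every third vertex along each subdivided branch, as in Lemma~\ref{lemm:MPlongpath}), together with the $E_1$-vertices and the independent set $S\subseteq V$ exactly as in Theorem~\ref{thm:dmp:npc}; I would then verify that the ball constraints $|B^+_d(v)\cap S'|\le d$ hold for every vertex $v$ and every $d$, the only nontrivial checks being at the vertices $e_i^2$ (where independence of $S$ is exactly what prevents two out-neighbors from both being selected) and at the branching vertices of the tree (where the long subdivided paths guarantee the balls are sparse). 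For the reverse direction, I would use Lemma~\ref{lem:mp:sources} to assume the single source $r$ is in the multipacking, argue that a maximum multipacking uses the forced canonical number of vertices in the tree and in $E_1$, and then show that whatever remains in $V$ must be an independent set of $G$ by the same $e_i^2$-argument as before.

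The hard part will be the degree bookkeeping and the distance control simultaneously: I must design the distribution gadget so that the source has out-degree at most a constant, every gadget vertex has total degree at most~$5$, and yet the distances are arranged so the multipacking counting in the bottom three layers is unaffected. Concretely, the obstacle is that equalizing the $r$-to-$e_i^1$ distances with subdivided branches inflates the number of forced multipacking vertices, so I must compute this forced count exactly and fold it into $k'$ so that the equivalence ``$G$ has an independent set of size $k$ iff $D$ has a multipacking of size $k'$'' comes out clean. Once the gadget parameters are pinned down, the verification of the multipacking inequalities reduces to local checks that are routine but must be carried out carefully at the junction between the tree and the edge-layer $E_1$.
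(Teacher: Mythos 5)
Your plan has a genuine gap, and it sits exactly where you located the ``hard part'': the distribution gadget. In a single-sourced digraph the multipacking constraint at the source $r$ reads $\size{B^+_d(r) \cap S'} \leqslant d$ for \emph{every} $d$, and this caps the \emph{total} number of selected vertices across \emph{all} branches within distance $d$, not per branch. Your canonical selection -- every third vertex along each padded branch, plus all of $E_1$, plus the independent set -- is therefore infeasible on any genuinely branching tree: if $w$ branches run in parallel at some depth, they jointly contribute selection density about $w/3$ per distance unit to the balls centered at $r$ (or at the last branching node above them), which violates the constraint as soon as $w \geqslant 4$; in particular a ``binary-type'' tree with $m$ leaves cannot carry independent per-branch selections, and taking all $m$ vertices of $E_1$ at a common depth $L$ already forces $L \geqslant m$ with essentially nothing else selectable above them. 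Padding does not rescue this; it only shows that the distribution structure must have bounded width, i.e.\ it must be serialized into a path-like chain. That is precisely what the paper does: it abandons the $E_1/E_2$ layout of Theorem~\ref{thm:dmp:npc} and instead builds one gadget $\{u_i,v_i,w_i,x_i,y_i,z_i\}$ per edge $e_i$, chained in series via arcs $z_iu_{i+1}, z_iv_{i+1}$ below a single source $s$, engineered so that each gadget can contribute at most $2$ multipacking vertices per $3$ distance units (target $k' = k + 2m + 1$), with the vertices of $V$ hung off the chain so that the extra $+k$ is attainable exactly when $G$ has an independent set of size $k$.

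A second, related gap is in your reverse direction. Once the $e_i^1$ are no longer sources, Lemma~\ref{lem:mp:sources} no longer lets you assume $E_1 \subseteq S'$, so the accounting for the bottom part cannot be ``exactly as before''; and the heavy padding you introduce creates slack that an adversarial maximum multipacking can exploit (e.g.\ dense selections along the long subdivided spine that ignore the structure of $G$ entirely -- note that a directed path actually admits selection density $1/2$, not just the $1/3$ of Lemma~\ref{lemm:MPlongpath}, so your ``forced canonical count'' undercounts what the tree region alone can absorb). The paper controls this by observing that $D[V' \setminus V]$ is a single-sourced \emph{layered} DAG, invoking Lemma~\ref{lem:mp:sld} to restrict attention to multipackings with at most one vertex per layer, and then running a per-gadget case analysis showing each block $W_i$ holds at most $2$ selected vertices (plus $1$ for $\{s,p\}$), so at most $2m+1$ vertices lie outside $V$ and at least $k$ must come from $V$, where the $x_i$-argument forces independence. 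Without an analogue of this layer-wise normalization and a gadget whose maximum contribution is tight, your equivalence does not come out clean.
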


\begin{proof}
    We provide a reduction from \textsc{Independent Set} problem~\cite{GJ79}, which remains NP-complete for cubic graphs~\cite{GJS76}.  
    %where one is given an 
    %undirected graph $G = (V,E)$ and seeks an independent set $S \subseteq V$ of size at least $k$.  
We %consider some order $\{e_1, \ldots, e_m\}$ on the edges of $G$ and 
define the function $f:V\rightarrow E$ such that for $v\in V$, $f(v)=e_i$ if and only if $e_i$ is the first edge in which $v$ appears (recall that $E = \{e_1, \ldots, e_m\}$). We create the digraph $D = (V',A)$ as follows (see Figure~\ref{fig:dmp:npcslay2}): 
\begin{align*}
    V' = & \{u_i,v_i,w_i,x_i,y_i,z_i:\ 1 \leqslant i \leqslant m \} \cup V \cup \{s,p\} \\
    A = & \{u_iw_i, u_ix_i:\ 1 \leqslant i \leqslant m\} \cup \{v_ix_i:\ 1 \leqslant i \leqslant m\}\cup \{w_iy_i,w_iz_i:\ 1 \leqslant i \leqslant m\}\bigcup\\  
      &    \{z_iu_{i+1},z_iv_{i+1}:\ 1 \leqslant i \leqslant m-1\}\cup\{x_iu,x_iv:\ 1 \leqslant i \leqslant m \text{ and } e_i = uv\} \bigcup \\
      &    \{u_iu:\ 1 \leqslant i \leqslant m \text{ and } f(u)=e_i \}\cup \{sp,pu_1,pv_1\}
\end{align*}

\begin{figure}[ht!]
\centering
\includegraphics[scale=1.5]{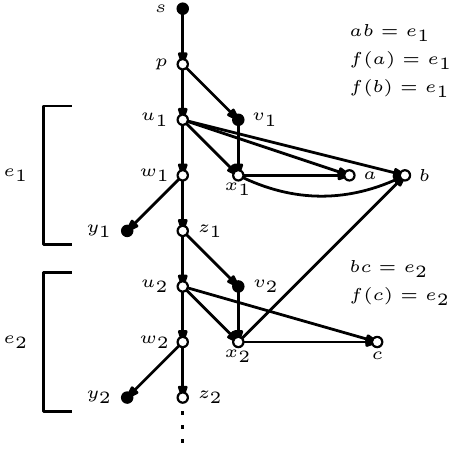}
\caption{Sketch of the construction in the proof of Theorem~\ref{thm:dmp:sdnp} for edges $e_0 = ab$ and $e_1 = bc$ with $f(a) = f(b) = e_0$ and $f(c) = e_1$.}
\label{fig:dmp:npcslay2}
\end{figure}

\begin{claim}
    The graph $G$ has an independent set of size $k$ if and only if the digraph $D$ 
    has a multipacking of size $k' = k + 2m+1$.
\end{claim}

\begin{proofclaim}
    $\Rightarrow$ Let $S$ be an independent set of size $k$ of $G$. We set $S' = \{s\}\cup \{v_i,y_i:\ 1 \leqslant i \leqslant m\}\cup S$. We need to show that $S'$ is a multipacking of $D$. Notice 
    first that $S'$ contains exactly $2m + k + 1$ vertices. 
    %Although $D$ is not a layered DAG, the subdigraph of $D$ induced by $V'\setminus V$ is a layered DAG. We consider these layers, and we also add every vertex $v\in V$ to every layer in which we find $w_j,x_j$ (for every $j$ such that $v$ is an endpoint of $e_j$). Thus these new ''layers'' are not vertex-disjoint.
    The vertices $s$ and $p$ satisfy the multipacking property since there is at most one vertex of $S'$ 
    at distance exactly $i$ from both these vertices, for any $i$ (and there is no vertex of $S'$ at distance~$1$ from $s$ 
    and none at distance~$0$ from $p$). Each vertex of $V$ and each vertex $y_i$ trivially satisfies the 
    multipacking property since they are sinks. 
    For $1\leqslant i \leqslant m$, notice that $x_i$ cannot have two out-neighbors in $S'$ 
    since $S$ is an independent set. Hence, $x_i$ and $v_i$ satisfy the multipacking property, since for the latter 
    $B^+_d(v_i) = \{v_i,x_i,u,v\}$ where $d$ is the maximum finite distance in $D$, $uv=e_i$, and $N^+[v_i] = \{v_i,x_i\}$. 
    %Finally,     $y_i$ satisfies trivially the multipacking property since it is a sink. 
    Moreover, one can see that $w_i$ satisfies the multipacking property if and only if $z_i$ satisfies it and that $z_i$ satisfies the multipacking property if and only if $u_{i+1}$ satisfies it ($z_m$ is a sink, hence satisfies the multipacking property).
    We can notice that $B^+_d(u_i) = B^+_d(w_i) \cup \{x_i,u_i\}\cup V(e_i)$. We have $\size{S\cap (\{x_i,u_i\}\cup V(e_i))} \leqslant 1$, and the fact that for every other vertex $t$ of $B^+_d(u_i)$, $d(u_i,t)=d(w_i,t)+1$. So if $w_i$ satisfies the property, then $u_i$ also does. This means that $z_{i-1}$ satisfies it, and thus that $w_{i-1}$ does as well. Using this, and the fact that $z_m$ satisfies the property, we get by induction that for every~$i$, $\{u_i,w_i,z_i\}$ satisfy the property.\\ 
    %One can easily see that there is still at most one vertex of $S$ per above-defined layer, which gives us the fact that $u_i$ satisfies the multipacking property.  Indeed, when considering 
    %the layers, the maximal ball $B^+_d(u_i)$ centered at $u_i$ contains only vertices in the lower layers (and it never skips any). \\
    $\Leftarrow$ Let $S$ be a multipacking of size $k'$ of $D$. First, notice that if $M$ is a multipacking of any digraph $H$, then for any subdigraph $H'$  of $H$, $M\cap V(H')$ is a multipacking of $H'$.
    Notice also that $H=D[V'\setminus V]$ is a single-sourced layered DAG. Let $S'$ be a multipacking of $H$
    of maximum size. Using Lemma~\ref{lem:mp:sld}, we can assume that $S'$ contains at most one vertex per
    layer. For any given $1\leqslant i\leqslant m$, we are going to prove that for
    $W_i=\{u_i,v_i,w_i,x_i,y_i,z_i\}$, $\size{S'\cap W_i}\leqslant 2$. We can see that $S'\cap \{u_i,v_i\}$ is
    either empty (which is sufficient to conclude since there remains only two distinct nonempty layers of $D'$ in $W_i$), or $S'\cap \{u_i,v_i\}=u_i$ (then $S'\cap
    \{w_i,x_i\}=\emptyset$, which again is enough to conclude), or $S'\cap \{u_i,v_i\}=v_i$. In the latter case,
    either $S'\cap \{w_i,x_i\}=w_i$, which implies that $S'\cap \{y_i,z_i\}=\emptyset$ or $S'\cap
    \{w_i,x_i\}=\emptyset$. In both cases, we get that $\size{S'\cap W_i}\leqslant 2$. One can also easily see
    that both $s$ and $p$ cannot be together in $S'$.
    Thus, the maximum size of a multipacking of $D'$ is $2m+1$. 
    
    Thus $\size{S\cap 
    (V'\setminus V)}\leqslant 2m+1$, and $\size{S\cap V}\geqslant k$. We also know that for $a,b \in S 
    \cap V$, $ab \notin E$, otherwise there would exist an edge $e_i = ab$ and thus $N^+[x_i] \cap S$ 
    would be of size at least $2$. So we can conclude that $S\cap V$ is an independent set of $G$ of size 
    at least $k$.    
    %\todo[inline]{We use the algorithm provided for single-source layered dags, so we need to put this after it, but it doesn't help with the structure....}
\end{proofclaim}
This completes the proof.
\end{proof}

\begin{theorem}
\label{thm:dmp:w}
\MP{} parameterized by solution size $k$ is W[1]-hard, even on digraphs of maximum finite distance~$3$.
\end{theorem}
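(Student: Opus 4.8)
The plan is to give an \FPTR{} from the W[1]-hard \MIS{} problem (multicolored independent set), analogous to how the W[2]-hardness of \BDS{} was obtained from \MDS{} in Theorem~\ref{thm:dbds:w}. Recall that \MIS{} asks, given a graph $G$ with $V(G)$ partitioned into $k$ color classes $V_1,\ldots,V_k$, whether there is an independent set $S$ of $G$ with $\size{S\cap V_i}=1$ for each $i$; this is W[1]-hard parameterized by $k$. Since the target multipacking size must be governed by a function of $k$ alone, every vertex we intend to force into any maximum multipacking should be a source, so that Lemma~\ref{lem:mp:sources} lets us assume all of them are selected for free, leaving the genuine choice to the $k$ color classes.

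First I would build the digraph $D$ as follows. For each color class $V_i$ I introduce one source $s_i$, and for each vertex $v\in V_i$ a gadget vertex $v$ that is an out-neighbor of $s_i$, arranged so that selecting exactly one representative per class is forced: the balls centered at the $s_i$ should make $\size{B^+_{?}(s_i)\cap S}$ saturate precisely when one class-$i$ gadget vertex is chosen, mirroring the role of the $E_1$-sources in Theorem~\ref{thm:dmp:npc}. Second, to encode independence I would, for every edge $uv$ of $G$, add an edge-gadget terminating in a common sink reachable from both $u$ and $v$ at the same directed distance, exactly as the vertex $e_i^2$ with two out-neighbors $u,v$ forbids selecting both endpoints in Theorem~\ref{thm:dmp:npc} (there $\size{N^+[e_i^2]\cap S}\geq 2$ is the violation). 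By routing these edge-gadgets through one extra layer I would keep $\mfd(D)=3$, which is the distance bound claimed in the statement. I would set the target multipacking size $k'$ to $k$ plus the (fixed, computable) number of forced sources.

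For correctness ($\Rightarrow$), given a multicolored independent set $S$ I take all the forced sources together with the $k$ chosen representatives; I verify the multipacking constraints ball by ball, the only nontrivial balls being those whose saturation would require either two representatives in one color class (excluded since $\size{S\cap V_i}=1$) or two adjacent chosen vertices (excluded by independence, via the edge-gadget sinks). For ($\Leftarrow$), given a multipacking of size $k'$ I first invoke Lemma~\ref{lem:mp:sources} to assume it contains all sources, then argue that the ball constraints at the $s_i$ force at most one gadget vertex per class and hence exactly $k$ non-source vertices, one per class; finally the edge-gadget constraint $\size{N^+[\cdot]\cap S}\le 1$ shows these $k$ vertices are pairwise nonadjacent, giving a multicolored independent set.

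The main obstacle I expect is tuning the distance structure so that the "one per color class'' constraint and the "$\mfd=3$'' bound hold simultaneously: enforcing a choice among $\size{V_i}$ candidates through a single source, while keeping all relevant balls saturating at small radii and not accidentally creating a long shortest path (which, by Lemma~\ref{lemm:MPlongpath}, would inflate the multipacking number and break the size accounting). Getting the ball sizes $\size{B^+_d(s_i)\cap S}$ to equal exactly $d$ at the critical radius — rather than exceeding it and thereby over-constraining, or falling short and thereby allowing spurious extra selections — is the delicate part, and will likely require a careful layered layout of the color gadgets together with a precise count of the forced sources feeding into $k'$.
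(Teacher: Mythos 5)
You correctly identify the paper's source problem (\MIS{}) and the general shape of its reduction (class sources $s_i$, per-edge independence gadgets, Lemma~\ref{lem:mp:sources} to force sources into the solution), but there is a genuine gap exactly at the point where the NP-hardness construction of Theorem~\ref{thm:dmp:npc} cannot be recycled: the parameter accounting for the edge gadgets. In Theorem~\ref{thm:dmp:npc} the target is $k'=m+k$ because each of the $m$ edges contributes one forced source; for an \FPTR{} the new parameter must be bounded by a function of $k$ alone, so your ``$k$ plus the (fixed, computable) number of forced sources'' only works if that number depends on $k$ --- with one gadget per edge it is $m$, and the reduction fails as a parameterized reduction. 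If instead you give the edge gadgets no sources, the edge vertices become mutually unconstrained: with no arcs among them, every out-ball contains at most one of them, so all $m$ of them together already form a multipacking and every instance with $m\geq k'$ is trivially a YES-instance. The paper's key device, which your plan is missing, is to group the edge vertices into $\binom{k}{2}$ \emph{bidirected cliques} $E_{i,j}$, one per pair of color classes: a bidirected clique admits at most one multipacking vertex (the closed out-neighborhood of any member contains the whole clique), and in the forward direction one arbitrary $e^*_{i,j}$ per clique can always be taken, so the edge side contributes exactly $\binom{k}{2}$ and the target becomes $k'=2k+\binom{k}{2}$, a function of $k$. (Note these cliques introduce directed $2$-cycles, which is precisely why the paper leaves W[1]-hardness without $2$-cycles as an open question.)

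Two further details in your sketch would fail as written. First, the gadget direction is inverted: multipacking constraints come from out-balls $B^+_d(\cdot)$, so a ``common sink reachable from both $u$ and $v$'' constrains nothing (its out-ball is just itself); you need a vertex with directed paths \emph{to} both endpoints --- your parenthetical citing $\size{N^+[e_i^2]\cap S}\geq 2$ has the mechanism right, but the construction sentence contradicts it. Second, choosing the class-$i$ representative among the out-neighbors of $s_i$ is inconsistent with keeping $s_i$ in the solution: once $s_i\in S$, any selected out-neighbor gives $\size{B^+_1(s_i)\cap S}\geq 2$, a violation. The paper resolves this with a duplication layer $V'_i$: the radius-$1$ ball at $s_i$ forces $S\cap V_i=\emptyset$, and the representatives are the copies $v'_i$ at distance~$2$, so the critical ball saturates as $\size{B^+_2(s_i)\cap S}=2$. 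Your ``$B^+_?$'' placeholder is exactly where this extra layer is required, and it is also what lets the backward counting argument go through ($\size{S'\cap V'_i}\leq 1$ at radius~$2$, one vertex per clique at radius~$1$, hence equality everywhere and independence enforced via the distance-$2$ balls at the edge vertices).
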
\label{thm:mp:whard}

\begin{proof}

We provide an \FPTR{} from \MIS{}, which is W[1]-hard when parameterized by $k$~\cite{CFK+15}.

\Pb{\MIS{}}
{A graph $G=(V,E)$ with $V$ partitioned into sets %$k$ sets 
$\{V_1, \ldots, V_k\}$, $k \in \mathbb{N}$.}
{an independent set $S$ of $G$ s.t. $|S \cap V_i |=1$ for $1 \leqslant i \leqslant k$}
{Question} 

%where one is given a graph $(G = (V,E), k)$ of \MIS{}, with $V$ partitioned into $\{V_1, \ldots, V_k\}$, and seeks an independent set of size $k$ such that $\size{S \cap V_i} = 1$ for $1 \leqslant i \leqslant k$. 

\noindent \textbf{Construction.} We construct an instance $(D = (V',A'), k')$ of \MP{} as follows. 
We consider the bipartite incidence graph of $G$, that is we add $V \cup E$ to $V'$. 
To construct $A'$, we add an arc from a vertex $e \in E$ to a vertex $v \in V$ if and only if $e$ contains $v$. 
We next group vertices of $E$ into ${k \choose 2}$ sets $E_{i,j}$, $1 \leqslant i < j \leqslant k$
according to the colors of their corresponding endpoints, and  
add every possible arc within each set $E_{i,j}$. 
We next duplicate the vertices of each set $V_i$ into a set $V'_i$ such that there is an arc from 
each vertex $v_i \in V_i$ to its corresponding copy $v'_i$ in $V'_i$. 
Finally, we add $k$ vertices $\{s_1, \ldots, s_k\}$ such that there is an arc from $s_i$ to 
every vertex of $V_i$. Notice in particular that the maximum finite distance is~$3$.

\fullversion{See Figure~\ref{fig:mp:w} for an illustration.}

\begin{figure}[ht!]
    \centerline{\includegraphics[scale=1.5]{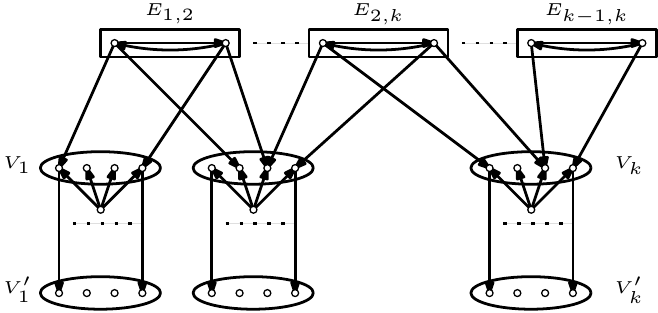}}
    \caption{Sketch of the construction of the digraph $D$ in the proof of Theorem~\ref{thm:mp:whard}.}\label{fig:mp:w}
\end{figure}

\begin{claim}
\label{claim:dmp:w}
    The graph $G$ has a multicolored independent set of size $k$ if and only if the digraph $D$ 
    has a multipacking of size $k' = 2k + {k \choose 2}$.
\end{claim}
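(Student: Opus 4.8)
The plan is to prove both implications by directly analysing, for each vertex of $D$, the multipacking constraint $|B^+_d(\cdot)\cap S|\le d$ on its out-ball. The whole argument hinges on the three ``layers'' hanging below a source $s_i$ (namely $V_i$ at distance~$1$ and $V'_i$ at distance~$2$) and on the out-balls of the edge-vertices, whose distance-$2$ ball reaches exactly the copies of the two endpoints of the edge. Throughout I will use that $V'_i$ consists of sinks, that each $v\in V_i$ has the single out-neighbour $v'$, and that every $E_{i,j}$ is a bidirected clique so that $E_{i,j}\subseteq B^+_1(e)$ for every $e\in E_{i,j}$. I also assume (as may be arranged for the \MIS{} instance) that every pair of colour classes is joined by at least one edge, so that each $E_{i,j}$ is non-empty; this is precisely what makes the target value $2k+{k\choose 2}$ attainable.

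For the forward direction, given a multicoloured independent set $\{a_1,\dots,a_k\}$ with $a_i\in V_i$, I would set $S'=\{s_1,\dots,s_k\}\cup\{a_1',\dots,a_k'\}\cup\{e_{i,j}:i<j\}$, where $e_{i,j}$ is an arbitrary vertex of $E_{i,j}$; this has exactly $2k+{k\choose 2}$ vertices. The sinks $V'_i$ and the vertices of $V_i$ (whose out-ball is just $\{v,v'\}$) satisfy the constraint trivially, and for $s_i$ we have $B^+_1(s_i)\cap S'=\{s_i\}$ and $B^+_2(s_i)\cap S'=\{s_i,a_i'\}$, matching the bounds $1$ and $2$. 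The only non-trivial check is an edge-vertex $e=\{u,v\}\in E_{i,j}$ with $u\in V_i$, $v\in V_j$: here $B^+_1(e)\cap S'$ reduces to the chosen $e_{i,j}$, while $B^+_2(e)$ additionally reaches the copies $u',v'$, of which at most one lies in $S'$, since $u',v'\in S'$ would force $u=a_i$ and $v=a_j$, i.e. $a_ia_j\in E(G)$, contradicting independence. Hence $|B^+_2(e)\cap S'|\le 2$, and the distance-$3$ ball adds only $a_i',a_j'$, for a total of at most $3$.

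For the backward direction I would first show that $2k+{k\choose 2}$ is the maximum possible size. The vertex set partitions into the blocks $\{s_i\}\cup V_i\cup V'_i$ $(1\le i\le k)$ and the cliques $E_{i,j}$ $(i<j)$. From $B^+_2(s_i)=\{s_i\}\cup V_i\cup V'_i$ we get $|S\cap(\{s_i\}\cup V_i\cup V'_i)|\le 2$, and from $E_{i,j}\subseteq B^+_1(e)$ we get $|S\cap E_{i,j}|\le 1$; summing over all blocks gives $|S|\le 2k+{k\choose 2}$. Thus a solution of this size is a maximum multipacking and every inequality above is an equality. By Lemma~\ref{lem:mp:sources} I may assume $S$ contains all sources, in particular each $s_i$; combined with $|B^+_1(s_i)\cap S|\le 1$ this forces $V_i\cap S=\emptyset$ and exactly one copy $a_i'\in V'_i\cap S$, which defines a candidate vertex $a_i\in V_i$ for each colour, and forces exactly one vertex of $S$ inside each $E_{i,j}$.

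It remains to show that $\{a_1,\dots,a_k\}$ is independent, which is where I expect the only real subtlety. Suppose $a_ia_j\in E(G)$; then $e^{\ast}:=a_ia_j$ is a vertex of $E_{i,j}$, and its distance-$2$ out-ball contains the unique vertex of $E_{i,j}\cap S$ (reached in one step inside the clique) together with $a_i'$ and $a_j'$ (reached via $e^{\ast}\to a_i\to a_i'$ and $e^{\ast}\to a_j\to a_j'$). These three vertices are distinct and all lie in $S$, so $|B^+_2(e^{\ast})\cap S|\ge 3>2$, contradicting the multipacking property. Hence no two selected vertices are adjacent and $\{a_1,\dots,a_k\}$ is a multicoloured independent set. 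The main obstacle is really the bookkeeping of these out-balls --- in particular isolating the distance-$2$ ball of an edge-vertex as the gadget that encodes the edge constraint --- rather than any single difficult step; the non-emptiness of every $E_{i,j}$ is the one global assumption on which the exact target count relies.
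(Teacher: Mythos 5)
Your proof is correct and follows essentially the same route as the paper's: the same set $S'=\{s_1,\dots,s_k\}\cup\{a'_1,\dots,a'_k\}\cup\{e^*_{i,j}\}$ in the forward direction, Lemma~\ref{lem:mp:sources} plus the counting bounds $|S\cap(\{s_i\}\cup V_i\cup V'_i)|\leq 2$ and $|S\cap E_{i,j}|\leq 1$ in the backward direction, and the same distance-$2$ contradiction at the edge-vertex $a_ia_j$ to force independence. The only (welcome) difference is that you make explicit the assumption, left implicit in the paper, that every $E_{i,j}$ is non-empty so that the target value $2k+\binom{k}{2}$ is the right count.
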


\begin{proofclaim} $\Rightarrow$ Let $S = \{u_1, \ldots, u_k\}$ be an independent set of $G$ of size $k$ such that 
    $u_i \in V_i$ for every $1 \leqslant i \leqslant k$. Let $S' \subseteq V'$ 
    be a set that contains exactly one arbitrary vertex $e_{i,j}^*$ for every set $E_{i,j}$ ($1 \leqslant i < j \leqslant n$), 
    together with each vertex of $V'_i$ corresponding to each vertex $u_i$ of $S$. Finally, add 
    $\{s_1, \ldots, s_k\}$ to $S'$. 
    We claim that $S'$ is the sought multipacking of $D$. To see this, notice first that $\size{S'} = 2k + {k \choose 2}$ by construction. Moreover, 
    every vertex contains at most one vertex from $S'$ in its 
    closed out-neighborhood. We now prove that every vertex $e_{i,j} \in E_{i,j}$ contains at most two vertices 
    from $S'$ in $B^+_2(e_{i,j})$. Assume for a contradiction this is not the case; then, apart from $e_{i,j}^*$, there are two other vertices $a$ and $b$ in $B^+_2(e_{i,j})$. We have that $a \in V'_{i}$ and $b \in V'_{j}$. By 
    construction, this means that $ab$ is an edge of $G$, contradicting the fact that $S$ is an 
    independent set. Finally, since every vertex $s_i$ ($1 \leqslant i \leqslant k$) has  
    vertices from only one set $V'_i$ in its distance~$2$ neighborhood, and since $S$ is a multicolored set, 
    the result follows. The only vertices for which checking their distance~~$3$ neighborhood is needed are vertices from $E_{i,j}$ for every $1 \leqslant i < j \leqslant n$. One can notice that for any $e_{i,j} \in E_{i,j}$, $B^+_3(e_{i,j}) \subseteq E_{i,j} \cup V'_i \cup V'_j \cup V_i \cup V_j$, which contains at most $3$ vertices of $S'$ since $|S' \cap (V'_i \cup V'_j \cup V_i \cup V_j))| = 2$ and $|S' \cap E_{i,j}| = 1$ by construction. \\ %$S$ is a multicolored set.\\
    $\Leftarrow$ Assume that $D$ has a multipacking $S' \subseteq V'$ of size $k' = 2k + {k \choose 2}$. 
    By Lemma~\ref{lem:mp:sources}, we can assume that $S'$ contains $\{s_1, \ldots, s_k\}$. 
    In particular, this means that $S' \cap V_i = \emptyset$ for every $1 \leqslant i \leqslant k$.  
%    since otherwise for some vertex $s_i$ there would be two vertices from $S'$ in $B^+_2(s_i)$. 
    Moreover, at distance~$2$, we have $|S' \cap V'_i| \leqslant 1$ for $1 \leqslant i \leqslant k$ 
    since otherwise there would be three vertices from $S'$ in $B^+_2(s_i)$, for some vertex $s_i$. Moreover, for $1 \leqslant i < j \leqslant n$, $|E_{i,j} \cap S'| \leqslant 1$ since $E_{i,j}$ is a bi-directed clique.
    Thus, by the size of $S'$, the only possibilities are 
    to pick exactly one vertex in each set $V'_i$ and one vertex $e_{i,j}$ in each set $E_{i,j}$. 
    This can be done only if there exists a multicolored independent set of size $k$ in $G$: 
    otherwise one would have to select two vertices $a \in V_i$ and $b \in V_j$, $i \neq j$ such that 
    $ab \in E$, which in turn would imply that the vertex from $E_{i,j}$ corresponding to the edge $ab$ has three vertices in its
    distance-$2$ neighborhood (namely $e_{i,j}$, $a$ and $b$). 
\end{proofclaim}
Thus, the proof is complete.
 \end{proof}

\subsection{Algorithms}

Next, we present a linear-time algorithm.

\begin{theorem}
\label{lem:mp:polysld}
    \MP{} can be solved in linear time on single-sourced layered DAGs. 
\end{theorem}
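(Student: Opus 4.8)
The plan is to collapse the problem, using the structural results already proved, to a single clean combinatorial condition that can then be optimized by a left-to-right sweep over the layers. By Lemma~\ref{lem:mp:sld} there is a multipacking of maximum size with $|S\cap V_i|\leq 1$ for every layer, so I would restrict attention to such sets $S$ (and, by Lemma~\ref{lem:mp:sources}, assume the unique source belongs to $S$). The core of the argument is the following characterization, which I would establish first: \emph{a set $S$ with at most one vertex per layer is a multipacking of $D$ if and only if $S$ is an independent set of $D$}, i.e. no arc of $D$ joins two vertices of $S$. Granting this, Lemma~\ref{lem:mp:sld} gives that $\mpn(D)$ equals the maximum size of a one-vertex-per-layer independent set, which I would then compute in linear time.

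For the characterization, the forward direction is immediate: if an arc $uv$ has both endpoints in $S$, then $\{u,v\}\subseteq B^+_1(u)\cap S$, so $|B^+_1(u)\cap S|\geq 2$ and $S$ is not a multipacking. The reverse direction is where the layered structure is essential and is the step I expect to be the main obstacle. The key fact is that in a layered DAG a vertex $w\in V_{c+1}$ is reachable from $v\in V_c$ if and only if $vw\in A$, and more generally $d(v,w)$ equals the difference of layer indices whenever $w$ is reachable. Fix $v\in V_c$ and an integer $d\geq 1$; since $S$ meets each layer at most once, $B^+_d(v)\cap S$ contains at most one vertex from each of the $d+1$ layers $V_c,\dots,V_{c+d}$. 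If $v$ is not the vertex of $S$ in its own layer, then no vertex of $S\cap V_c$ is reachable from $v$, so the count is at most $d$. If $v$ \emph{is} that vertex, then the vertex of $S\cap V_{c+1}$ (if any) is non-adjacent to $v$ by independence, hence unreachable from $v$, so layer $V_{c+1}$ contributes nothing and the count is at most $1+(d-1)=d$. Either way $|B^+_d(v)\cap S|\leq d$, so $S$ is a multipacking; the source inequality~\eqref{eq:mp:sld} is recovered as the special case $v=s$.

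To compute the optimum, I would observe that arcs run only between consecutive layers, so two chosen vertices conflict only when they lie in consecutive layers. I would run a dynamic program over $V_0,\dots,V_t$ that stores, for each $w\in V_i$, the value $h_i(w)$ equal to the maximum size of a valid selection within $V_0\cup\cdots\cup V_i$ that picks $w$ in $V_i$, together with a ``skip'' value for selecting nothing in $V_i$; the transition is $h_i(w)=1+\max$ over those states of layer $i-1$ that are \emph{not} in-neighbors of $w$ (the skip-state being always admissible), and a traceback recovers an actual maximum multipacking. It is worth noting that the choice of vertex inside a layer genuinely matters — one can design instances where picking the ``wrong'' vertex of $V_i$ blocks all of $V_{i+1}$ while another vertex would not — so a greedy that ignores which vertex is selected can fail, and the per-vertex DP is needed.

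The remaining point, which is a secondary obstacle, is to keep each transition linear. For each $w$ one needs the maximum of the previous layer's values over the states avoiding $N^-(w)$; this complementary maximum can be produced for all $w\in V_i$ in time proportional to $|V_{i-1}|+|V_i|$ plus the number of arcs between the two layers, by first computing the best value(s) of layer $i-1$ and then, for each $w$, discounting only its (at most $e_i$ in total) in-neighbors. Summing over all layers yields total running time $O(|V|+|A|)$, which is linear in the input size and completes the proof.
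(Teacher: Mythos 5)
Your proof is correct, but you reach the linear-time algorithm by a different route than the paper. Both arguments share the same starting point: Lemma~\ref{lem:mp:sld} reduces the problem to selections with at most one vertex per layer, and since all arcs go between consecutive layers, the relevant constraint is exactly independence. The paper, however, merely \emph{asserts} that a maximum multipacking corresponds to a maximum one-per-layer set with no adjacent choices in consecutive layers, whereas you prove the equivalence in full via the ball-counting argument (splitting on whether $v$ is the selected vertex of its own layer, and using that a vertex of $V_{c+1}$ is reachable from $v\in V_c$ only via a direct arc); this is a genuine plus, since the paper never explicitly verifies that the set its algorithm outputs --- a one-per-layer independent set --- is in fact a multipacking, which is precisely the reverse direction you establish. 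From there the two solutions diverge: the paper runs a greedy procedure, labeling each layer bottom-up into \emph{active} and \emph{universal} vertices and then picking active vertices top-down, with optimality certified by the per-stretch counting bound of Claim~\ref{claim:perlayer}; you instead run a per-vertex dynamic program with a skip state, which sidesteps the exchange-style optimality argument entirely (correctness is immediate because conflicts only couple consecutive layers), directly recovers a witness set, and would extend to weighted variants. The one place where your write-up is thinner than it should be is the complement-maximum step: ``computing the best value(s) and discounting in-neighbors'' is not quite enough, since all vertices attaining the top value of layer $i-1$ could be in-neighbors of $w$. A clean fix preserving linearity is to sort layer $i-1$ by $h$-value with counting sort (values are integers at most $n$) and, for each $w$, walk down the sorted list past its marked in-neighbors, which costs $O(\delta^-(w))$ amortized and yields the claimed $O(|V|+|A|)$ total --- so the gap is one of implementation detail, not of substance. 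Your invocation of Lemma~\ref{lem:mp:sources} is harmless but unnecessary, as the DP optimizes over all admissible sets anyway.
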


\begin{proof}
    Let $D = (V,A)$ be a single-sourced layered DAG. By Lemma~\ref{lem:mp:sld}, in every single-sourced layered DAG there is a multipacking of maximum size 
    that is a maximum-size set of vertices with at most one vertex per layer such that two chosen vertices of 
    consecutive layers are not adjacent. %
    We thus give a polynomial-time bottom-up procedure to find such a set of vertices. At each step of 
    the procedure, a layer $V_i$ is partitioned into a set of \emph{active} vertices and a set of 
    \emph{universal} ones, denoted respectively $A_i$ and $U_i$. Our goal will be to select exactly one vertex in each set of active vertices. We initiate the algorithm by 
    setting $A_t = V_t$ and $U_t = \emptyset$. Now, for every $i$ with $0 \leqslant i < t$, we set 
    $U_i = \{u \in V_i:\ A_{i+1} \subseteq N^+(u)\}$ and $A_i = V_i \setminus U_i$. In other words, 
    $U_i$ contains the vertices of layer $V_i$ that are adjacent to all active vertices of $V_{i+1}$. 
    During the procedure, if some layer $V_i$ satisfies $A_i = \emptyset$, we let $A_{i-1} = V_{i-1}$ 
    and repeat this process until $V_0$ is reached.
    
    \medskip
    
    To construct a multipacking of maximum size, we start from $V_0$, and for each $0 \leqslant i \leqslant t$ we pick a vertex $s_i$ 
    in each non-empty set $A_i$ of active vertices. Every time a vertex $s_i$ is picked, we remove 
    its closed neighborhood from $D$. Notice that by construction, every time a vertex $s_i$ is 
    picked, there exists a vertex $s_{i+1} \in A_{i+1}$ such that $s_is_{i+1}$ does not belong to $A$  %\todo{rappeler $D$ et $A$ en début de preuve?}
    (otherwise $s_i$ would belong to $U_i$). 
    %Roughly speaking, we will show that our algorithm produces an optimal solution by showing that for every layer that does not contain a vertex from the solution (the layers with $A_i=\emptyset$), 
    %no optimal solution can have %\todo{il manque un mot?} 
    %more vertices in such a layer.
    To prove the optimality of our algorithm, let $0 \leqslant i < t$ be such that $A_i = \emptyset$, and $j>i$ be the smallest integer greater than $i$ such that $V_j = A_j$. Such a $j$ exists since $A_t=V_t$. 
    
    \begin{claim}
        \label{claim:perlayer}
        Let $S$ be a multipacking with at most one vertex per layer. Then $S$ satisfies:
    \begin{equation}
    \label{eq:mp:polysld}
        \size{S \cap \cup_{k =i}^j  V_k} \leqslant j-i
    \end{equation}
    \end{claim}

    \begin{proofclaim}
        Let $S$ be an optimal multipacking with at most one vertex per layer. Assume by contradiction that $\size{S \cap \bigcup_{k =i}^j  V_k} > j-i+1$, %
        and call $s_k$ the vertex in $V_k\cap S$ for every $i\leqslant k \leqslant j$. We know that $s_i \in U_i$, %
        and since every vertex in $A_{i+1}$ is an out-neighbor of $s_i$, then $s_{i+1} \in U_{i+1}$. %
        By induction, for every $i \leqslant k\leqslant j$, we have $s_k \in U_k$, but $U_j = \emptyset$ by choice of $j$, leading to a contradiction. %
    \end{proofclaim}

        Notice that Claim~\ref{claim:perlayer} gives one less vertex than what Lemma~\ref{lem:mp:sld} implies, 
        and that it is the value reached by our algorithm, since for $i\leqslant k \leqslant j$ the only layer with $U_k = V_k$ is $V_i$.
    Since the sets of active and universal vertices can be 
    constructed by standard graph searching, the whole algorithm takes $O(|V|+|A|)$ time. 
\end{proof}

We now give algorithms for structural parameters. We next give a simple algorithm for digraphs of bounded diameter.

\begin{theorem}
\MP{} can be solved in time $n^{O(d)}$ for digraphs of order $n$ and diameter $d$.
\end{theorem}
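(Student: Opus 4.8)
The plan is to exploit the fact that a finite diameter severely restricts the size of any multipacking. First I would observe that, by the convention fixed in Section~\ref{sec:prelim}, a digraph of finite diameter $d$ is necessarily strongly connected, and moreover every vertex is reachable from every other vertex within directed distance at most~$d$. Consequently, for any vertex $v$ we have $B^+_d(v) = V$. Applying the multipacking constraint at $v$ with radius exactly~$d$ then yields, for any multipacking $S$, that $|S| = |B^+_d(v) \cap S| \le d$. Hence every multipacking has size at most~$d$, and in particular an optimal solution may be sought among vertex subsets of size at most~$d$. Equivalently, one could invoke $\mpn(D) \le \gamma_b(D) \le d$, the latter inequality holding because a single vertex broadcasting at cost~$d$ dominates the whole (strongly connected) digraph.

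Given this bound, the algorithm is a direct brute force: I would enumerate all subsets $S \subseteq V$ with $|S| \le d$, of which there are $\sum_{i=0}^{d}\binom{n}{i} = n^{O(d)}$, and return the largest one that is a valid multipacking. It therefore remains to check, in polynomial time, whether a fixed candidate $S$ is a multipacking. To this end I would precompute all pairwise directed distances (for instance by running a BFS from each vertex), and then, for each vertex $v$ and each radius $1 \le d' \le d$, verify that $|B^+_{d'}(v) \cap S| \le d'$. Radii larger than~$d$ need not be considered, since $B^+_{d'}(v) = V$ then satisfies the constraint automatically once $|S| \le d \le d'$. All of this is clearly polynomial in $n$, so the overall running time is $n^{O(d)}$.

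Since there is no branching or dynamic programming involved, the argument is essentially a counting observation followed by exhaustive search; the only point requiring care is the justification that any multipacking has size at most~$d$, which is exactly what makes a search space of size $n^{O(d)}$ suffice. I do not anticipate a genuine obstacle beyond stating this bound cleanly and confirming that multipacking verification is polynomial (taking care that the count in each ball $B^+_{d'}(v)$ includes $v$ itself at distance~$0$).
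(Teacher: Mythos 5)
Your proposal is correct and follows essentially the same route as the paper: observe that finite diameter $d$ forces $B^+_d(v)=V$ for every $v$, so the multipacking constraint at radius $d$ bounds any solution by $d$, and then brute-force over all subsets of size at most $d$ in time $n^{O(d)}$. You merely spell out the verification details (all-pairs BFS and checking radii up to $d$) that the paper leaves implicit.
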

\begin{proof}
To solve \MP{} by brute-force, we may try all the subsets of size $k$, and for each subset, check its validity. But in a YES-instance, we have $k\leq d$, since any ball of radius $d$ contains all vertices.
\end{proof}

The next algorithm considers jointly two parameters. Recall that by Theorems~\ref{thm:dmp:npc} and~\ref{thm:dmp:w}, such a result probably does not hold for each of them individually.

\begin{theorem}
\MP{} parameterized by solution size $k$ and maximum out-degree $d$ can be solved in FPT time $d^{k^{O(k)}}n^{O(1)}$ for digraphs of order $n$.
\end{theorem}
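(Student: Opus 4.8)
The plan is to first use the long-path lemma (Lemma~\ref{lemm:MPlongpath}) to reduce to the case of bounded maximum finite distance, where all relevant balls are small, and then to search for the bounded-size solution by a color-coding-based dynamic program.

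\emph{Step 1 (bounding $\mfd$).} First I would compute all finite pairwise distances (by a BFS from every vertex, in polynomial time) and the value $\mfd(D)$. If there is a shortest directed path on at least $3k-3$ vertices (in particular whenever $\mfd(D)\geq 3k-4$), then Lemma~\ref{lemm:MPlongpath} immediately yields a multipacking of size $k$ and we answer YES. Hence we may assume $\mfd(D)=:M=O(k)$. Since the out-degree is at most $d$, every forward ball of radius $d'\leq M$ then has size at most $\sum_{j=0}^{M}d^{j}\leq d^{O(k)}=:B$; moreover the multipacking condition only has to be checked for radii $1\leq d'\leq M$, and for each center $v$ it involves only the $\leq B$ vertices of $B^+_M(v)$. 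It is convenient to record the equivalent ``spread'' reformulation: a set $S$ is a multipacking iff for every $v$, listing the finite distances from $v$ to the vertices of $S$ in nondecreasing order, the $j$-th of them is at least $j-1$.

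\emph{Step 2 (color coding and a local dynamic program).} Because we only look for a solution of fixed size $k$, I would apply color coding: using a perfect hash family of size $2^{O(k)}\log n$, color $V$ with $k$ colors so that for at least one coloring some fixed size-$k$ multipacking becomes colorful, and for each coloring look for a colorful solution. To find one I would process the condensation of $D$ into strongly connected components in reverse topological order. By Step~1 each component has at most $B$ vertices and each forward ball meets at most $d^{O(k)}$ components, so only components lying within forward distance $M$ of an already-placed vertex can still participate in an unsatisfied constraint. The dynamic-programming state records the trace of the partial solution (which color is placed, and where) on this bounded-size ``frontier'', together with the partial spread-counts of Step~1 for the still-open centers; this is an object of size bounded by a function of $k$ and $d$, so the number of states is at most $d^{k^{O(k)}}$, and each transition extends the solution into the next component and is checkable in polynomial time via the reformulation of Step~1. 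Here Lemma~\ref{lem:mp:sources} is also useful, as one may assume all sources are taken, which fixes part of the frontier.

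The main obstacle is making Step~2 rigorous: I must argue that a state whose size depends only on $k$ and $d$ genuinely suffices even though $D$ may be cyclic, may have unboundedly many sources and sinks, and a single ball constraint may span up to $M=O(k)$ layers of the condensation. The two points to nail down are (i) that the only constraints still ``alive'' when entering a component are those centered at vertices within forward distance $M$, so the frontier that must be remembered has size $d^{O(k)}$; and (ii) that the per-component bound $B$ keeps each transition a bounded-branching, polynomial-time step. (The single-source layered case, which is exactly where this frontier degenerates to one vertex per layer, is the tractable kernel of the argument and is already handled in Theorem~\ref{lem:mp:polysld}.) Collecting the pieces, the hash family contributes a $2^{O(k)}\log n$ factor, each coloring is handled in $d^{k^{O(k)}}n^{O(1)}$ time by the dynamic program, and the overall running time is $d^{k^{O(k)}}n^{O(1)}$, as claimed.
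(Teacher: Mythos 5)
Your Step~1 is exactly the paper's opening move (invoke Lemma~\ref{lemm:MPlongpath} to assume every shortest directed path has length at most $3k-2$, so every ball $B^+_{d'}(v)$ has size $d^{O(k)}$), but your Step~2 has a genuine gap, one you flag yourself and never close: the claimed $f(k,d)$-bounded DP state does not exist as described. Your point~(i) is the problem. When you sweep the condensation, a constraint centered at a not-yet-processed vertex $v$ is ``alive'' precisely when $v$ has a directed path of length at most $M$ \emph{to} an already-placed solution vertex $s$, i.e., $v$ lies in the backward ball of $s$. Backward balls are not controlled by the out-degree bound $d$ --- in-degrees are unbounded --- so the set of open centers straddling your frontier can have size $\Theta(n)$, not $d^{O(k)}$. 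Nor can you compress them to their partial spread-counts: two open centers with identical partial profiles are not interchangeable, since the future contribution to each one's counts depends on its individual distances to the yet-unplaced solution vertices; in effect the state would have to store each open center's distance interface to the unprocessed region, which is unbounded information. (The color coding is also beside the point: the difficulty is not distinguishing the $k$ solution vertices but the packing constraint imposed at every one of the $n$ centers.)

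The paper closes the argument by an entirely different, and much simpler, win-win step in place of your DP. Say $u$ \emph{absorbs} $v$ if $u$ has a directed path to $v$, and call $S$ absorbing if every vertex is absorbed by some vertex of $S$. In a minimum-size absorbing set no two elements are absorbed by a common vertex (otherwise replace both by the common absorber), which means $\size{B^+_{d'}(v) \cap S} \leqslant 1$ for every $v$ and $d'$ --- so a minimum absorbing set is automatically a multipacking. One computes it in FPT time by reduction to \textsc{Hitting Set}: take $U = V(D)$ and one set $F_v$ per vertex $v$ consisting of the absorbers of $v$; by your own Step~1 each element lies in at most $d^{O(k)}$ sets, and \textsc{Hitting Set} is then solvable in time $d^{k^{O(k)}}n$~\cite{HW12}. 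If the minimum absorbing set has size at least $k$, accept; otherwise $D$ is covered by at most $k-1$ balls of radius at most $3k-2$, so $n \leqslant d^{O(k)}$ and brute force in $n^{O(k)}$ time is FPT. So your reduction to bounded $\mfd$ is exactly where it should be, but the frontier dynamic program should be replaced by this absorbing-set dichotomy (of which your passing remark about accepting on $k$ sources is the special case: a set of $k$ vertices no two of which share an absorber is already a multipacking).
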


\begin{proof}
Let $(D = (V,A), k)$ be an instance of \MP{} such that $D$ has maximum out-degree $d$. 
By Lemma~\ref{lemm:MPlongpath}, if $D$ has a shortest directed path of length $3k-3$, we can accept the input (this can be checked in polynomial time). Thus, we can assume that the length of any shortest path is at most $3k-2$. If a vertex $u$ has a directed path to a vertex $v$, we say that $u$ \emph{absorbs} $v$, and a set $S$ of vertices is \emph{absorbing} if every vertex in $D$ is absorbed by some vertex of $S$. If $D$ has a set of $k$ vertices, no two of which are absorbed by some common vertex (\emph{e.g.} a set of $k$ sources), we can accept, since this set forms a valid solution. %
Note that this property is satisfied by any minimum-size absorbing set $S$: indeed, if some vertex $w$ absorbs two vertices $u,v$ of $S$, we may replace them by $w$ and obtain a smaller absorbing set, a contradiction. %

We claim that we can find a minimum-size absorbing set in FPT time. Indeed, we can reduce this problem to \textsc{Hitting Set} (defined for the proof of Theorem~\ref{thm:BD-nopolykernel}) %
as follows. We let $U=V(D)$, and $\mathcal F$ contains a set $F_v$ for every vertex $v$, where $F_v$ comprises every vertex which absorbs $v$ (including $v$ itself). Because $D$ has out-degree at most $d$ and the length of any shortest path is at most $3k-2$, every vertex of $U$ is contained in at most $d_U=\sum_{i=0}^{3k-2}(d-1)^i+1$ sets of $\mathcal F$. Moreover, a set of vertices of $U=V(D)$ is a hitting set of $(U,\mathcal F)$ if and only if it is an absorbing set of $D$. We can solve \textsc{Hitting Set} in FPT time $d_U^{O(k)}n=d^{k^{O(k)}}n$~\cite{HW12}, which proves the above claim.

As mentioned before, if the obtained minimum-size absorbing set of $D$ has size at least~$k$, since it forms a valid multipacking, we can accept. Otherwise, $D$ can be covered by $k-1$ balls of radius at most $3k-2$. Each such ball has at most $\sum_{i=0}^{3k-2}(d-1)^i+1 = d^{O(k)}$ vertices, so in total $D$ has at most $n=d^{O(k)}$ vertices and a brute-force algorithm in time $n^{O(k)}$ is FPT.
\end{proof}

Next, we consider the vertex cover number, already considered for Theorem~\ref{thm:BD-VC}.

\begin{theorem}\label{thm:MP-VC}
\MP{} parameterized by the vertex cover number $c$ of the input digraph of order $n$ can be solved in FPT time $2^{2^{O(c)}}n^{O(1)}$.
\end{theorem}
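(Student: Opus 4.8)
The plan is to reuse the equivalence-class structure from the proof of Theorem~\ref{thm:BD-VC}, but to show that for \MP{} each class can be collapsed to a single binary choice, which is exactly what yields the bound $2^{2^{O(c)}}$. Let $S$ be a vertex cover of $D$ with $|S|=c$, so that $V\setminus S$ is an independent set and every arc has an endpoint in $S$. As in the proof of Theorem~\ref{thm:BD-VC}, every directed path alternates between $S$ and $V\setminus S$ without repeating vertices, so $\mfd(D)\le 2c+1$; hence a set is a multipacking as soon as its defining inequality holds for all radii $d\in\{1,\ldots,2c+1\}$, since for larger $d$ the out-ball $B^+_d(w)$ no longer grows while the right-hand side increases. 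I would then partition $V\setminus S$ into equivalence classes according to the pair $(N^-(v),N^+(v))$ (both are subsets of $S$, as $v$ has no neighbour in $V\setminus S$), giving at most $t\le 2^{2c}$ classes. The key point is that two vertices of the same class have identical directed-distance profiles to and from every vertex of $D$, each such distance being determined only by the common in- and out-neighbourhoods in $S$; thus they are fully interchangeable for every multipacking constraint.

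First I would dispose of sources. By Lemma~\ref{lem:mp:sources} there is a maximum-size multipacking containing every source, so I may assume all sources lie in the solution $P$. This is harmless, since a source has no in-arc and therefore never belongs to $B^+_d(w)$ for any $w$ other than itself: including every source of a given source-class contributes only the trivial count at its own centre and can neither violate a constraint nor interfere with the remaining choices.

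The crucial structural observation is that \emph{each non-source class contributes at most one vertex to any multipacking}. Indeed, if a class $C_i$ has a nonempty in-neighbourhood, pick any $s\in N^-(C_i)\subseteq S$; then $C_i\subseteq N^+(s)\subseteq B^+_1(s)$, and the multipacking inequality centred at $s$ with $d=1$ gives $|B^+_1(s)\cap P|\le 1$, whence $|C_i\cap P|\le 1$. Combined with interchangeability, it suffices to decide, for each non-source class, whether or not to include a single arbitrary representative. The algorithm therefore enumerates all configurations consisting of: all sources (forced in), an arbitrary subset of the non-source vertices of $S$ (at most $2^c$ choices), and an include/exclude decision with a fixed representative for each of the $t$ non-source classes (at most $2^{t}$ choices). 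For each configuration it builds the candidate $P$ and checks the multipacking property directly, computing $|B^+_d(w)\cap P|$ for every vertex $w$ (via breadth-first search) and every $d\le 2c+1$ and verifying each is at most $d$; this runs in $n^{O(1)}$ time, and by interchangeability the choice of representatives is immaterial. Returning the largest valid $P$ and comparing $|P|$ to $k$ decides the instance.

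For the running time there are at most $2^{t}\cdot 2^{c}=2^{2^{2c}+c}=2^{2^{O(c)}}$ configurations, each checked in $n^{O(1)}$ time, giving the claimed $2^{2^{O(c)}}n^{O(1)}$. The main obstacle, and the genuinely new ingredient beyond the \BD{} argument, is collapsing each non-source class to a single binary choice: one must establish the ``at most one per class'' bound and, just as importantly, verify that this is lossless, i.e.\ that the representatives within a class are interchangeable across \emph{all} multipacking constraints, including those centred at other selected vertices. The source handling via Lemma~\ref{lem:mp:sources} must also be dovetailed with this enumeration so that forcing all sources into $P$ does not silently discard an optimal solution — which is precisely what the lemma guarantees.
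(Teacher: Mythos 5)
Your proposal is correct and follows essentially the same route as the paper's proof: the same partition of $V\setminus S$ into at most $2^{2c}$ equivalence classes by in- and out-neighbourhoods in $S$, forcing all sources into the solution via Lemma~\ref{lem:mp:sources}, the common-in-neighbour argument (the $d=1$ constraint at a shared in-neighbour) showing each non-source class contributes at most one vertex, and brute-force enumeration of $2^{t+c}=2^{2^{O(c)}}$ candidate sets each checked in polynomial time. The only differences are presentational: you spell out the twin-interchangeability of class representatives and the $\mfd(D)\leq 2c+1$ bound, both of which the paper leaves implicit.
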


\begin{proof}
Let $(D=(V,A),k)$ be the input of \MP{} and let $S$ be a vertex cover of $D$ of size $c$. As for Theorem~\ref{thm:BD-VC}, we partition the set $V\setminus S$ (which contains no arcs) into equivalence classes $C_1,\ldots, C_t$ according to their in- and out-neighborhoods in $S$. There are $t\leq 2^{2c}$ such classes.

By Lemma~\ref{lem:mp:sources}, we can assume that all sources belong to an optimal solution. Consider any class $C_i$. Its vertices are either all sources, or none of them are. If they are not sources, they all have a common in-neighbor, and thus at most one vertex of $C_i$ can belong to a multipacking. It is not important which one is selected, since all vertices in $C_i$ are twins. We may thus simply try all possibilities of selecting at most one vertex per class $C_i$, and all possibilities of selecting vertices of $S$. Thus, there are $2^{t+c}=2^{2^{O(c)}}$ potential multipackings of $D$ containing all sources. Each of them can be checked in polynomial time. This is an FPT algorithm.
\end{proof}

\section{Conclusion}\label{sec:conclu}

We have studied \BD{} and \MP{} on various subclasses of digraphs, with a focus on DAGs. It turns out that they behave very differently than for undirected graphs. 
We feel that \MP{} is slightly more challenging.

Indeed, we managed to solve some questions for \BD, that we leave open for \MP. For example, it would be interesting to see whether \MP{} is FPT for DAGs, and whether it remains W[1]-hard for digraphs without directed $2$-cycles. Also, \BD{} is FPT for nowhere dense graphs (as it can be expressed in first-order logic), however, it is not clear whether this holds for \MP{}. It is also unknown whether \MP{} is NP-hard on undirected graphs, as asked in~\cite{TeshimaMasterThesis,YangMasterThesis}.

On the other hand, we showed that \MP{} is NP-complete for single-sourced DAGs, but we do not know whether the same holds for \BD. %

We note that in most of our hardness reductions, the maximum finite distance is very small (which helps us to control the problems at hand), but the actual diameter is infinite (as our digraphs are not strongly connected). It seems a challenging question to derive hardness results for strongly connected digraphs, which can be seen as an intermediate class between the two extremes that are undirected graphs, and DAGs.

We have also shown that both problems are FPT when parameterized by the vertex cover number. 
What about smaller parameters such as tree-width or DAG-width? 

Finally, can our FPT algorithms for both problems parameterized by the solution cost/solution size and maximum out-degree be strengthened to a polynomial kernel?

\bibliographystyle{plain}
\bibliography{references}

\end{document}